\def\limfunc#1{\mathop{\rm #1}}%
\def\QTR#1#2{{\csname#1\endcsname {#2}}}%
\newtheorem{corollary}{Corollary}
\newtheorem{definition}{Definition}
\newtheorem{example}{Example}
\newtheorem{example2}{Example}
\newtheorem{lemma}{Lemma}
\newtheorem{proposition}{Proposition}
\newenvironment{proof}[1][Proof]
{\textbf{#1.} }{\ \rule{0.5em}{0.5em}}
\renewcommand{\cite}{\citeasnoun}
\begin{document}

\title{The Economics of Social Data\thanks{%
Bergemann and Bonatti acknowledge financial support through NSF\ Grant
SES-1948692. Bergemann and Gan acknowledge financial support from the
Omidyar and Knight Foundation. We thank Joseph Abadi, Daron Acemo\u{g}lu,
Susan Athey, Steve Berry, Nima Haghpanah, Nicole Immorlica, Al Klevorick,
Scott Kominers, Annie Liang, Roger McNamee, Jeanine Mikl\'{o}s-Thal, Enrico
Moretti, Stephen Morris, Denis Nekipelov, Asu \"{O}zda\u{g}lar, Fiona
Scott-Morton, Shoshana Vasserman, Glen Weyl, and Kai-Hao Yang for helpful
discussions. We also thank Michelle Fang and Miho Hong for valuable research
assistance and the audiences at numerous seminars and conferences for their
productive comments.}}
\author{Dirk Bergemann\thanks{
Department of Economics, Yale University, New Haven, CT 06511, \texttt{\
dirk.bergemann@yale.edu}.} \and Alessandro Bonatti\thanks{%
MIT\ Sloan School of Management, Cambridge, MA 02142, \texttt{bonatti@mit.edu%
}.} \and Tan Gan\thanks{
Department of Economics, Yale University, New Haven, CT 06511, \texttt{\
tan.gan@yale.edu}.}}
\date{\today }
\maketitle

\begin{abstract}
A data intermediary acquires signals from individual consumers regarding
their preferences. The intermediary resells the information in a product
market wherein firms and consumers tailor their choices to the demand data.
The social dimension of the individual data---whereby a consumer's data are
predictive of others' behavior---generates a data externality that can
reduce the intermediary's cost of acquiring the information. The
intermediary optimally preserves the privacy of consumers' identities if and
only if doing so increases social surplus. This policy enables the
intermediary to capture the total value of the information as the number of
consumers becomes large.\medskip \medskip

\noindent \textsc{Keywords: }social data; personal information; consumer
privacy; privacy paradox; data intermediaries; data externality; data
policy; data rights; collaborative filtering.\medskip

\noindent \textsc{JEL Classification: }D44, D82, D83.\newpage
\end{abstract}

\newpage

\section{Introduction\label{intro}}

\paragraph{Individual and Social Data}

The rise of large digital platforms---such as Facebook, Google, and Amazon
in the US and JD, Tencent and Alibaba in China---has led to the
unprecedented collection and commercial use of individual data. The steadily
increasing user bases of these platforms generate massive amounts of data
about individual consumers, including their preferences, locations, friends,
and political views. In turn, many of the services provided by large
Internet platforms rely critically on these data. The availability of
individual-level data permits refined search results, personalized product
recommendations, informative ratings, timely traffic data, and targeted
advertisements.

The recent disclosures on the use and misuse of social data by digital
platforms have prompted regulators to limit the largely unsupervised use of
individual data by these companies. As a result, nearly all proposed and
enacted regulation to date aims to ensure that consumers retain control over
their data. However, the \emph{digital privacy paradox} (e.g., \cite{atct17}%
) indicates that even small monetary incentives may lead individuals to
relinquish their private data. The low cost of acquiring private
data--seemingly at odds with consumers' stated preferences over their
privacy-- drives the appetite of platforms to gather information and may
undermine the efficacy of regulation.\footnote{%
The recent report by \cite{furm19} identifies \textquotedblleft the central
importance of data as a driver of concentration and a barrier to competition
in digital markets\textquotedblright ---a theme echoed in the reports by 
\cite{crem19} and by the \cite{stig19}.}

This article suggests a unified explanation for the digital privacy paradox
and the selective use of data for price and product choices. A key
observation is that \emph{individual data} are actually \emph{social data}:
data captured from an individual user are informative not only about that
user but also about other users with similar characteristics or behaviors.
The social dimension of the data generates a \emph{data externality}, the
sign and magnitude of which depend on the structure of the data and on the
use of the gained information.

Many digital platforms use social data to increase the value provided by
their services. Google search results, for example, are informed by the
choices of previous users. Indeed, the search engine is only successful
because it mediates information among many consumers. Likewise, Amazon uses
data collected from other consumers to curate a \textquotedblleft
recommended for you\textquotedblright\ list of products and explicitly
suggests goods that are \textquotedblleft frequently bought
together.\textquotedblright\ Finally, YouTube and Waze tailor their
suggestions of videos and traffic directions to each user's preferences, as
estimated by combining network data and individual data. However, social
data also facilitates surplus extraction, e.g., individual shopping data
convey information about the willingness to pay of consumers with similar
purchase histories.

We analyze three critical aspects of the economics of social data. First, we
consider how the collection and transmission of individual data change the
terms of trade among consumers, firms (e.g., advertisers), and data
intermediaries (e.g., large Internet platforms that sell targeted
advertising space). Second, we examine how the social dimension of the data
magnifies the value of individual data for platforms and facilitates the
acquisition of large datasets. Third, we analyze how data intermediaries
with market power manipulate the trade-offs induced by social data through
the aggregation and the precision of the information that they provide about
consumers.

\paragraph{A Model of Data Intermediation}

We develop a framework to evaluate the flow and allocation of individual
data in the presence of data externalities. Our model focuses on three types
of economic agents: consumers, firms, and data intermediaries. These agents
interact in two distinct but linked markets: a \emph{data market} and a 
\emph{product market}.

In the product market, each consumer (she) determines the quantity that she
wishes to purchase, and a single producer (he) sets the unit price at which
he offers a product to the consumers. Initially, each consumer has private
information about her willingness to pay for the firm's product. This
information consists of a signal with two additive components: a \emph{\
fundamental} component and a \emph{noise} component. The fundamental
component represents her willingness to pay, and the noise component
reflects that her initial information might be imperfect. Both components
can be correlated across consumers: in practice, different consumers'
preferences can exhibit common traits, and consumers might undergo similar
experiences that induce correlation in their errors.

In the data market, a monopolist intermediary acquires demand information
from the individual consumers in exchange for a monetary payment. The
intermediary then chooses how much information to share with the other
consumers and how much information to sell to the producer. Sharing data
with consumers allows them to tailor their demand to their true preferences.
Sharing data with the producer enables more tailored and possibly
personalized prices.

We introduce a rich model for the structure of individual data, which allows
for correlation in the fundamentals as well as in the noise terms across the
individuals. We view this richness in the data structure as the defining
element of data in digital platforms. In contrast, we adopt a more specific
model for product market interaction whereby a monopolist seller charges
linear prices for variable quantity. However, our main insights apply to any
product market where (a) data sharing teaches consumers about their
preferences, and (b) a firm seeks to extract the consumers' surplus.%
\footnote{%
In fact, the value of social data in Section \ref{dw} can be computed for
alternative specification of the product market and a general result for
anonymization is obtained in Section \ref{opdi}. Finally in Section \ref%
{limits}, we consider a richer environment where a merchant offers different
varieties to consumers with heterogeneous tastes for his products. In that
case, the firm's actions both generate value and attempt to capture it.}

\paragraph{The Value of Social Data\label{vsd}}

Collecting data from multiple consumers helps any market participant to
filter the individual signals. This process can occur through two channels.
First, in a market where the noise terms are largely idiosyncratic, a large
sample size filters out errors and identifies any common fundamentals.
Second, in a market with largely idiosyncratic fundamentals, many
observations filter out demand shocks and identify common noise terms,
thereby estimating individual fundamentals by differencing (Proposition \ref%
{impact}).

However, the choice by each consumer to share her data with the intermediary
is guided only by her private benefits and costs, not by the information
gains she generates with her actions. Thus, the intermediary must compensate
each individual consumer only to the extent that the disclosed information
affects her own welfare \emph{on the margin}. Critically, the platform does
not have to compensate the individual consumer for any changes in her
welfare caused by the information deduced from other consumers' signals.

Therefore, social data drive a wedge between the socially efficient and
profitable uses of information. First, the cost of acquiring individual data
can be substantially less than the value of the information to the platform.
Second, although many uses of consumer information exhibit positive
externalities, very little prevents the platform from trading data for
profitable uses that are in fact harmful to consumers (Proposition \ref{prof}%
).\footnote{%
Recent empirical work on the effects of privacy regulation such as the
European Union's General Data Protection Regulation (e.g., \cite{archsa20}
and \cite{joshgo20}), indicates that data externalities are relevant for
consumers' and businesses' decisions to share their data. In the United
States, legislators are also increasingly aware of the consequences of data
externalities. In particular, the US\ House \cite{house20} reports that
\textquotedblleft \lbrack ...] the social data gathered through [a
platform's] services may exceed their economic value to
consumers.\textquotedblright}

More generally, the data externality can induce too much or too little trade
in data. Indeed, {the condition for data intermediation to yield positive
profits is qualitatively different from the condition for intermediation to
yield social welfare gains. In particular, the intermediary obtains positive
profits when a large number of consumers exhibit a strong degree of
correlation in their preferences: this allows the intermediary to acquire
the consumers' data in exchange for minimal compensation. On the other hand,
welfare improvements depend on how much additional information each consumer
can obtain about her own preferences from the other consumers' signals.
Therefore, when many consumers have strongly correlated preferences and very
precise signals, data sharing is detrimental to welfare but profitable to
the intermediary. Conversely, there are data structures (e.g., ones with
independent fundamentals and strongly correlated error terms) for which data
sharing is beneficial to consumers but unprofitable for the data
intermediary.}

\paragraph{Equilibrium Data-Sharing Policies}

A natural question is then whether the data market imposes any limitations
at all on equilibrium information sharing. To shed light on this issue, we
consider the choice of whether to reveal the consumers' identities to the
producer or to collect anonymous data. When consumers are homogeneous ex
ante, we show that the intermediary collects anonymous data if and only if
the transmission of identity data reduces total surplus. Therefore, even if
the data \emph{transmission }may be socially detrimental, the optimal choice
of data \emph{anonymization }is socially efficient (Proposition \ref{si}).

In our model of linear price discrimination, collecting anonymous data
amounts to selling aggregate, market-level information to the producer. With
this choice, the intermediary does not enable the producer to set
personalized prices: the data are transmitted but disconnected from the
users' personal profiles. In other words, the role of social data provides a
more nuanced ability to determine the modality of information acquisition
and use.

Under anonymized data intermediation, the gap between the social value of
the data and the price of the data widens when the number of consumers
increases. In particular, as the sources of data multiply, the contribution
of each individual consumer to the aggregate information shrinks, which
drives down the individual payments to consumers, and possibly the total
payment as well (Proposition \ref{lm}).

We develop a general anonymization result (Proposition \ref{ga})\ and extend
the model in several directions. We first introduce consumer heterogeneity
by considering multiple groups of consumers. We find that the intermediary
aggregates the data at least to the level of the coarsest partition of
homogeneous consumers, although further aggregation is profitable when the
number of consumers is small. The resulting group pricing---discriminatory
pricing based on observable characteristics, such as location---has welfare
consequences between those of complete privacy and those of price
personalization (Proposition \ref{gs}).

We then consider a model in which the producer can choose prices and product
characteristics to match an additional horizontal (taste) dimension of the
consumers' preferences. The resulting data policy then aggregates the
vertical dimension but not the horizontal dimension, thereby enabling the
producer to offer personalized product recommendations but not personalized
prices (Proposition \ref{vag3}). Finally, in the Online Appendix, we explore
the data intermediary's ability to offer privacy guarantees by collecting
less than perfect information about the consumers' signals.

\paragraph{Implications and Applications}

There exist a wide variety of data intermediaries and associated business
models. The point of our article is not to try to match any specific
business model but rather to speak to the general principles that seem to be
in effect in many markets. In particular, our model assumes that each
consumer is compensated directly with a monetary transfer for her individual
data. There exist a few concrete examples of such transactions (e.g.,
Nielsen offers monetary rewards to consumers for access to their browsing
and purchasing data). However, most data intermediaries compensate their
users via the quality of the services they offer, e.g., social networks,
search, mail, video. These services are nominally free, but in practice
(through more or less transparent terms and conditions) they are fueled by
the data generated by their users.

Likewise, digital platforms occasionally transfer consumers' data to
merchants for a fee. Much more often, however, they monetize their data by
selling access to targeted advertising campaigns---see the report by the 
\cite{cma20}. This enables the merchants to reap the value of information,
by conditioning their messages and their prices on the consumers'
preferences, without directly observing their data. These transactions
amount to indirect sales of information, as discussed in \cite{bebo19}.

Our model's predictions for the nature of the externality, for the
equilibrium allocation of data, and for its welfare properties then depend
on how targeted advertising affects total surplus in any given market.
Specifically, if advertising generates value by matching of buyers and
sellers, as in \cite{bebo11}, our model would predict the intermediation of
individual-level information through very detailed targeting categories. If,
instead, targeted advertisements facilitate inefficient price
discrimination, the model would predict coarser targeting categories that
induce market-level or group-level pricing.

\paragraph{Related Literature}

This article contributes to the growing literature on data markets recently
surveyed in \cite{bebo19}. In particular, the role of data externalities in
the socially excessive diffusion of personal data has been a central concern
in \cite{chjk19} and \cite{ammo21}, both considering a model with many
buyers and one firm that acts as an integrated data intermediary and seller.

\cite{chjk19} introduce information externalities into a model of monopoly
pricing with unit demand. Each consumer is described by two independent
random variables: her willingness to pay for the monopolist's service and
her sensitivity to a loss of privacy. The purchase of the service by the
consumer requires the transmission of personal data. From the collected
data, the seller gains additional revenue, depending on the proportion of
units sold and the volume of data collected. The total nuisance cost paid by
each consumer depends on the total number of consumers sharing their
personal data. Thus, the optimal pricing policy of the monopolist yields
excessive loss of privacy, relative to the social welfare maximizing policy.

\cite{ammo21} also analyze data acquisition in the presence of information
externalities. As in \cite{chjk19}, they consider a model with many
consumers and a single data-acquiring firm. Like the current analysis, \cite%
{ammo21} propose an explicit statistical model for their data; the model
allows the authors to assess the loss of privacy for the consumer and the
gains in prediction accuracy for the firm. Their analysis then pursues a
different, and largely complementary, direction from ours. In particular,
they analyze how consumers with heterogeneous privacy concerns trade
information with a data platform and derive conditions under which the
equilibrium allocation of information is (in)efficient.

In contrast to these two important contributions, we explicitly introduce a
data intermediary with objectives distinct from either the consumers or the
seller. We consider rich data structures for fundamentals and noise terms
that capture the wide range of social data on digital platforms. This allows
us to endogenize privacy concerns and to quantify the downstream welfare
impact of data intermediation. In addition, we can investigate when and how
privacy can be partially or fully preserved through aggregation,
anonymization, and noise. Thus, we augment the analysis in the above
contributions with additional insights regarding data flows and data
intermediation.\ In particular, we show that the data externality can have a
positive or a negative impact on consumer and social surplus depending on
the data structure. In consequence, a monopolist intermediary can induce
either too little or too much information sharing in equilibrium.

\cite{actw16} survey the literature on the economics of privacy in great
detail. An early and influential article is \cite{tayl04}, who analyzes the
sales of consumer purchase histories without data externalities. More
recently, \cite{clpr16} investigate how privacy policies affect user and
advertiser behavior in a simple model of targeted advertising. The low level
of compensation that users command for their personal data is discussed in 
\cite{agjl18}, who propose sources of countervailing market power, and in a
growing body of empirical work. In particular, \cite{tang19} uses
large-scale field experiments to estimate the value that online borrowers
assign to several pieces of personal data. \cite{lin19} separates intrinsic
and instrumental preferences for privacy through a lab experiment that also
uncovers data externalities.

\cite{fagm20} provide a digital privacy model in which data collection
improves the service provided to consumers. However, as the collected data
can also leak to third parties and thus impose privacy costs, an optimal
digital privacy policy must be established. Similarly, \cite{julr20} analyze
the equilibrium privacy policy of websites that monetize information
collected from users by charging third parties for targeted access. \cite%
{grad17} considers a network game in which the level of beneficial
information sharing among the players is limited by the possibility of
leakage and a decrease in informational interdependence. \cite{allv19} study
a model of personalized pricing with disclosure by an informed consumer, and
they analyze how different disclosure policies affect consumer surplus. \cite%
{ichi20aer} studies both personalized pricing and product recommendations,
and shows that a seller benefits from committing not to use the consumer's
information to set prices. Our result on optimal anonymization and
market-level pricing has similar implications, but is entirely driven by the
data externality that appears when multiple consumers are present.

Finally, \cite{lima20} investigate how data policies can provide incentives
in principal-agent relationships. They emphasize the structure of individual
data and how the substitutes or complements nature of individual signals
determines the impact of data on incentives. \cite{ichi20} considers a
single data intermediary and asks how complements or substitutes consumer
signals affect the equilibrium price of the individual data.

\section{Model\label{mod}}

We consider an idealized trading environment with many consumers, a single
intermediary in the data market, and a single producer in the product market.

\subsection{Product Market\label{pp}}

There are finitely many consumers, labeled $i=1,...,N$. In the product
market, each consumer (she) chooses a quantity level $q_{i}$ to maximize her
net utility given a unit price $p_{i}$ offered by the producer (he):%
\begin{equation*}
u_{i}\left( w_{i},q_{i},p_{i}\right) \triangleq w_{i}q_{i}-p_{i}q_{i}-\frac{1%
}{2}q_{i}^{2}.
\end{equation*}%
Each consumer $i$ has a baseline willingness to pay for the product $%
w_{i}\in \mathbb{R}$.

The producer sets the unit price $p_{i}$ at which he offers his product to
each consumer $i$. The producer has a linear production cost 
\begin{equation*}
c\left( q\right) \triangleq c\cdot q,\ \text{for some}\ \ c\geq 0.
\end{equation*}%
The producer's profits are given by%
\begin{equation*}
\pi \left( p_{i},q_{i}\right) \triangleq \sum_{i}\left( p_{i}-c\right) q_{i}.
\end{equation*}

\subsection{Data Environment\label{da}}

The vector of willingness to pay, $w=\left( ...,w_{i},...\right) \in \mathbb{%
R}^{N}$, is distributed according to a joint distribution $F_{w}$:%
\begin{equation}
w\sim F_{w}.  \label{dv}
\end{equation}%
Initially, each consumer may have only imperfect information about her
willingness to pay. In particular, consumer $i$ observes a signal%
\begin{equation}
s_{i}\triangleq w_{i}+\sigma \cdot e_{i}\text{,}  \label{no}
\end{equation}%
where $\sigma >0$ and $e_{i}$ is consumer $i$'s error term. The error terms $%
e=\left( ...,e_{i},...\right) \in \mathbb{R}^{N}$ are independent of the
willingness to pay $w$, and they are distributed according to a joint
distribution $F_{e}$: 
\begin{equation}
e\sim F_{e}.  \label{de}
\end{equation}%
We denote by $S$ the information structure generated by the complete vector
of consumer signals $s=\left( ...,s_{i},...\right) \in \mathbb{R}^{N}$. We
allow for arbitrary distributions of fundamentals $w$ and errors $e$, and
hence arbitrary correlation structures across consumers, under the
restriction that the $\left( F_{w},F_{e}\right) $ are symmetric across
individuals. We view the richness in the data structure as represented by (%
\ref{dv}) and (\ref{de}) as the defining feature of social data in the
digital economy. In particular, the noise in the signal $s_{i}$\ of each
individual given by (\ref{no}) reflects the importance of social learning as
enabled by recommenders, rating and search engines.

Without loss of generality we assume that (i) each individual willingness to
pay $w_{i}$ has mean $\mu $ and variance $1$; (ii) individual errors $e_{i}$
have mean $0$ and variance $1$ (which is scaled by the parameter $\sigma $).

The producer knows the structure of demand and thus the common prior
distribution of consumers' willingness to pay. However, absent any
additional information, the producer does not know the realized willingness
to pay $w_{i}$ of any consumer (nor her signal $s_{i}$) prior to setting
prices. This data environment has two important features. First, any demand
information beyond the common prior comes from the signals of the individual
consumers. Second, with any amount of noise in the signals (i.e., if $\sigma
>0$), each consumer can learn more about her own demand from the signals of
the other consumers.

The following leading examples illustrate two ways in which data sharing can
help each consumer learn more about her individual willingness to pay. In,
Example \ref{e1} a new product has a common value that consumers are
imperfectly informed about.

\begin{example}[Common Preferences]
\label{e1}\quad \newline
Fundamentals $w_{i}$ are perfectly correlated and errors $e_{i}$ are
independent: $s_{i}=w+\sigma \cdot e_{i}$.
\end{example}

In this case, data sharing helps to filter out the idiosyncratic error
terms: as $N$ becomes large, the average signal across all consumers
identifies the common willingness to pay.

In Example \ref{e2}, individual consumers have independent values for a new
therapy but are exposed to a common health shock.

\begin{example}[Common Experience]
\label{e2}\quad \newline
Errors $e_{i}$ are perfectly correlated, and fundamentals $w_{i}$ are
independent: $s_{i}=w_{i}+\sigma \cdot e.$
\end{example}

Under this structure, the average signal identifies the common error
component $e$ as $N$ becomes large. All market participants can then
precisely estimate each $w_{i}$ from the difference between individual and
average signal.

As we shall see, information sharing enables learning in both examples.
However, the actions of consumers\ $-i\ $impact the surplus of consumer $i$
quite differently in the two cases, which has implications for the
equilibrium price of data. More generally, the data structure will determine
how to separate the individual and the aggregate information.

\subsection{Data Market\label{dm}}

The data market is run by a single data intermediary (it). As a monopolist
market maker, the data intermediary decides how to collect the available
information $\left( s_{i}\right) $ from each consumer and how to share it
with the other consumers and the producer. Thus, the data intermediary faces
both an information design problem and an information pricing problem.

We consider bilateral contracts between the individual consumers and the
intermediary and between the producer and the intermediary. The data
intermediary offers these bilateral contracts \emph{ex ante}, i.e., before
the realization of any demand shocks. Each bilateral contract defines a 
\emph{data policy }and\emph{\ }a \emph{data price}.

The data contract with consumer $i$ specifies a \emph{data inflow }policy $%
X_{i}$ and a fee $m_{i}\in \mathbb{R}$ paid to the consumer. The data inflow
policy describes how each signal $s_{i}\ $enters the database of the
intermediary. We restrict attention to the following two policies: (i) the 
\emph{complete (identity-revealing) }data policy $X=S$, where the
intermediary collects each consumer's signal $s_{i}$; and (ii) the \emph{%
anonymized }data policy $X=A$,\ where the intermediary collects individual
signals without identifying information. We model the anonymized data policy
as 
\begin{equation*}
A:S\rightarrow \delta \left( S\right) \text{,}
\end{equation*}%
for a random permutation of the consumers' indices $i\rightarrow \delta
\left( i\right) $. In both cases, as discussed in Section \ref{dis} below,
there are no further incentive constraints, i.e., consumers transmit their
information truthfully.

In our product market model, where the consumer's demand is linear in her
signal, the anonymized data policy $A$ is equivalent to an \emph{aggregate }%
data policy that conveys information about the average willingness to pay.
Intuitively, the anonymized data policy prevents the producer from matching
signals to consumers, i.e., from profitably charging personalized prices. In
Section \ref{idd}, we enrich the intermediary's strategy space by allowing
for data policies that collect partial information about the consumers'
signals.

A data contract with the producer specifies a \emph{data outflow }policy $Y$
and a fee $m_{0}\in \mathbb{R}$ paid by the producer. The data outflow
policy determines how each consumer's collected signal is transmitted to the
producer and to other consumers. In particular, letting $X$ denote the
intermediary's \emph{realized }data inflow, a data outflow policy $Y=\left(
Y_{0},Y_{1},\dots Y_{N}\right) $ describes how the collected data are
released to the seller,%
\begin{equation*}
Y_{0}:X\rightarrow \Delta (\mathbb{R}^{N}),
\end{equation*}%
and to each consumer, 
\begin{equation*}
Y_{i}:X\rightarrow \Delta (\mathbb{R}^{N}).
\end{equation*}%
Sharing data with other consumers is a critical design element because doing
so allows each consumer to adjust her quantity demanded at any price.
Therefore, the information received by consumers also impacts the \emph{\
producer's }willingness to pay for the intermediary's data.

The data intermediary maximizes the net revenue%
\begin{equation}
R\triangleq m_{0}-\sum\nolimits_{i=1}^{N}m_{i}.  \label{obj}
\end{equation}

\subsection{Equilibrium and Timing\label{et}}

The game proceeds sequentially. First, the terms of trade on the data market
and then the terms of trade on the product market are established. The
timing of the game is as follows:

\begin{enumerate}
\item The data intermediary offers a data inflow policy $\left(
m_{i},X_{i}\right) $\ to each consumer $i$. Consumers simultaneously accept
or reject the intermediary's offer.

\item The data intermediary offers a data outflow policy $\left(
m_{0},Y\right) $\ to the producer, based on the (known) number of consumers
who have accepted. The producer accepts or rejects the offer.

\item Consumers' signals $s$ are realized,\ and the information flows $%
\left( x,y\right) $\ are transmitted according to the terms of the data
policies.

\item The producer sets a unit price $p_{i}$\ for each consumer $i$ who
makes a purchase decision $q_{i}$, given her available information about $%
w_{i}$.
\end{enumerate}

We analyze the Perfect Bayesian Equilibria of the game. Under the timing
described above, information is imperfect but symmetric at the contracting
stage. Furthermore, when the consumer receives the intermediary's offer, she
must anticipate the intermediary's choice of data outflow policy, which
determines what data are shared with her, as well as with the producer. We
denote by $a_{0},a_{i}\in \left\{ 0,1\right\} $ the participation decisions
by the producer and by consumer $i$, respectively. A Perfect Bayesian
Equilibrium is then a tuple of inflow and outflow data policies, data and
product pricing policies, and participation decisions:%
\begin{equation*}
\left\{ \left( X^{\ast },Y^{\ast },m^{\ast }\right) ;~p^{\ast },q^{\ast
};~a^{\ast }\right\} ,
\end{equation*}%
where 
\begin{equation*}
a_{0}^{\ast }:X\times Y\times \mathbb{R\rightarrow }\left\{ 0,1\right\}
,~a_{i}^{\ast }:X_{i}\times \mathbb{R\rightarrow }\left\{ 0,1\right\} ,
\end{equation*}%
such that (i) the producer maximizes his expected profits, (ii) the
intermediary maximizes its expected revenue, and (iii) each consumer
maximizes her net utility. In our baseline analysis, we focus on the best
equilibrium for the data intermediary; in the best equilibrium, every
consumer accepts the offer from the data intermediary. We discuss a unique
implementation in Section \ref{dac}.

Figure \ref{diagram1} summarizes the information and value flow in the data
and product markets.

\begin{figure}[htbp]
	\label{diagram1}%
	\centering
	\caption{Data and Value Flows}%
	\includegraphics[width=0.6\textwidth]{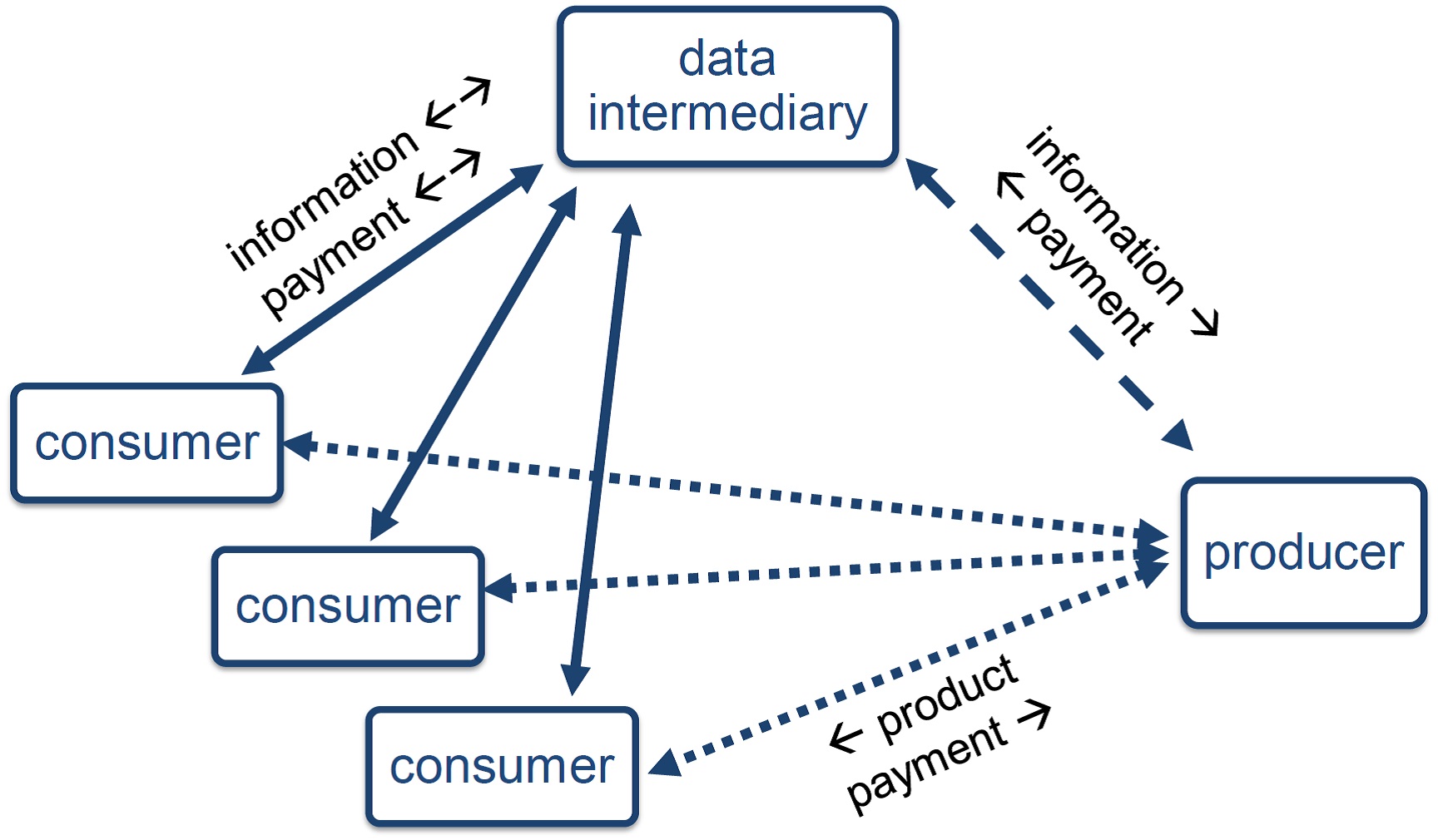}
\end{figure}


\subsection*{Discussion of Model Features\label{dis}}

\noindent \textbf{Participation Constraints\qquad }The participation
constraints of every consumer and of the producer are required to hold at
the \emph{ex ante} level. Thus, the consumers agree to the data policy
before the realization of their signals. The choice of \emph{ex ante}
participation constraints captures the prevailing conditions under which
users interact with large digital platforms. For instance, users of services
on Amazon, Facebook, or Google typically establish an account and accept the
\textquotedblleft terms of service\textquotedblright\ before making any
specific query or post. Through the lens of our model, the consumer requires
a level of compensation that allows her to profitably share the information
in expectation. Upon agreeing to participate, there are no further incentive
compatibility constraints on the transmission of her information.\smallskip

\noindent \textbf{Lack of Commitment\qquad }As we mentioned above, targeted
advertising is the primary source of revenue for digital platforms.
Consequently, the data intermediary in our sequential game sells the
consumers' data only to the producer, and cannot commit to withholding any
information from him. Similarly, the intermediary's choice of data outflow
policy occurs after consumers are enlisted but before their data are
realized. This assumption captures the limited ability of a platform to
write advertising contracts contingent on, say, the volume of activity
taking place at any point in time. \smallskip

\noindent \textbf{Linear Price Discrimination\qquad }The producer in our
baseline model uses the consumers' data to charge a unit price to each
consumer. Whether richer pricing instruments enable online merchants to
implement more sophisticated forms of price discrimination is largely an
empirical question. However, as we discuss in Section \ref{gasec}, the
general anonymization result of Proposition \ref{ga} guides the
intermediary's optimal data policy even if the producer can extract more of
the surplus generated through better information. Indeed, even in the case
of perfect price discrimination, the presence of the data externality would
continue to drive a gap between equilibrium and socially efficient
allocation.

\section{Value of Social Data\label{dw}}

\subsection{Data Sharing and Product Market\label{dspm}}

In our extensive form game, given the realized data inflow, the intermediary
offers a data outflow policy to the producer. This policy specifies both the
fee $m_{0}$ and the flow of information to all market participants,
including the consumers. The data outflow policy thus determines both how
well informed the producer is and how well informed his customers are.

Because the information is being sold to the producer, the intermediary will
choose a data outflow policy that maximizes the producer's profits, and then
extracts the value of this information through the fee $m_{0}$. The
intermediary's choice of outflow policy is simplified considerably by the
following result, which allows us to restrict attention to policies in which
every consumer is weakly better informed than the producer.

\begin{lemma}[Data Outflow Policy]
\label{tan}\quad \newline
For any data inflow $X$, it is without loss of generality to consider data
outflow policies where $Y_{0}(X)$ is measurable with respect to each $%
Y_{i}(X)$.
\end{lemma}

Interestingly, the result of Lemma \ref{tan} does not rely on the specific
nature of the product market interaction. The key observation is that, if
the producer were better informed, the prices charged would convey
information to consumers about their own willingness to pay. The ensuing
signaling incentives impose a cost on the producer because he will need to
deviate from the prices that maximize his profits, holding fixed the
consumers' beliefs. The intermediary can then increase the producer's
profits (weakly) by revealing any information contained in the equilibrium
prices directly to the consumers. Furthermore, this improvement is strict
if, when the consumers receive this additional information, the producer
modifies his choice of price.

Under the data outflow policies of Lemma \ref{tan}, we can easily quantify
the value of information for consumers and producers. The shared data help
each consumer estimate her own willingness to pay. For the producer, the
shared data enable a more informed pricing policy. In particular, given the
realized data outflow $y\in Y$, the optimal pricing policy for the producer
consists of a vector of (potentially personalized) prices $p_{i}^{\ast }$,
thus resulting in a vector of individual quantities $q_{i}^{\ast }$\
purchased. We denote the predicted value of consumer $i$'s willingness to
pay, given the signals $\left( s_{i},y_{i}\right) $ by:%
\begin{equation*}
\widehat{w}_{i}\left( s_{i},y_{i}\right) \triangleq \mathbb{E}\left[
w_{i}\mid s_{i},y_{i}\right] .
\end{equation*}%
The realized demand of consumer $i$ is given by%
\begin{equation*}
q_{i}\left( s_{i},y_{i},p\right) =\widehat{w}_{i}\left( s_{i},y_{i}\right)
-p.
\end{equation*}%
As $Y_{0}$ is (weakly) less informative than $Y_{i}$, the producer chooses
the optimal price%
\begin{equation*}
p_{i}^{\ast }\left( y_{0}\right) =\frac{\mathbb{E}[\widehat{w}_{i}\left(
s_{i},y_{i}\right) |Y_{0}]+c}{2}=\frac{\widehat{w}_{i}\left( y_{0}\right) +c%
}{2}\text{,}
\end{equation*}%
which results in the equilibrium quantity: 
\begin{equation*}
q_{i}^{\ast }\left( s_{i},y_{i},y_{0}\right) \triangleq q_{i}\left(
s_{i},y_{i},p_{i}^{\ast }\left( y_{0}\right) \right) .
\end{equation*}

The ex ante expected profit of the producer from interacting with consumer $%
i $ is given by%
\begin{equation*}
\Pi _{i}\left( \left( S_{i},Y_{i}\right) ,Y_{0}\right) \triangleq \mathbb{E}%
\left[ \pi \left( p_{i}^{\ast }\left( y_{0}\right) ,q_{i}^{\ast }\left(
s_{i},y_{i},y_{0}\right) \right) \left\vert Y_{0}\right. \right] =\frac{1}{4}%
\mathbb{E}\left[ (\widehat{w}_{i}\left( y_{0}\right) -c)^{2}\left\vert
Y_{0}\right. \right] .
\end{equation*}%
The first argument in $\Pi _{i}\left( \cdot ,\cdot \right) $ refers to
consumer $i$'s information structure $\left( S_{i},Y_{i}\right) $ and the
second argument refers to the producer's information structure $Y_{0}$.
Similarly, we denote the indirect utility of consumer $i$ as%
\begin{equation*}
U_{i}\left( \left( S_{i},Y_{i}\right) ,Y_{0}\right) \triangleq \mathbb{E}%
\left[ u_{i}\left( w_{i},q_{i}^{\ast }\left( s_{i},y_{i},y_{0}\right)
,p_{i}^{\ast }\left( y_{0}\right) \right) \left\vert S_{i},Y_{i}\right. %
\right] =\frac{1}{2}\mathbb{E}\left[ (\widehat{w}_{i}\left(
s_{i},y_{i}\right) -p_{i}^{\ast }\left( y_{0}\right) )^{2}\left\vert
S_{i},Y_{i}\right. \right] .
\end{equation*}

\subsection{Welfare Effects of Data Sharing\label{wel}}

The model with quadratic payoffs (regardless of the prior distribution of
consumers' willingness to pay) yields explicit expressions for the value of
information for product market participants. In particular, because prices
and quantities are linear functions of the posterior mean $\widehat{w}_{i}$,
the ex ante average prices and quantities $\mathbb{E}\left[ p^{\ast }\right] 
$ and $\mathbb{E}\left[ q^{\ast }\right] $ are constant across all
information structures. Consequently, all surplus levels depend only on the
ex ante variance of the posterior mean $\widehat{w}_{i}$ under the
consumers' and the merchant's information structures.

We therefore quantity the players' information gains under any information
structure $Y$ through the variance of posterior expectation:%
\begin{equation}
G(Y)\triangleq \limfunc{var}[\widehat{w}_{i}(y)],  \label{gf}
\end{equation}%
and refer to it as the (informational) \emph{gain function}. As we
normalized the variance of the fundamental $w_{i}$ to $1$, the gain function 
$G(Y)$ represents the fraction of the variance of $w_{i}$ explained by the
signal $y$.

We now turn to the consequences of data sharing relative to no information
sharing. Without information, the producer charges a constant price for all
consumers based on the prior mean, denoted by $\bar{p}$. In contrast, the
consumer already has an initial signal $s_{i}$, according to which she can
adjust her quantity. The producer's net revenue and the consumer's expected
utility under no information sharing are then given by%
\begin{eqnarray*}
\Pi _{i}\left( S_{i},\varnothing \right) &\triangleq &\mathbb{E}\left[ \pi
\left( \bar{p},q_{i}^{\ast }\left( s_{i}\right) \right) \right] , \\
U_{i}\left( S_{i},\varnothing \right) &\triangleq &\mathbb{E}\left[
u_{i}\left( w_{i},q_{i}^{\ast }\left( s_{i}\right) ,\bar{p}\right)
\left\vert S_{i}\right. \right] .
\end{eqnarray*}

We can now express the value of data sharing for the consumers and the
producer in terms of the respective information gains.

\begin{proposition}[Value of Data Outflow]
\label{impact}\strut

\begin{enumerate}
\item The value of data outflow $Y$ for the producer is 
\begin{equation}
\Pi _{i}\left( \left( S_{i},Y_{i}\right) ,Y_{0}\right) -\Pi _{i}\left( S_{i},%
\mathcal{\varnothing }\right) =\frac{1}{4}G\left( Y_{0}\right) .
\label{deltaPi}
\end{equation}

\item The value of data outflow $Y$ for consumer $i$\ is 
\begin{equation}
U_{i}\left( \left( S_{i},Y_{i}\right) ,Y_{0}\right) -U_{i}\left(
S_{i},\varnothing \right) =\frac{1}{2}\left( G\left( S_{i},Y_{i}\right)
-G\left( S_{i}\right) \right) -\frac{3}{8}G\left( Y_{0}\right) .
\label{deltaU}
\end{equation}

\item The social value of data outflow $Y$ is 
\begin{equation}
W_{i}\left( \left( S_{i},Y_{i}\right) ,Y_{0}\right) -W_{i}\left(
S_{i},\varnothing \right) =\frac{1}{2}\left( G\left( S_{i},Y_{i}\right)
-G\left( S_{i}\right) \right) -\frac{1}{8}G\left( Y_{0}\right) .
\label{deltaW}
\end{equation}
\end{enumerate}
\end{proposition}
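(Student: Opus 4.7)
The plan is to obtain the three identities by direct computation from the optimal price and quantity formulas given in (\ref{p})--(\ref{q}), together with the analogous no-sharing benchmark expressions, and then to rewrite every term using only the means, the constant $c$, and the variances $G(S)$, $G(S_i)$. The key observation is that the quadratic payoffs and the linearity of $p_i^{\ast}$ and $q_i^{\ast}$ in $\widehat{w}_i$ imply that every ex ante expectation reduces to either a mean (which is invariant across information structures by the tower property) or a variance of a conditional expectation.

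First, I would compute $\Pi_i(S,S)$ and $U_i(S,S)$ from the formulas already derived in the excerpt: $\Pi_i(S,S)=\tfrac{1}{4}\mathbb{E}[(\widehat w_i(s)-c)^2]$ and $U_i(S,S)=\tfrac{1}{8}\mathbb{E}[(\widehat w_i(s)-c)^2]$, then apply $\mathbb{E}[\widehat w_i(s)]=\mu$ and $\mathrm{var}[\widehat w_i(s)]=G(S)$ to get $\mathbb{E}[(\widehat w_i(s)-c)^2]=(\mu-c)^2+G(S)$. Next I would set up the no-sharing benchmark: the producer, lacking any signal, posts the prior-optimal price $\bar p=(\mu+c)/2$ uniformly, while consumer $i$ still conditions her demand on her own signal via $q_i^{\ast}(s_i)=\widehat w_i(s_i)-\bar p$. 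Substituting into $\pi$ and $u_i$ and using the tower property to replace $w_i$ by $\widehat w_i(s_i)$ conditional on $s_i$, I would obtain
\begin{equation*}
\Pi_i(S_i,\varnothing)=(\bar p-c)(\mu-\bar p)=\tfrac{1}{4}(\mu-c)^2,
\qquad
U_i(S_i,\varnothing)=\tfrac{1}{2}\mathbb{E}[(\widehat w_i(s_i)-\bar p)^2]=\tfrac{1}{8}(\mu-c)^2+\tfrac{1}{2}G(S_i).
\end{equation*}

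The three differences then follow by subtraction. Part 1 is immediate: the $(\mu-c)^2$ terms cancel and only $\tfrac{1}{4}G(S)$ survives. Part 2 gives $\tfrac{1}{8}G(S)-\tfrac{1}{2}G(S_i)$, which I would rewrite as $\tfrac{1}{2}(G(S)-G(S_i))-\tfrac{3}{8}G(S)$ to match the stated decomposition, interpreting the first term as the pure information gain and the second as the surplus loss from improved price discrimination. Part 3 follows from $W_i=U_i+\Pi_i$ by adding (\ref{deltaPi}) and (\ref{deltaU}).

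The only non-routine step is the treatment of the no-sharing benchmark, where one must be careful that the consumer's signal $s_i$ is used even though the producer's is not; an orthogonality calculation (or equivalently, iterated expectations conditional on $s_i$) eliminates the cross terms between $w_i-\widehat w_i(s_i)$ and functions of $s_i$. Once that is handled, the remaining work is bookkeeping, and the cosmetic rewriting of $\tfrac{1}{8}G(S)-\tfrac{1}{2}G(S_i)$ and $\tfrac{3}{8}G(S)-\tfrac{1}{2}G(S_i)$ into the form $\tfrac{1}{2}(G(S)-G(S_i))-\kappa G(S)$ is chosen to separate the learning channel from the rent-extraction channel that drive the subsequent welfare discussion.
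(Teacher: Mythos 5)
Your proposal is correct and follows essentially the same route as the paper: direct computation of prices, quantities, and payoffs under the linear-quadratic structure, followed by reduction of every ex ante expectation to a mean plus a variance of a conditional expectation, with the no-sharing benchmark handled by iterated expectations conditional on $s_i$. The only cosmetic differences are that the paper normalizes $c=0$ and states the intermediate payoff formulas for general nested information structures $Y\subset Y_{i}$ (so they can be reused in the proofs of Proposition \ref{deprop} and Theorem \ref{dop}), whereas you keep $c$ explicit and specialize immediately to the two cases $(S,S)$ and $(S_{i},\varnothing)$.
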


Thus, consumer and social surplus increase with the additional information
learned by the consumers $G\left( S_{i},Y_{i}\right) -G\left( S_{i}\right) $%
, and decrease with the information learned by the producer $G\left(
Y_{0}\right) $. Intuitively, the welfare consequences of data sharing
operate through two channels. First, with more information about her own
preferences, the demand of each consumer is more responsive to her
willingness to pay; this responsiveness is beneficial for consumers and
(weakly) for the producer. Second, with access to better data, the producer
adapts his pricing policy to the estimate of each consumer's willingness to
pay $\widehat{w}_{i}$. This price responsiveness dampens some of the
quantity responsiveness. Hence, this second channel reduces both consumer
surplus and total welfare.

Whether the first or the second channel dominates depends on the
informativeness of the consumers' initial signals $G\left( S_{i}\right) $,
and on the degree of correlation in the fundamental and error terms, which
jointly determine any consumer's ability to learn from others' information.
Corollary \ref{inef} formalizes this intuition by deriving the implications
of Proposition \ref{impact} in several special cases.

\begin{corollary}[Welfare Effects]
\label{inef}\strut

\begin{enumerate}
\item If consumers cannot learn from each other's signals $\left( G\left(
S\right) =G\left( S_{i}\right) \right) $, any data sharing reduces consumer
and social surplus.

\item If individual signals $s_{i}$ are uninformative $\left( G\left(
S_{i}\right) =0\right),$ any data sharing improves consumer and social
surplus.

\item Social surplus is maximized by collecting and sharing all signals with
every consumer $\left( Y_{i}=S\right) $, and sharing no data with the
producer $\left( Y_{0}=\varnothing \right) $.
\end{enumerate}
\end{corollary}

Data sharing is detrimental to consumer and social surplus when consumers
observe their willingness to pay perfectly $\left( \sigma =0\right) $, or
when both fundamentals $\left( w_{i},w_{j}\right) $ and errors $\left(
e_{i},e_{j}\right) $ are independent. In those cases, any data sharing only
enables price discrimination.\footnote{%
The special case in which each consumer knows her willingness to pay (i.e.,
signals are noiseless in our model's language) is closely related to the
model of third-degree price discrimination in \cite{robi33} and \cite{schm81}%
. In our setting, data sharing enables the producer to offer personalized
prices; thus, price discrimination occurs across different \emph{realizations%
} of the willingness to pay. In contrast, in \cite{robi33} and \cite{schm81}%
, price discrimination occurs across different market segments. In both
settings, the central result is that average demand does not change (with
all markets served), but social welfare is lower under finer market
segmentation.} Conversely, if individual signals become arbitrarily
uninformative, but the entire vector $s$ remains informative, then even 
\emph{symmetric }information gains (i.e., data outflow policies where $%
Y_{0}=Y_{i}$) yield Pareto improvements in the product market. In this case,
the producer and the consumer share the additional gains from trade
associated with better informed consumption and pricing decisions.

Finally, an immediate implication of the two channels highlighted by
Proposition \ref{impact} is that the \emph{first best }allocation of
information consists of collecting and sharing all data among the consumers
and none with the producer. However, a data intermediary with market power
will not implement the socially optimal allocation of information. In
particular, because the producer is paying for the data, he will always
receive some information. To fully describe the outcome of the game, we then
turn to the price of social data.

\subsection{Price of Social Data\label{psd}}

We first derive the total payment $m_{0}$ charged to the producer and the
compensation $m_{i}$ owed to each consumer. For the producer, the gains from
data acquisition have to at least offset the price of the data. At the same
time, the intermediary can charge up to the entire value of the information
outflow $Y_{0}$ to the producer. From expression (\ref{deltaPi}) in
Proposition \ref{impact}, we can then write the payment $m_{0}$ as%
\begin{equation}
m_{0}\left( Y\right) =N\left( \Pi _{i}\left( \left( S_{i},Y_{i}\right)
,Y_{0}\right) -\Pi _{i}\left( S_{i},\varnothing \right) \right) =\frac{N}{4}%
G\left( Y_{0}\right) .  \label{irf1}
\end{equation}%
Not surprisingly, the intermediary's profits are increasing in the amount of
information sold to the producer. However, we also know from Lemma \ref{tan}
that every consumer $i$ receives at least as much information as the
producer. These two observations establish the optimality of the \emph{%
complete data outflow }policy. Under this policy, the entire realized data
inflow $X$ is reported to the producer and to all consumers, including those
who did not accept the intermediary's offer.

\begin{lemma}[Optimal Data Outflow]
\label{dop}\quad \newline
Given any realized data inflow $X$, the complete data outflow policy, $%
Y_{0}^{\ast }\left( X\right) =Y_{i}^{\ast }\left( X\right) =X$\ for all $i$,$%
\ $maximizes the gross revenue of the producer among all feasible
data-outflow policies.
\end{lemma}

A critical driver of the consumer's decision to share data is her ability to
anticipate the intermediary's use of the information thus gained. By Lemma %
\ref{dop}, every consumer knows that all product market participants will
receive the same information from the intermediary. In particular, consumer $%
i$ knows that, by rejecting her contract, she prevents the producer from
accessing her data $X_{i}$, but she does not forego the opportunity to learn
from other consumers' data $X_{-i}$. In other words, consumer $i$ can learn $%
X_{-i}$ for free. In practice, this corresponds to {searching for the same
content on a digital platform without \textquotedblleft logging
in\textquotedblright\ or else agreeing to sharing her own data}.

For any data inflow policy $X$ and any underlying signal structure, the data
intermediary offers positive payments to consumers. This occurs because the
intermediary must compensate consumers \emph{on the margin}:\emph{\ }%
consumer $i$ requires compensation for revealing her data $X_{i}$ to the
producer, given that the other $N-1$ consumers already share theirs.%
\footnote{%
We will revisit this property when we discuss the intermediary's commitment
power in Section \ref{iwc}.} Therefore, the payments satisfy%
\begin{equation}
m_{i}\geq U_{i}\left( \left( S_{i},X_{-i}\right) ,X_{-i}\right) -U_{i}\left(
\left( S_{i},X\right) ,X\right) .  \label{ir}
\end{equation}%
Intuitively, consumer $i$ is not compensated for the (positive or negative)
effect of other consumers' data inflow $X_{-i}$ on her surplus. To see this
more formally, suppose consumer $i$'s participation constraint (\ref{ir})
binds, and rewrite compensation $m_{i}$ as%
\begin{eqnarray}
m_{i}^{\ast }\left( X\right) &=&U_{i}\left( \left( S_{i},X_{-i}\right)
,X_{-i}\right) -U_{i}\left( \left( S_{i},X\right) ,X\right)  \notag \\
&=&-\underset{\triangleq \Delta U_{i}\left( X\right) }{\underbrace{\left(
U_{i}\left( \left( S_{i},X\right) ,X\right) -U_{i}\left( S_{i},\varnothing
\right) \right) }}+\underset{\triangleq DE_{i}\left( X\right) }{\underbrace{%
U_{i}\left( \left( S_{i},X_{-i}\right) ,X_{-i}\right) -U_{i}\left(
S_{i},\varnothing \right) }}.  \label{mi}
\end{eqnarray}%
The first term in (\ref{mi}), denoted by $\Delta U_{i}\left( X\right) $, is
the total impact on consumer $i$'s surplus associated with data inflow $X$.
The second term is the \emph{data externality }(\ref{de1})$\ $imposed on $i$
by consumers $j\neq i$. It reflects the change in utility when $j\neq i$
sell their data $X_{j}$ to the intermediary who then shares the data with
the producer. As it is central to our analysis, we now examine the latter
term in detail.

\subsection{Data Externality and Intermediation\label{de_sec}}

Our notion of \emph{data externality} isolates the effect on consumer $i$'s
surplus of the decision by the other consumers to share their data with all
market participants.

\begin{definition}[Data Externality]
\label{dedef}\quad \newline
The data externality imposed by consumers $-i$ on consumer $i$ is given by 
\begin{equation}
DE_{i}\left( X\right) \triangleq U_{i}\left( \left( S_{i},X_{-i}\right)
,X_{-i}\right) -U_{i}\left( S_{i},\varnothing \right) .  \label{de1}
\end{equation}
\end{definition}

Using expression (\ref{deltaU}) in Proposition \ref{impact}, the data
externality can be written as%
\begin{equation}
DE_{i}\left( X\right) =\frac{1}{2}\left( G\left( S_{i},X_{-i}\right)
-G\left( S_{i}\right) \right) -\frac{3}{8}G\left( X_{-i}\right) .
\label{de2}
\end{equation}%
To provide some intuition as to what determines the sign of the data
externality, we evaluate expression (\ref{de2}) under the two special
information structures in Examples \ref{e1} and \ref{e2}. In both cases, we
consider what would happen if consumer $i$ held back her signal, given that
the remaining $N-1$ consumers share theirs with the producer and with
consumer $i$\emph{.}

Example \ref{e1} illustrates that if consumer $i$ does not learn much from
the signals of the other consumers, but those signals help predict $w_{i}$,
then the data externality is negative.

\begin{example2}[Common Preferences]
\quad \newline
Let $s_{i}=w+\sigma e_{i}$, and suppose the intermediary collects all
consumer data, $X_{i}=S_{i}$. The producer can use $N-1$ signals to estimate
the common preference parameter $w$, i.e., $G\left( S_{-i}\right) >0$. If
the individual signals are sufficiently precise so that $G\left( S\right)
\approx G\left( S_{i}\right) $, then consumer $i$ is worse off when other
consumers share their signals, i.e., the data externality is negative.
\end{example2}

Example \ref{e2} illustrates that if the producer cannot learn anything
about $w_{i}$ from signals $s_{-i}$, then the data externality is
unambiguously positive.

\begin{example2}[Common Experience]
\quad \newline
Let $s_{i}=w_{i}+\sigma e$, and suppose the intermediary collects all
consumer data, $X_{i}=S_{i}$. Because all $w_{i}$ are independent, $G\left(
S_{-i}\right) =0$, i.e., the producer cannot learn about $w_{i}$ from
signals $s_{-i}$ only. However, consumer $i$ can use signals $s_{-i}$ to
filter out the common error in her own signal $s_{i}$, i.e., $G\left(
S\right) >G\left( S_{i}\right) $. Therefore, other consumers' signals help
consumer $i$, i.e., the data externality is positive.
\end{example2}

Thus, the overall effect of data sharing on consumer surplus (Proposition %
\ref{impact}) depends largely on the informativeness of individual signals $%
s_{i}$. Conversely, the impact of other consumers' sharing decisions varies
significantly with the data structure, particularly the ability of the
producer to learn about $w_{i}$ from signals $s_{-i}$.

\section{Optimal Data Intermediation\label{opdi}}

The data externality has direct implications for the intermediary's profit (%
\ref{obj}). Combining the expressions for payments in (\ref{irf1}) and (\ref%
{mi}), we can write the intermediary's profit $R$ as%
\begin{equation}
R\left( X\right) =\sum\nolimits_{i=1}^{N}\left( \Delta W_{i}\left( X\right)
-DE_{i}\left( X\right) \right) \text{,}  \label{RS}
\end{equation}%
where $\Delta W\left( X\right) $ denotes the effect of sharing data policy $%
X $ on total surplus, as in (\ref{deltaW}). The intermediary's profits are
equal to the effect of data sharing on social surplus, net of the data
externalities across all consumers. The sign of the data externality is
therefore critical for the profitability of data intermediation. In
particular, if consumers impose negative data externalities on each other,
this imposition directly reduces the compensation owed to each one, and
conversely if the data externalities are positive. The revenue formula (\ref%
{RS}) clarifies how the intermediary's objective differs from the social
planner's. In particular, if the data externality is negative, then
intermediation can be profitable but welfare reducing. Conversely, if the
data externality is positive, welfare-enhancing intermediation might not be
profitable.

Having characterized the two terms $\Delta U_{i}$ and $DE_{i}$ in (\ref%
{deltaU}) and (\ref{de2}), we can rewrite the payments to consumers in (\ref%
{mi}) as%
\begin{equation}
m_{i}^{\ast }\left( X\right) =\frac{3}{8}\left( G\left( X\right) -G\left(
X_{-i}\right) \right) .  \label{mistar}
\end{equation}%
Finally, combining the terms $m_{0}^{\ast }\left( X\right) $ in (\ref{irf1})
and $m_{i}^{\ast }\left( X\right) $ in (\ref{mistar}), we obtain%
\begin{equation}
R\left( X\right) =\frac{N}{8}\left( 3G\left( X_{-i}\right) -G\left( X\right)
\right) .  \label{revx}
\end{equation}%
This yields a necessary and sufficient condition for profitable data
intermediation.

\begin{proposition}[Profitable Data Intermediation]
\label{prof}\qquad \newline
Data intermediation with inflow policy $X$ is profitable if and only if 
\begin{equation*}
3G\left( X_{-i}\right) \geq G(X).
\end{equation*}
\end{proposition}

Proposition \ref{prof} considers the amount of information learned by the
producer. Specifically, it requires that the signals $x_{-i}$ generate at
least $1/3$ of the variance of $w_{i}$ explained by the entire vector $x$ in
order for the data inflow policy $X$ to be profitable. Intuitively, it is
cheaper to acquire each signal $x_{i}$ if the other consumers' signals are
substitutes, and this is more likely to occur when the underlying
fundamentals $w_{i}$ and $w_{-i}$ are correlated. Conversely, for
independent fundamentals, $G\left( X_{-i}\right) =0$ for any $X$, and
intermediation is not profitable.

We now draw the implications for the intermediary's profits in our two
leading examples.

\begin{corollary}[Common Preference]
\label{IbP}\qquad \newline
Suppose fundamentals $w_{i}$ are perfectly correlated and errors $e_{i}$ are
independent. When $N$ is large, data intermediation is always profitable.
However, for sufficiently small $\sigma $, the data externality is negative,
and data sharing reduces social surplus.
\end{corollary}

These results echo the findings of \cite{ammo21}, who considered signals
with diminishing marginal informativeness, and found socially excessive data
intermediation. The information structure in Corollary \ref{IbP} satisfies
this submodularity property. In our model, however, socially insufficient
intermediation can also occur. In particular, the intermediary may be unable
to generate positive profits from socially efficient information with
complementary signals, such as those in Corollary \ref{EbU}.

\begin{corollary}[Common Experience]
\label{EbU}\qquad \newline
Suppose fundamentals $w_{i}$ are independent and errors $e_{i}$ are
perfectly correlated. For sufficiently large $\sigma $, data sharing
increases social surplus. However, because the fundamentals are independent,
data intermediation is never profitable.
\end{corollary}

The conclusions of Corollaries \ref{IbP} and \ref{EbU} do not depend on
whether the intermediary collects complete data or anonymized data. However,
as we discuss in the next section, it is always optimal for the intermediary
to anonymize the data collected.

\subsection{Data Anonymization\label{daan}}

We now explore the intermediary's decision to anonymize the individual
consumers' demand data. We focus on two maximally different policies along
this dimension. At one extreme, the intermediary can collect and transmit 
\emph{complete (identity-revealing) }data about individual consumers $\left(
X=S\right) $, thereby enabling the producer to charge personalized prices.
At the other extreme, the intermediary can collect \emph{anonymized} data $%
\left( X=A\right) $.

Under anonymized information intermediation, the producer charges the same
price to all consumers who participate in the intermediary's data policy. In
other words, from the point of view of the producer, anonymized data is
equivalent to aggregate demand data. These data still allow the producer to
perform third-degree price discrimination across realizations of the total
market demand but limit his ability to extract surplus from individual
consumers.\footnote{%
More formally, under the anonymized data policy $A$, the producer has access
to the vector $\delta \left( s\right) $, i.e., to a uniformly random
permutation of the consumers' signals. Because the producer faces a
prediction problem for each $w_{i}$ with a convex loss, he chooses to charge
a uniform price that is optimal for the sample average of the consumers'
signals.}

Certainly, for the producer, the value of market demand data is lower than
the value of individual demand data. However, the cost of acquiring such
fine-grained data from consumers is also correspondingly higher. We now show
that anonymizing the consumers' information profitably reduces the
intermediary's data acquisition costs.

\begin{proposition}[Optimality of Data Anonymization]
\label{si}\quad \newline
The intermediary obtains strictly greater profits by collecting anonymized
consumer data.
\end{proposition}

Within the confines of our policies, but independent of the distributions of
fundamental and noise terms, the data intermediary finds it advantageous to
not elicit the identity of the consumer. Therefore, the producer will not
offer personalized prices but variable prices that adjust to the realized
information about market demand. In other words, a monopolist intermediary
might cause socially inefficient information transmission, but the
equilibrium contractual outcome preserves privacy over the personal identity
of the consumer.

In Section \ref{limits}, we explore the boundaries of the anonymization
result, under both heterogeneous consumers and alternative product-market
specifications. In particular, we will generalize our result to show that
anonymization is optimal when consumers are homogeneous ex ante and
transmitting data to the producer is socially inefficient. Therefore, if
information is used to target advertisements, and more precise messages
increase social surplus, then we shall predict that the intermediary shares
complete data.\footnote{%
If, in addition, responsiveness to ads is heterogeneous in the consumer
population, coarser information transmission (i.e., not fully anonymous) can
also be optimal, as we show in Proposition \ref{gs}.}

For the case of surplus extraction (prices), the finding in Proposition \ref%
{si} suggests why we might see personalized prices in fewer settings than
initially anticipated. In the context of direct sales of information, for
example, Nielsen does not sell individual households' data to merchants.
Instead, Nielsen aggregates its panel data at the local market level.
Similarly, in the context of indirect sales of information, merchants on the
retail platform Amazon very rarely engage in personalized pricing. However,
the price of every single good or service is subject to substantial
variation across both geographic markets and over time.

In light of the above result, we might interpret the restraint on the use of
personalized pricing as the optimal resolution of the intermediary's
trade-off in acquiring sensitive consumer information. Stretching the
interpretation, the pushback against Amazon's introduction of personalized
pricing in the early 2000s can also be viewed as a tightening the consumers'
participation constraints. Under any degree of competition, the net value of
the Amazon platform under perfect personalized pricing would not have
exceeded the consumers' outside options.

The data externality is, once again, the key to gaining intuition for why
the intermediary chooses data anonymization. Suppose consumers $-i$ reveal
their signals, and consumer $i$ does not. With access to identifying
information, the producer optimally aggregates the available data to form
the best predictor of the missing data point. In this case, the producer
charges a personalized price $p_{i}^{\ast }\left( X_{-i}\right) $ to each
nonparticipating consumer $i$. With anonymous data, the producer charges two
prices: a single price for all participating consumers and another price for
the deviating, nonparticipating consumers. Because the distribution of
consumer willingness to pay and signals is symmetric, however, the
producer's inference on $w_{i}$ is invariant to permutations of the other
consumers' signals, i.e.,%
\begin{equation*}
\mathbb{E}\left[ w_{i}\mid a_{-i}\right] =\mathbb{E}\left[ w_{i}\mid s_{-i}%
\right] .
\end{equation*}%
Therefore, a nonparticipating consumer faces identical prices under both
data policies:\footnote{%
Anonymization remains optimal if we force the producer to charge a single
price to all consumers on and off the equilibrium. With this interpretation,
we intend to capture the idea that the producer offers one price
\textquotedblleft on the platform\textquotedblright\ to the participating
consumers while interacting with the deviating consumer \textquotedblleft
offline.\textquotedblright\ The producer then uses the available market data
to tailor the offline price.}%
\begin{equation*}
p_{i}^{\ast }\left( S_{-i}\right) =p_{i}^{\ast }\left( A_{-i}\right) .
\end{equation*}%
Likewise, consumer $i$'s posterior distribution over her own $w_{i}$ does
not depend on the identity the other consumers. Therefore, removing identity
information through the anonymized policy $X=A$ does not have any
implications for consumers' learning either:%
\begin{equation*}
\mathbb{E}\left[ w_{i}\mid s_{i},a_{-i}\right] =\mathbb{E}\left[ w_{i}\mid s%
\right] .
\end{equation*}

Because the amount of information available to consumer $i$ and to the
producer \emph{off the path of play }is independent of $X\in \left\{
A,S\right\} $, it follows that%
\begin{equation*}
U_{i}\left( \left( S_{i},S_{-i}\right) ,S_{-i}\right) =U_{i}\left( \left(
S_{i},A_{-i}\right) ,A_{-i}\right) .
\end{equation*}%
In turn, this implies $DE_{i}\left( S\right) =DE_{i}\left( A\right) $. Thus,
the data externality term $DE_{i}\left( X\right) $ in the intermediary's
profits (\ref{RS}) is not impacted by the choice of inflow $X\in \left\{
A,S\right\} .$

Along the path of play, however, the two data inflow policies yield
different outcomes. In particular, the anonymized data inflow policy reduces
the amount of information conveyed to the producer in equilibrium.
Crucially, this reduction does not occur at the expense of the consumers'
own learning. Therefore, the shift to anonymized data increases the total
surplus terms $\Delta W_{i}\left( X\right) $ and the intermediary's profits.
Put differently, anonymization reduces the cost of procuring the
information, relative to the loss in revenue.

We now show that data anonymization is the key to the \textquotedblleft
explosive\textquotedblright\ profitability of data intermediation when the
number of consumers becomes large. We then revisit the applications and
limitations of our anonymization result in Section \ref{limits}.

\subsection{Large Markets\label{lama}}

Thus far, we have considered the optimal data policy for a given finite
number of consumers, each of whom transmits a single signal. Perhaps,\emph{\
the} defining feature of data markets is the multitude of (potential)
participants, data sources, and services. We now pursue the implications of
having many participants (i.e., of many data sources) for the social
efficiency of data markets and the price of data.

Each additional consumer presents an additional opportunity for trade in the
product market. Thus, the feasible social surplus is linear in the number of
consumers. In addition, with every additional consumer, the intermediary
obtains additional information about market demand. These two effects
suggest that intermediation becomes increasingly profitable in larger
markets, wherein the potential revenue increases without bound, whereas
individual consumers make a small marginal contribution to the precision of
aggregate data.

For this comparative statics analysis, we adopt the following \emph{additive
data structure. }Specifically, we assume the willingness to pay of consumer $%
i\ $is the sum of two components:%
\begin{equation}
w_{i}=\theta +\theta _{i}.  \label{wa}
\end{equation}%
The term $\theta $ is \emph{common }to all consumers in the market, whereas
the term $\theta _{i}$ is \emph{idiosyncratic }to consumer $i$. Similarly,
the error term of consumer $i$ is given by%
\begin{equation}
e_{i}\triangleq \varepsilon +\varepsilon _{i},  \label{ea}
\end{equation}%
where the terms $\varepsilon $ and $\varepsilon _{i}$ refer to a common and
an idiosyncratic error, respectively. We also refer to the willingness to
pay $w_{i}$ as the fundamental as opposed to the error term $e_{i}$.

As we vary the number of consumers $N$, the additive data structure allows
us to hold the pairwise correlation between any two consumers' fundamentals
and noise terms constant. In particular, let $\alpha $ denote the
correlation coefficient of any two $\left( w_{i},w_{j}\right) $, and let $%
\beta $ denote the correlation coefficient of $\left( e_{i},e_{j}\right) $.

We first establish a sufficient condition for the profitability of complete
data sharing as the number of consumers becomes large, and then we analyze
the data intermediary's revenue and total cost separately.\newpage 

\begin{proposition}[Profitable Intermediation of Anonymized Data]
\label{prof_anonym} \qquad \newline
For any $\alpha >0$, there exists $N^{\ast }$ such that anonymized data
sharing is profitable if $N>N^{\ast }.$
\end{proposition}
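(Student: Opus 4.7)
My plan is to invoke the profitability criterion $3G(A_{-i})\ge G(A)$ established in~(\ref{GA}) and show that, under the additive data structure with $\alpha>0$, both $G(A)$ and $G(A_{-i})$ converge as $N\to \infty$ to a common strictly positive limit $L(\alpha)$. The inequality then holds with a factor-of-three margin for every $N$ exceeding some $N^{\ast}(\alpha)$.

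The first step reduces the producer's payoff-relevant statistic under the anonymized policy to a sample mean. Since signals are exchangeable and the producer's problem is one of quadratic-loss prediction, his posterior of $w_{i}$ given the random permutation $\delta (s)$ is a symmetric function of the collected multiset; combining this with the equivalence between anonymization and aggregation noted in Section~\ref{dm}, we may take
\begin{equation*}
G(A)=\limfunc{var}\bigl(\mathbb{E}[w_{i}\mid \bar{s}^{(N)}]\bigr),\qquad G(A_{-i})=\limfunc{var}\bigl(\mathbb{E}[w_{i}\mid \bar{s}_{-i}^{(N-1)}]\bigr),
\end{equation*}
where $\bar{s}^{(N)}\triangleq N^{-1}\sum_{j=1}^{N}s_{j}$ and $\bar{s}_{-i}^{(N-1)}$ is the analogous average over $j\ne i$.

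The second step is a law-of-large-numbers argument. Writing signals in the additive form~(\ref{wa})--(\ref{ea}) gives $\bar{s}^{(N)}=\theta +\sigma \varepsilon +N^{-1}\sum_{j=1}^{N}(\theta _{j}+\sigma \varepsilon _{j})$, and the idiosyncratic tail vanishes in $L^{2}$ by independence and finite variance. Hence $\bar{s}^{(N)}\to \theta +\sigma \varepsilon $ in $L^{2}$, and by $L^{2}$-continuity of conditional expectation $\mathbb{E}[w_{i}\mid \bar{s}^{(N)}]\to \mathbb{E}[w_{i}\mid \theta +\sigma \varepsilon ]$; the analogous statement holds for $\bar{s}_{-i}^{(N-1)}$. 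Both gain functions therefore converge to $L(\alpha )\triangleq \limfunc{var}\bigl(\mathbb{E}[w_{i}\mid \theta +\sigma \varepsilon ]\bigr)=\limfunc{var}\bigl(\mathbb{E}[\theta \mid \theta +\sigma \varepsilon ]\bigr)$, the last equality using the independence of $\theta _{i}$ from $(\theta ,\varepsilon )$. Whenever $\alpha =\limfunc{var}(\theta )>0$, a nontrivial component of $\theta $ survives in any projection built from a random variable that contains $\theta $, so $L(\alpha )>0$.

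Combining the two steps, $3G(A_{-i})/G(A)\to 3>1$, so there exists $N^{\ast }(\alpha )$ with $3G(A_{-i})>G(A)$ for every $N>N^{\ast }$, and anonymized data intermediation is profitable by~(\ref{GA}). The main obstacle is the first step: justifying that the producer's prediction of an individual $w_{i}$ under the anonymized policy is governed only by the sample mean, rather than by finer sample moments of the multiset. For the Gaussian additive model this follows immediately from the linearity of the projection; in general, it rests on the quadratic-loss reasoning underlying Theorem~\ref{dop}, by which the producer optimally posts a uniform price determined only by the conditional expectation of the average willingness to pay.
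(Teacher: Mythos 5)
Your overall strategy---invoking the profitability condition $3G(A_{-i})\geq G(A)$ and showing that both sides converge to a common strictly positive limit as $N\rightarrow \infty $---is the same as the paper's, but two steps do not hold as stated. The more serious gap is the identity $G(A)=\limfunc{var}[\mathbb{E}[w_{i}\mid \bar{s}^{(N)}]]$. The anonymized inflow is the unordered multiset of signals, and what exchangeability delivers (the paper's Lemma \ref{lemm_sym_est}) is $\mathbb{E}[w_{i}\mid A]=\mathbb{E}[\bar{w}\mid A]$, a \emph{symmetric} function of all $N$ signals that, outside the Gaussian case, may depend on higher empirical moments and not only on the sample mean. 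Since $\bar{s}^{(N)}$ is a garbling of $A$, your reduction only yields $G(A)\geq \limfunc{var}[\mathbb{E}[w_{i}\mid \bar{s}^{(N)}]]$, i.e., a \emph{lower} bound; but in the target inequality $3G(A_{-i})\geq G(A)$ the term $G(A)$ must be bounded from \emph{above}, which is exactly the direction in which the sample-mean reduction fails. (Using the lower bound for $G(A_{-i})$ is harmless.) The quadratic-loss argument you cite shows that the producer's uniform price depends only on $\mathbb{E}[\bar{w}\mid A]$; it does not show that $\mathbb{E}[\bar{w}\mid A]$ is a function of $\bar{s}^{(N)}$. The second gap is the appeal to ``$L^{2}$-continuity of conditional expectation'': conditional expectations are not continuous in the conditioning variable under $L^{2}$ convergence (take $Y_{n}=X/n\rightarrow 0$ in $L^{2}$, yet $\mathbb{E}[X\mid Y_{n}]=X$ for every $n$), so $\bar{s}^{(N)}\rightarrow \theta +\sigma \varepsilon $ does not by itself give $\mathbb{E}[w_{i}\mid \bar{s}^{(N)}]\rightarrow \mathbb{E}[w_{i}\mid \theta +\sigma \varepsilon ]$. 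Both steps are fine under joint Gaussianity, where all projections are linear in $\bar{s}^{(N)}$, but the proposition is stated for general additive structures.

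The paper's proof sidesteps both issues. It uses $\mathbb{E}[w_{i}\mid A]=\mathbb{E}[\theta +\tfrac{1}{N}\Sigma _{j}\theta _{j}\mid A]$ together with a Cauchy--Schwarz bound to obtain $G(A)\leq \limfunc{var}[\mathbb{E}[\theta \mid A]]+\tfrac{1}{N}\limfunc{var}[\theta _{i}]+\tfrac{2}{\sqrt{N}}\sqrt{\limfunc{var}[\theta ]\limfunc{var}[\theta _{i}]}$, while $G(A_{-i})=\limfunc{var}[\mathbb{E}[\theta \mid A_{-i}]]$ holds exactly; it then uses the fact that $\limfunc{var}[\mathbb{E}[\theta \mid A^{(N)}]]$ is a bounded, monotone, hence convergent sequence that is strictly positive when $\alpha >0$, so the $N$- and $(N-1)$-consumer terms share a limit \emph{without ever identifying it}. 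If you want to keep your route of identifying the limit as $L(\alpha )=\limfunc{var}[\mathbb{E}[\theta \mid \theta +\sigma \varepsilon ]]$, you need (i) an upper bound showing that $A$ is asymptotically no more informative about $\theta $ than $\theta +\sigma \varepsilon $ (this does follow from the conditional independence of $\theta $ and the idiosyncratic terms given $\theta +\sigma \varepsilon $, but it must be argued), and (ii) a lower bound obtained by plugging a fixed, explicitly continuous estimator evaluated at $\bar{s}^{(N)}$ into Lemma \ref{statis}, rather than a continuity claim for conditional expectations.
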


We already know from Corollary \ref{IbP} that a high degree of correlation
in the consumers' willingness to pay allows the intermediary to profit from
data sharing with sufficiently precise signals. Under the optimal
data-sharing policy, \emph{any }degree of correlation in the consumers'
willingness to pay makes the anonymized signals sufficiently close
substitutes that intermediation is profitable when $N$ is large.

In Proposition \ref{lm}, we assume that error terms are independent. This
allows us to use the sample average to establish a lower bound on learning
from $N-1$ signals. We suspect that similar results hold more generally
under correlated errors.\footnote{%
This result holds, for example, when both fundamentals and errors have
Gaussian distributions.}

\begin{proposition}[Large Markets]
\label{lm}\qquad \newline
Consider the additive data structure and assume that errors are independent
across consumers. As $N\rightarrow \infty $:

\begin{enumerate}
\item \label{lm1}Each consumer's compensation $m_{i}^{\ast }$ converges to
zero.

\item \label{lm2}Total consumer compensation is bounded by a constant, 
\begin{equation*}
Nm_{i}^{\ast }\leq \frac{9}{8}\left( \limfunc{var}\left[ \theta _{i}\right] +%
\limfunc{var}\left[ \varepsilon _{i}\right] \right) ,\quad \forall N.
\end{equation*}

\item \label{lm3}The intermediary's revenue and profit grow linearly in $N$.
\end{enumerate}
\end{proposition}

As the optimal data policy aggregates the consumers' signals, each
additional consumer has a rapidly decreasing marginal value. Furthermore,
each consumer is paid only for her marginal contribution; this explains why
the total payments $Nm_{i}$ converge to a finite number. Strikingly, this
convergence can occur from above$\emph{:}$ when the consumers' willingness
to pay is sufficiently correlated, the decrease in each $i$'s marginal
contribution can be sufficiently strong to offset the increase in $N$.

Whereas total costs converge to a constant, the revenue that the data
intermediary can extract from the producer is linear in the number of
consumers. Our model therefore implies that, as the market size grows
without bound, the per capita profit of the data intermediary converges to
the per capita profit when the (anonymized) data are freely available.
Conversely, the impact on consumer surplus depends on the degree of
correlation in the underlying fundamentals and on the precision of the
consumers' initial signals.\footnote{%
In a recent contribution, \cite{loma20} study large digital monopoly
markets, where data have the countervailing effects of improving consumer
valuations and increasing monopoly prices.}

Finally, we show that data anonymization is crucial for the large $N$
properties of the intermediary's profits. Recall that, with complete data
intermediation, individual consumer payments are proportional to $%
G(S)-G(S_{-i})$. As long as fundamentals $w_{i}$ are not perfectly
correlated, these payments are then bounded away from zero for any finite $N$%
. Proposition \ref{limrev} shows that this property also holds in the limit.

\begin{proposition}[Asymptotics with Complete Sharing]
\label{limrev}\qquad \newline
Consider the additive data structure with $\limfunc{var}[\theta _{i}]>0$.
Under complete (identity-revealing) data sharing, the asymptotic individual
compensation is bounded away from $0$: 
\begin{equation*}
\liminf_{N\rightarrow \infty }m_{i}^{\ast }\geq \frac{3}{8}\frac{\limfunc{var%
}^{2}[\theta _{i}]}{1+\limfunc{var}[e_{i}]}>0.
\end{equation*}
\end{proposition}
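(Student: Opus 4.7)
The plan is to combine equation (\ref{mistar})---which gives $m_{i}^{\ast} = \tfrac{3}{8}(G(S) - G(S_{-i}))$---with an asymptotic identification of the information increment from $s_{i}$. First, I would invoke the tower property $\mathbb{E}[\widehat{w}_{i}(s) \mid s_{-i}] = \widehat{w}_{i}(s_{-i})$, which via the orthogonal decomposition of variance yields
\[
G(S) - G(S_{-i}) = \limfunc{var}\!\left[\widehat{w}_{i}(s) - \widehat{w}_{i}(s_{-i})\right],
\]
reducing the problem to lower-bounding this variance asymptotically.

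Second, I would pass to the limit $N \to \infty$ using $L^{2}$-martingale convergence along the increasing filtration $\mathcal{G}_{N} = \sigma(s_{1}, \ldots, s_{N})$. Writing $\tilde{\theta} := \theta + \sigma\varepsilon$ and $\tilde{\theta}_{j} := \theta_{j} + \sigma\varepsilon_{j}$, the strong LLN applied to the iid sequence $\{\tilde{\theta}_{j}\}_{j \neq i}$ gives $\bar{s}_{-i} \to \tilde{\theta}$ almost surely. This identifies the tail $\sigma$-algebras as $\mathcal{F}_{\infty} = \sigma(\tilde{\theta}, \{\tilde{\theta}_{j}\}_{j \neq i})$ and $\mathcal{G}_{\infty} = \sigma(\tilde{\theta}, \{\tilde{\theta}_{j}\}_{j})$. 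Using that $(\theta_{i}, \varepsilon_{i})$ is independent of $\mathcal{F}_{\infty}$ and that $(\theta, \varepsilon)$ is independent of $\{\tilde{\theta}_{j}\}_{j \neq i}$, I would conclude $\widehat{w}_{i}(s_{-i}) \to \mathbb{E}[\theta \mid \tilde{\theta}]$ and $\widehat{w}_{i}(s) \to \mathbb{E}[\theta \mid \tilde{\theta}] + \mathbb{E}[\theta_{i} \mid \tilde{\theta}_{i}]$ in $L^{2}$, so
\[
\lim_{N \to \infty} \bigl(G(S) - G(S_{-i})\bigr) = \limfunc{var}\!\left[\mathbb{E}[\theta_{i} \mid \tilde{\theta}_{i}]\right].
\]

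Third, I would lower-bound this limiting variance. Since the conditional expectation dominates the best linear predictor of $\theta_{i}$ from $\tilde{\theta}_{i}$ in MSE, one has $\mathbb{E}[(\theta_{i} - \mathbb{E}[\theta_{i} \mid \tilde{\theta}_{i}])^{2}] \leq \limfunc{var}[\theta_{i}] - \limfunc{var}^{2}[\theta_{i}]/\limfunc{var}[\tilde{\theta}_{i}]$, and the law of total variance then gives $\limfunc{var}[\mathbb{E}[\theta_{i} \mid \tilde{\theta}_{i}]] \geq \limfunc{var}^{2}[\theta_{i}]/\limfunc{var}[\tilde{\theta}_{i}]$. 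Using $\limfunc{var}[\tilde{\theta}_{i}] = \limfunc{var}[\theta_{i}] + \sigma^{2}\limfunc{var}[\varepsilon_{i}] \leq \limfunc{var}[w_{i}] + \limfunc{var}[\sigma e_{i}] = 1 + \limfunc{var}[e_{i}]$ delivers $\liminf_{N} m_{i}^{\ast} \geq \tfrac{3}{8}\limfunc{var}^{2}[\theta_{i}]/(1 + \limfunc{var}[e_{i}]) > 0$, as required.

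The hard part will be step two: rigorously pinning down the tail $\sigma$-algebras and upgrading the martingale convergence so that the variance of the \emph{difference} $\widehat{w}_{i}(s) - \widehat{w}_{i}(s_{-i})$ converges, not merely the two conditional expectations individually. One must carefully track that $\theta_{i}$ is independent of $\mathcal{F}_{\infty}$ and that the residuals $\tilde{\theta}_{j}$ for $j \neq i$ carry no information about $(\theta, \varepsilon, \theta_{i}, \varepsilon_{i})$. The remaining ingredients---the orthogonal decomposition in step one and the linear-MSE suboptimality in step three---are standard.
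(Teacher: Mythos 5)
Your proposal is correct, but it reaches the bound by a genuinely different route from the paper. The paper never passes to a limit of conditional expectations: it expands $G(S)-G(S_{-i})=\limfunc{var}[\mathbb{E}[\theta |S]]+2\limfunc{cov}[\mathbb{E}[\theta |S],\mathbb{E}[\theta _{i}|S]]+\limfunc{var}[\mathbb{E}[\theta _{i}|S]]-\limfunc{var}[\mathbb{E}[\theta |S_{-i}]]$ at each finite $N$, kills the cross term with a symmetry trick ($\limfunc{cov}[\mathbb{E}[\theta |S],\mathbb{E}[\theta _{i}|S]]$ is the same for every $i$, hence equals $\limfunc{cov}[\mathbb{E}[\theta |S],\frac{1}{N}\sum_{j}\mathbb{E}[\theta _{j}|S]]$, which is $O(N^{-1/2})$ by Cauchy--Schwarz), drops $\limfunc{var}[\mathbb{E}[\theta |S]]-\limfunc{var}[\mathbb{E}[\theta |S_{-i}]]\geq 0$ by monotonicity of information, and finally bounds $\limfunc{var}[\mathbb{E}[\theta _{i}|S]]$ below by the linear predictor based on $s_{i}$ alone. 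You instead use the exact orthogonal-increment identity $G(S)-G(S_{-i})=\limfunc{var}[\widehat{w}_{i}(s)-\widehat{w}_{i}(s_{-i})]$ and then identify the tail $\sigma$-fields via the SLLN and $L^{2}$-martingale convergence; the step you flag as ``hard'' is in fact routine, since $L^{2}$ convergence of each conditional expectation already implies convergence of the variance of their difference. What each approach buys: yours is sharper---it pins down the exact limit $\lim_{N}(G(S)-G(S_{-i}))=\limfunc{var}[\mathbb{E}[\theta _{i}\mid \theta _{i}+\sigma \varepsilon _{i}]]$ rather than a $\liminf$ bound---but it leans on the full additive error decomposition (i.i.d.\ residuals $\tilde{\theta}_{j}=\theta _{j}+\sigma \varepsilon _{j}$) to apply the SLLN and identify $\mathcal{F}_{\infty }$. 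The paper's argument is more elementary, requires no limit machinery or tail-field identification, and uses nothing about the error correlation structure beyond $\limfunc{var}[e_{i}]$; both proofs close with the identical best-linear-predictor inequality, so the final constants agree.
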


An immediate consequence of Proposition \ref{limrev} is that, with complete
data sharing, total payments to consumers grow linearly in $N$. Thus,
anonymization is critical to achieving increasing returns to scale in data
intermediation: even in settings where complete data intermediation $X=S$ is
profitable, the per capita profits are bounded away from the full value of
information.

Figure \ref{compensation} illustrates an example with normally distributed
fundamentals and errors, in which it can be less expensive for the
intermediary to acquire a larger \emph{anonymized }dataset than a smaller
one, but not a larger \emph{complete }dataset.

\begin{figure}
	\label{compensation}%
	\centering
	\caption{Total Consumer Compensation\ $(\sigma_w=1, \sigma_e=0)$}%
	\includegraphics[width=0.6\textwidth]{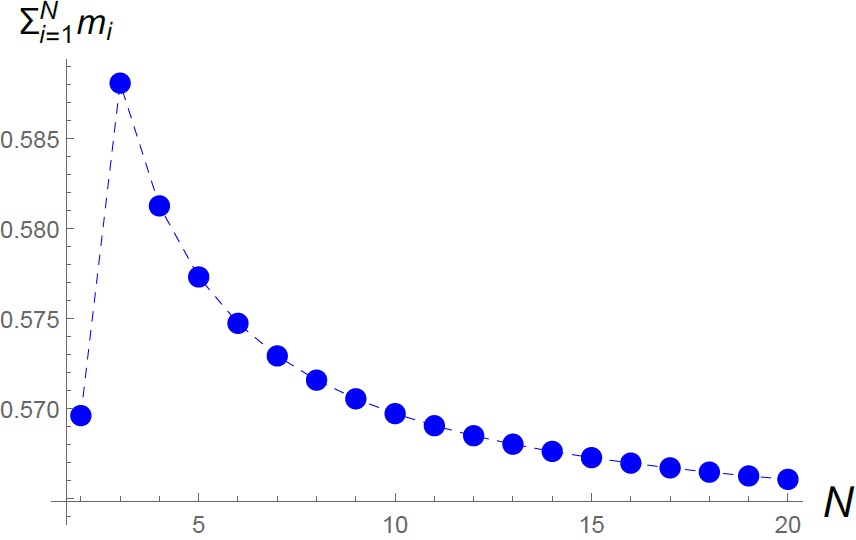}
\end{figure}


\subsection{Unique Implementation\label{dac}}

Our analysis thus far has characterized the intermediary's most preferred
equilibrium. An ensuing question is whether the qualitative insights and the
asymptotic properties discussed above would hold across all equilibria,
particularly in the intermediary's least preferred equilibrium. A seminal
result in the literature on contracting with externalities (see \cite{sega99}%
) is the \textquotedblleft divide-and-conquer\textquotedblright\ scheme that
guarantees a unique equilibrium outcome (see \cite{sewh00} and \cite{mish16}%
). Under this scheme, the intermediary can sequentially approach consumers
and offer compensation conditional on all earlier consumers having accepted
an offer. In this scheme, the first consumer receives compensation equal to
her entire surplus loss, thereby guaranteeing her acceptance regardless of
the other consumers' decisions. More generally, consumer $i$ receives the
optimal compensation level in the baseline equilibrium when $N=i$.

The cost of acquiring the consumers' data is strictly higher under
\textquotedblleft divide and conquer\textquotedblright\ than in the
intermediary's most preferred equilibrium. Nonetheless, the impact of the
ensuring unique implementation on per capita profits vanishes in the limit.

\begin{proposition}[\textquotedblleft Divide and Conquer\textquotedblright ]

\label{doc}\qquad \newline
Consider the additive data structure with independent errors. Under the
\textquotedblleft Divide and Conquer\textquotedblright\ scheme, total
consumer compensation satisfies 
\begin{equation*}
Nm_{i}^{\ast }\leq \frac{3}{4}(1+\log N)(\limfunc{var}[\theta _{i}]+\limfunc{%
var}[\varepsilon _{i}]).
\end{equation*}
\end{proposition}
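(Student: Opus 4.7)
The plan is to show that the divide-and-conquer compensation for consumer $i$ equals the optimal per-consumer baseline compensation $m^{*}(i)$ in a market of size $i$, and then to bound the sum via a harmonic inequality. When the intermediary approaches consumer $i$, the first $i-1$ consumers have already accepted and the subsequent $N-i$ may reject. The worst case for consumer $i$'s incentive to accept is when all subsequent consumers reject, so the effective market has $i$ participants if she accepts and $i-1$ otherwise. By Theorem~\ref{dop}, the intermediary transmits the collected anonymized data to consumer $i$ regardless of her decision; combined with her own signal $s_{i}$, her posterior about $w_{i}$ is identical under both decisions. Only the producer's information changes: he observes $A^{i}$ if she accepts and $A^{i-1}$ if she rejects. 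Applying the utility decomposition underlying Proposition~\ref{impact}, consumer $i$'s surplus loss from accepting equals $\tfrac{3}{8}(G(A^{i}) - G(A^{i-1}))$, which is precisely $m^{*}(i)$, the anonymized-data analogue of~(\ref{mistar}) computed in a market of $i$ consumers.

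Next I would bound $m^{*}(k)$ uniformly in $k$. Under the additive data structure with independent errors, the sample mean $\bar{s}^{k} = \tfrac{1}{k}\sum_{j=1}^{k}s_{j}$ is a sufficient linear statistic for the common component $\theta$, so $G(A^{k})$ admits a closed-form expression in $\limfunc{var}[\theta]$, $\limfunc{var}[\theta_{i}]$, $\sigma^{2}$, and $\limfunc{var}[\varepsilon_{i}]$. A direct computation of the marginal gain $G(A^{k}) - G(A^{k-1})$ (distinguishing the in-sample form $a_{k}^{2}/b_{k}$ for participating consumers from the out-of-sample form $\limfunc{var}^{2}[\theta]/b_{k-1}$ for the excluded consumer) and algebraic rearrangement yield the per-consumer bound
\begin{equation*}
k \cdot m^{*}(k) \leq \tfrac{3}{4}\bigl(\limfunc{var}[\theta_{i}] + \limfunc{var}[\varepsilon_{i}]\bigr) \qquad \text{for every } k \geq 1.
\end{equation*}
Summing over $i = 1, \ldots, N$ and applying the harmonic inequality $\sum_{i=1}^{N} 1/i \leq 1 + \log N$,
\begin{equation*}
\sum_{i=1}^{N} m_{i}^{DC} = \sum_{i=1}^{N} m^{*}(i) \leq \tfrac{3}{4}(1 + \log N)\bigl(\limfunc{var}[\theta_{i}] + \limfunc{var}[\varepsilon_{i}]\bigr),
\end{equation*}
which is the stated bound.

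The main obstacle is the per-consumer bound with the sharp constant $\tfrac{3}{4}$. The asymptotic value $\lim_{k \to \infty} k \cdot m^{*}(k)$ is approximately $\tfrac{3}{4}\limfunc{var}[\theta_{i}]$, comfortably within the claimed bound, but verifying the inequality uniformly for all finite $k \geq 1$ requires either careful algebraic manipulation of the rational function describing $G(A^{k}) - G(A^{k-1})$, or a tight comparison between the producer's in-sample gain (when the target consumer's signal is anonymously included in the data) and out-of-sample gain (when it is not). Once this per-step inequality is in hand, the remainder of the argument is the routine harmonic summation above.
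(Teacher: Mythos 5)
Your overall strategy---express the divide-and-conquer payment as a marginal information gain $\tfrac{3}{8}\left(G(A^{k})-G(A^{k-1})\right)$, bound each increment via the sample-average estimator, and finish with a harmonic sum---is the same as the paper's. But there are two genuine gaps. First, your claim that ``the worst case for consumer $i$'s incentive to accept is when all subsequent consumers reject'' is exactly the assertion that the marginal loss $G(A^{k})-G(A^{k-1})$ is decreasing in $k$ (submodularity of the gain function). By symmetry, consumer $i$'s loss from accepting when $m\geq i-1$ others participate is $\tfrac{3}{8}\left(G(A^{m+1})-G(A^{m})\right)$, and for the payment $\tfrac{3}{8}\left(G(A^{i})-G(A^{i-1})\right)$ to cover all such contingencies you need this to be largest at $m=i-1$. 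The paper explicitly states that this monotonicity is \emph{not} known to hold in general, and instead modifies the scheme so that consumer $i$ receives $m_{i}=\max_{k\geq i}\tfrac{3}{8}\left(G(A^{k})-G(A^{k-1})\right)$, which guarantees acceptance without any submodularity assumption; the final bound survives because the \emph{upper bound} on each increment is decreasing in $k$ even if the increments themselves need not be.

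Second, you identify the per-step inequality $k\cdot m^{\ast}(k)\leq\tfrac{3}{4}\left(\limfunc{var}[\theta_{i}]+\limfunc{var}[\varepsilon_{i}]\right)$ as ``the main obstacle'' and leave it unresolved. In fact that sharp form is not needed. The technique from the proof of Theorem \ref{lm} (bounding $G(A^{k})$ above by $\limfunc{var}[\bar{w}]$ and $G(A^{k-1})$ below via the sample-average estimator on $k-1$ signals) directly yields
\begin{equation*}
\tfrac{3}{8}\left(G(A^{k})-G(A^{k-1})\right)\leq\tfrac{3}{8}\left(\tfrac{1}{k}+\tfrac{1}{k-1}\right)\left(\limfunc{var}[\theta_{i}]+\limfunc{var}[\varepsilon_{i}]\right)\leq\tfrac{3}{4}\,\tfrac{1}{k-1}\left(\limfunc{var}[\theta_{i}]+\limfunc{var}[\varepsilon_{i}]\right),
\end{equation*}
with $k-1$ rather than $k$ in the denominator, and summing $\sum_{k}1/(k-1)$ still produces the stated $1+\log N$ bound. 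So the obstacle you flag dissolves once you accept the slightly weaker per-step constant; as written, however, your argument rests on an unproven per-step claim and an unproven monotonicity claim, so it is not yet a complete proof.
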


Under divide and conquer, the total payments to the consumers do not
converge to a finite constant as $N$ grows without bound. However, the
growth rate of these payments is far smaller than the rate at which the
producer's willingness to pay for data diverges. Therefore, regardless of
the equilibrium-selection criterion, the intermediary's per capita profits
converge to the benchmark level when anonymized consumer data are made
available.

\section{Implications for Consumer Privacy\label{limits}}

In this section, we enrich our model along several dimensions to
characterize the implications of the optimal data intermediation policy for
consumer privacy. In particular, we consider richer pricing instruments in
the product market, heterogeneous consumers, heterogeneous product
varieties, noisy information collection, and commitment power in the data
market.

\subsection{Data Anonymization and Social Efficiency\label{gasec}}

In our baseline setting, data anonymization is optimal independent of the
model parameters, such as the number of consumers or the distribution of
fundamentals and error terms. This result relies on two crucial
assumptions:\ (i) consumer are homogeneous, by which we mean that the
distributions of fundamentals $w$ and errors $e$ are symmetric, and (ii)
data sharing has unambiguous welfare effects on product market participants.
Indeed, in the model of linear price discrimination, transmitting $X_{i}$
anonymously improves consumer and social surplus, relative to complete data
intermediation.

We can generalize this insight to \emph{any arbitrary} product market
interaction beyond the linear pricing model of the previous section.
Proposition \ref{ga} generalizes the intuition behind the optimal
anonymization result in Proposition \ref{si}: it establishes that \emph{%
social surplus }is the criterion guiding the intermediary's decision to
optimally collect anonymized data.

\begin{proposition}[Social Optimality of Data Anonymization]
\label{ga}\qquad \newline
With homogeneous consumers, anonymized data intermediation is more
profitable than complete data intermediation if and only if anonymization
increases social surplus.
\end{proposition}

We note two important aspects of this result. First, it establishes the
congruence between the intermediary private objective and the social welfare
with respect to the anonymization decision \emph{only}. It does not claim
that the equilibrium information flow itself is socially efficient. Second,
the argument does not require any specific feature of the product market
interaction. As the decision between anonymization and de-anonymization
pertains precisely to the marginal value of the private information of $i$
for the prediction of the willingness to pay $w_{i}$, intermediary and
consumer $i$ can attain a socially efficient arrangement.

The result has immediate implications for how equilibrium data sharing
policies depend on the nature of the product market interaction. In
particular, Proposition \ref{ga} allows us to examine the role of richer
pricing instruments. \cite{bebm15} have shown that every feasible
combination of consumer and producer surplus is consistent with \emph{some }%
form of price discrimination. Proposition \ref{ga} shows that, if the
producer had the ability to extract all the expected surplus (given the
consumers' information), then the intermediary would find it more profitable
to collect complete, identifying data.

A canonical example where this prediction is relevant is the case of unit
demand by the consumers, where our model would predict the prevalence of
perfect price discrimination. However, to the extent to which consumers have
options to retain some surplus ex post, such as by scaling down their
purchase level as in our baseline model, then full surplus extraction would
require the producer to have access to more complex pricing mechanisms.

\subsection{Market Segmentation and Data\label{segm}}

The assumption of ex ante homogeneity among consumers has enabled us to
produce some of the central implications of social data. A more complete
description of consumer demand should introduce heterogeneity across groups
of consumers along characteristics such as location, demographics, income,
and wealth.

We now explore how these additional characteristics influence information
policy and the profits of the data intermediary. To this end, we augment the
description of consumer demand by splitting the population into $J$
homogeneous groups:%
\begin{equation*}
w_{ij}\sim F_{w,j},~e_{i,j}\sim F_{e,j},~i=1,2,...,N_{j},~j=1,...,J\text{.}
\end{equation*}

The intermediary's data inflow policy must now specify whether to anonymize
the consumers' signals across groups and within each group. However,
Proposition \ref{ga} establishes that it is always more profitable to
anonymize all signals within each group, rather than revealing the
consumers' identities.

\begin{corollary}[No Discrimination within Groups]
\label{group_discri} \quad \newline
The data policy that anonymizes all signals within each group $j=1,...,J$
and only reveals the group identity of each consumer $i$ is more profitable
than the complete data-sharing policy.
\end{corollary}

By further specifying the model, we can identify conditions under which the
data intermediary will collect and transmit group characteristics. By
collecting information about the group characteristics, the intermediary
influences the extent of price discrimination. For example, the intermediary
could anonymize all signals across groups, thus forcing the producer to
offer only a single price. Alternatively, the intermediary could allow the
producer to discriminate between two groups of consumers by recording and
transmitting the group identities. As intuition would suggest, enabling
price discrimination across groups not only allows the intermediary to
charge a higher fee to the producer but also increases the compensation owed
to consumers.

Proposition \ref{gs} below sheds light on the optimal resolution of this
trade-off. In this result, we restrict attention to the case of symmetric
groups ($N_{j}=N$ for all $j$), with the additive data structure $%
w_{i}=\theta +\theta _{i}$, and independent noise terms in the consumers'
signals.

\begin{proposition}[Segmentation]
\label{gs}~\newline
If $N$ is large enough, inducing group-level pricing is more profitable for
the intermediary than inducing uniform pricing.
\end{proposition}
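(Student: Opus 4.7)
The plan is to compare two specific anonymization policies: the policy $A_U$ that anonymizes all signals across all $J$ groups (inducing a single uniform price), and the policy $A_G$ that anonymizes signals within each group while preserving the group label (inducing $J$ group-level prices). By Proposition \ref{group_discri} and Theorem \ref{si}, neither of these is dominated by an identity-revealing policy, so the comparison between them determines which is optimal among the natural candidates.

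The first step is to compute the producer-side information gains. Under the additive structure with $w_{ij}=\theta _{j}+\theta _{ij}$, independent idiosyncratic fundamentals, and independent noise across consumers, the within-group sample mean $\bar{s}_{j}$ converges almost surely to $\theta _{j}$ as $N\rightarrow \infty $. Hence under $A_{G}$ the producer's predictor of $w_{ij}$ attains a per-consumer information content $G_{ij}(A_{G})\rightarrow \limfunc{var}[\theta _{j}]$, whereas under $A_{U}$ the producer sees only the grand mean, which converges to $\frac{1}{J}\sum_{k}\theta _{k}$ and delivers strictly less information about any individual $\theta _{j}$ (concretely, $G_{ij}(A_{U})\rightarrow \frac{1}{J}\limfunc{var}[\theta _{j}]$ when the $\theta _{j}$ are independent across groups). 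Using Proposition \ref{impact}, the producer's willingness to pay differential is $m_{0}(A_{G})-m_{0}(A_{U})=\frac{NJ}{4}\bigl(G_{ij}(A_{G})-G_{ij}(A_{U})\bigr)$, which grows linearly in $N$ with a strictly positive coefficient.

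The second step is to bound the consumer-side cost differential. By the formula $m_{ij}^{\ast }=\frac{3}{8}(G_{ij}(A)-G_{ij}(A_{-ij}))$ from (\ref{mistar}), each consumer is paid for her marginal contribution to aggregate information. Applying the argument of Theorem \ref{lm} within each group under $A_{G}$, the total compensation $N\,m_{ij}^{\ast }$ per group is bounded by a constant depending only on $\limfunc{var}[\theta _{i}]+\limfunc{var}[\varepsilon _{i}]$; summing over the $J$ groups gives an $O(1)$ bound on total consumer compensation under $A_{G}$. The same argument, applied to the pool of $NJ$ consumers under $A_{U}$, gives an $O(1)$ bound there as well. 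Therefore $\sum_{i,j}m_{ij}^{\ast }(A_{G})-\sum_{i,j}m_{ij}^{\ast }(A_{U})=O(1)$.

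Combining the two estimates, $R(A_{G})-R(A_{U})$ equals a positive linear-in-$N$ term from the producer side minus an $O(1)$ term from the consumer side; thus there exists $N^{\ast }$ such that $R(A_{G})>R(A_{U})$ whenever $N>N^{\ast }$. The main obstacle is verifying the strict inequality $G_{ij}(A_{G})>G_{ij}(A_{U})$ in the limit: this requires that the cross-group dispersion of $\theta _{j}$ cannot be fully recovered from a single grand average, which I expect to follow from a straightforward projection computation once the covariance structure of the $(\theta _{j})_{j=1}^{J}$ is pinned down; symmetry across groups and the fact that aggregation strictly loses between-group variation make the strict inequality immediate in the independent case, and a continuity argument extends it to any non-degenerate symmetric correlation structure across groups.
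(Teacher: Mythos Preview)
Your approach is essentially the paper's: both arguments hinge on (i) the producer fee under group-level anonymization $A_G$ scaling like $\frac{NJ}{4}\limfunc{var}[\theta_j]$ versus only $\frac{N}{4}\limfunc{var}[\theta_j]$ under full anonymization $A_U$, and (ii) total consumer compensation under $A_G$ being $O(1)$ via the sample-average bound of Theorem~\ref{lm}. The paper organizes this by upper-bounding $R(A_U)\le m_0(A_U)$ and lower-bounding the per-consumer profit under $A_G$ directly from the formula $\frac{3}{8}G(A_{-ij})-\frac{1}{8}G(A)$, using explicit finite-$N$ inequalities throughout; you organize it as $R=m_0-\sum m_i$ and take limits. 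The substance is the same.

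Two places where your writeup needs tightening. First, the limit statements $G_{ij}(A_G)\to\limfunc{var}[\theta_j]$ and $G_{ij}(A_U)\to\frac{1}{J}\limfunc{var}[\theta_j]$ are not enough by themselves, since you then multiply by $NJ$; you need finite-$N$ bounds with an $O(1/N)$ error (exactly what the sample-average estimator in the proof of Theorem~\ref{lm} gives, and what the paper uses). Second, your claim that ``the same argument, applied to the pool of $NJ$ consumers under $A_U$, gives an $O(1)$ bound'' is not justified as stated: Theorem~\ref{lm} assumes the additive structure $w_i=\theta+\theta_i$ with i.i.d.\ idiosyncratic terms, whereas the pooled fundamentals $w_{ij}=\theta_j+\theta_{ij}$ have within-group correlation through $\theta_j$, so the lemma does not apply verbatim. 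Fortunately you do not need this bound at all: $m_{ij}^\ast\ge 0$ already gives $R(A_U)\le m_0(A_U)$, which is precisely the upper bound the paper uses, and together with your lower bound on $R(A_G)$ the linear-in-$N$ gap follows.
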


Whereas Proposition \ref{si} stated that the intermediary will not reveal
any information about consumer identity, Proposition \ref{gs} refines that
result: if the market is sufficiently large, then the intermediary will
convey limited identity information, i.e., each consumer's group identity.
This policy allows the producer to price discriminate across, but not
within, groups. Conversely, if the producer faces few consumers and their
willingness to pay are not highly correlated, then pooling all signals
reduces the cost of sourcing the data.

The limited amount of price discrimination, which operates optimally at the
group level rather than the individual level, can explain the behavior of
many platforms. For example, Uber and Amazon claim that they do not
discriminate at the individual level, but they condition prices on location,
time, and other dimensions that capture group characteristics.

The result in Proposition \ref{gs} is perhaps the sharpest manifestation of
the value of big data. By enabling the producer to adopt a richer pricing
model, a larger database allows the intermediary to extract more surplus.
Our result also clarifies the appetite of the platforms for large datasets:
because having more consumers allows the platform to profitably segment the
market more precisely, the value of the marginal consumer $i=N$ to the
intermediary remains large even as $N$ grows. In other words, allowing the
producer to segment the market is akin to paying a fixed cost (i.e., higher
compensation to the current consumers) to access a better technology (i.e.,
one that scales more easily with $N$). Figure \ref{mar}\ illustrates this
result for an example with normally distributed fundamentals and errors.

\begin{figure}
	\label{mar}%
	\centering
	\caption{Marginal Value of an Additional Consumer}%
	\includegraphics[width=0.6\textwidth]{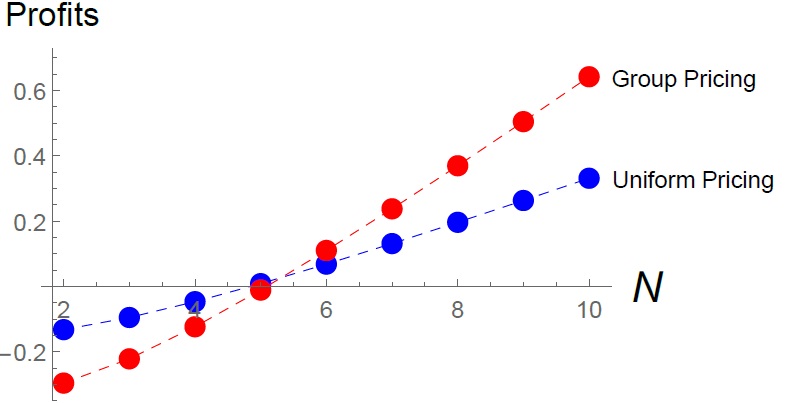}
\end{figure}


The optimality of using a richer pricing model when larger datasets are
available is reminiscent of model selection criteria under overfitting
concerns, e.g., the Akaike information criterion. In our setting, however,
the optimality of inducing segmentation is not driven by econometric
considerations. Instead, it is entirely driven by the intermediary's
cost-benefit analysis in acquiring more precise information from consumers.
As the data externality grows sufficiently strong, acquiring the data
becomes cheaper as the intermediary exploits the richer structure of
consumer demand.\footnote{\cite{oopp19} offer a demand-side explanation of a
similar phenomenon: they showed that data buyers who employ a richer pricing
model are willing to pay more for larger datasets.}

Finally, the welfare ranking of the two pricing schemes (group vs. uniform)
is a priori ambiguous. In particular, group pricing can yield lower total
surplus than uniform pricing, e.g., when consumers know their willingness to
pay and group pricing results in inefficient price discrimination. In other
words, outside of the conditions of Proposition \ref{ga} (e.g., with
heterogeneous consumers), market segmentation can be driven purely by the
data externality.

\subsection{Recommender System\label{rs}}

In our baseline model, the data shared by the intermediary are used by the
producer to set prices and by consumers to learn about their own
preferences. The first assumption is, in a sense, the worst-case scenario
for the intermediary: consider the case in which consumers' initial signals
are very precise. As price discrimination reduces total surplus, no
intermediation would be profitable without a strong, negative data
externality. Consequently, data aggregation is an essential part of the
optimal data intermediation policy in this case. In practice, however,
consumer data can also be used by the producer in surplus-enhancing ways,
for example, to facilitate targeting quality levels and other product
characteristics to the consumer's tastes.

In this section, we develop a generalization of our framework; this
generalization allows the producer to charge a unit price $p_{i}$ and to
offer a product of characteristic $k_{i}$ to each consumer. Consumers differ
both in their vertical willingness to pay and in their horizontal taste for
the product's characteristics. Consumer $i$'s utility function is given by%
\begin{equation*}
u_{i}\left( w_{i},q_{i},p_{i},k_{i},\ell _{i}\right) =(w_{i}-\left(
k_{i}-\ell _{i}\right) ^{2}-p_{i})q_{i}-q_{i}^{2}/2,
\end{equation*}%
with $w_{i}$ denoting the consumer's willingness to pay and $\ell _{i}$
denoting the consumer's ideal location or product characteristic. Both the
willingness to pay $w\in \mathbb{R}^{N}$ and the locations $\ell \in \mathbb{%
R}^{N}$ of different consumers are potentially correlated. The producer has
a constant marginal cost of quantity provision that we normalize to zero and
can freely set the product's characteristic. Therefore, the case of a common
location $\ell _{i}\equiv \ell $ for all consumers yields the baseline model
of price discrimination.

We examine the data intermediary's optimal data inflow policy, which allows
for separate aggregation policies for willingness to pay and location
information. We impose the following assumptions:\ (i) the gains from trade
under no information sharing are sufficiently large; (ii) the consumers'
fundamentals have a joint Gaussian distribution; and (iii) consumer $i$
perfectly observes $\left( w_{i},\ell _{i}\right) $. The extension to noisy
Gaussian signals is immediate. We then obtain another application of
Proposition \ref{ga}.

\begin{proposition}[Optimal Aggregation by a Recommender System]
\label{vag3}\qquad \newline
The intermediary's optimal policy collects anonymized data on the vertical
component $w_{i}$ and complete data on the horizontal component $\ell _{i}$.
\end{proposition}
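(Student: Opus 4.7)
The plan is to adapt the logic of Theorems \ref{dop} and \ref{si} along the two dimensions of consumer information separately, exploiting the fact that under joint Gaussian fundamentals and quadratic payoffs the producer's best response decomposes cleanly. Given any information structure, the producer's optimal product characteristic solves a pure prediction problem, $k_i^{\ast} = \mathbb{E}[\ell_i \mid \text{producer info}]$, and his optimal unit price takes the linear form $p_i^{\ast} = \tfrac{1}{2}\bigl(\hat{w}_i - V_{\ell,i}\bigr)$, where $\hat{w}_i$ is the posterior mean of $w_i$ and $V_{\ell,i}$ is the posterior variance of $\ell_i$ under the producer's information. The assumption that gains from trade under no sharing are large enough guarantees interior quantities throughout, so this linear-quadratic reduction is valid globally, and the intermediary's profit retains the decomposition $R = \sum_i(\Delta W_i - DE_i)$ from (\ref{RX}).

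First I would handle the horizontal dimension. Holding any policy for $w$ fixed, individual transmission of the $\ell_i$ signals shrinks $V_{\ell,i}$ and hence raises \emph{both} the producer's surplus (via higher $p_i^{\ast}$ and higher realized quantity) and each consumer's ex post matching utility, since the producer can now set $k_i$ closer to $\ell_i$. This is a pure coordination gain, so it strictly raises $\Delta W_i$. I would then check that the corresponding $\ell$-data externality is weakly positive (or at worst small relative to the surplus gain), essentially because $\ell$-information is not used to extract rents, only to match products. Thus the intermediary strictly prefers individual collection of $\ell_i$ regardless of the $w$-policy.

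Next I would fix the individual-$\ell$ policy and treat the vertical dimension by replicating the argument of Theorem \ref{si}. On the deviation path, a consumer who withholds her $w_i$-signal faces a producer whose posterior on $w_i$ is constructed from the remaining $w_{-i}$-data together with all $\ell$-data. Because $(w,\ell)$ is jointly Gaussian and exchangeable across consumer indices, the linear predictor $\mathbb{E}[w_i \mid w_{-i},\ell]$ is invariant to permutations of the $w_{-i}$-coordinates, so the off-path price and the off-path consumer utility $U_i(\cdot,\cdot_{-i})$ are identical under identity-revealing and anonymized $w_{-i}$-data. Hence $DE_i^{w}$ is unchanged by $w$-anonymization. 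On the equilibrium path, applying Corollary \ref{inef} to the vertical component (the horizontal channel already being resolved by perfect matching), anonymization strictly raises total surplus whenever the gains-from-trade condition holds, so $\Delta W_i$ strictly increases. Combining the two observations in (\ref{RX}) yields a strict profit improvement from aggregating $w$, and together with Step 1 this pins down the claimed policy.

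The hard part will be verifying that the symmetry argument in the vertical step survives once individual $\ell$-signals are in the producer's hands: since $w_i$ and $\ell_j$ may be cross-correlated, the producer's estimate of the withheld $w_i$ legitimately uses the $\ell$-data, and one must confirm that exchangeability of the pairs $(w_j,\ell_j)_{j\neq i}$ under the symmetric Gaussian prior is enough to make $\mathbb{E}[w_i \mid w_{-i},\ell]$ insensitive to anonymizing the $w$-coordinate alone. A secondary subtlety is ensuring that the $w$-anonymization decision does not inadvertently perturb $V_{\ell,i}$ in the producer's price formula; this holds because the $\ell$-policy has already been fixed at full individual revelation, decoupling the two channels in $p_i^{\ast}$.
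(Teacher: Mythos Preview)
Your overall architecture matches the paper's: write $R_i=\Delta W_i-DE_i$, argue that the anonymization choice leaves the off-path term $U_i(\cdot,\cdot_{-i})$ unchanged by symmetry, and then sign the on-path welfare effect separately for each dimension. But two pieces of your execution do not go through as written.

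\textbf{The horizontal step uses the wrong invariant.} You argue that individual $\ell$-collection raises $\Delta W_i$ and then say the ``$\ell$-data externality is weakly positive (or at worst small).'' That is not the relevant comparison: what you need is that $DE_i$ is \emph{unchanged} when you switch between aggregate and individual $\ell$-collection, not that it has a particular sign. The paper obtains this exactly as in Theorem~\ref{si}: off path, consumer $i$ has withheld, so the producer's posterior on $(w_i,\ell_i)$ is built from the symmetric data $(w_{-i},\ell_{-i})$ and is therefore invariant to permutations of either block. Hence $U_i(S_{-i})$ is identical under either $\ell$-aggregation policy, and only the on-path welfare term moves. You already run this argument cleanly for the vertical dimension; you need to run it for the horizontal one as well, rather than appealing to positivity.

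\textbf{The on-path welfare monotonicity in $\ell$ is not free.} Your claim that finer $\ell$-information is a ``pure coordination gain'' ignores a countervailing force you yourself wrote down: $p_i^\ast=\tfrac{1}{2}(\hat w_i-V_{\ell,i})$, so sharper location information \emph{raises} the price. The paper handles this by explicit Gaussian moment computation, obtaining
\[
W_i(S)-W_i(\varnothing)=-\tfrac{1}{8}\limfunc{var}\!\left[\mathbb{E}[w_i\mid S]\right]+g\!\left(\limfunc{var}\!\left[\mathbb{E}[\ell_i\mid S]\right]\right),
\]
with $g(x)=\tfrac{11}{8}x^{2}+\bigl(\tfrac{3}{4}\mu-\tfrac{11}{4}(\sigma_\tau^{2}+\sigma_{\tau_i}^{2})\bigr)x$. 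The ``gains from trade sufficiently large'' assumption is exactly the condition $3\mu>11(\sigma_\tau^{2}+\sigma_{\tau_i}^{2})$ that makes $g$ increasing on the relevant range. Without computing at least this much, you cannot conclude that individual $\ell$-collection raises $W_i$; invoking Corollary~\ref{inef} only covers the $w$-coefficient $-1/8$, not the location side. Note also that this computation factors the $(w,\ell)$ cross-term, which implicitly treats $w$ and $\ell$ as independent; your worry about cross-correlation is well placed, and the clean separation you want in the last paragraph relies on this.
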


Therefore, the recommender system enables the producer to offer targeted
product characteristics that match $k_{i}$ to $\ell _{i}$ as closely as
possible. However, the system does not allow for personalized pricing. The
logic is once again given by the intermediary's sources of profits, i.e.,
the contribution to social welfare $\Delta W$ and the data externality $DE$.
Because the data externalities do not depend on the level of data
anonymization, the intermediary chooses to aggregate the vertical dimension
of consumer data, thereby reducing the total surplus if transmitted to the
producer. Conversely, because the distance between a consumer's ideal
product and the firm's offer $\left( k_{i}-\ell _{i}\right) ^{2}$\ shifts
the consumer's demand function down, the intermediary allows for the
personalization of product characteristics.

\subsection{Intermediary with Commitment\label{iwc}}

We have assumed thus far that the intermediary cannot refrain from selling
information to the producer and cannot sell any acquired data inflow back to
the consumers. The latter assumption entails no loss: consumers know that
the intermediary sells the data to the producer, and therefore expect to
receive all available information regardless of their participation
decisions (Proposition \ref{tan}), because they know that revealing this
information maximizes the producer's fee.

The no commitment assumption reflects the substantial control that large
online platforms have over the use of the data and the opacity with which
the data outflow is linked to the data inflow. In other words, it is
difficult to ascertain how any given data input informs an intermediary's
data output. Nonetheless, it is useful to consider the implications of the
data intermediary's ability to commit to a certain data policy, especially
in light of the welfare properties of data sharing discussed above. To that
end, suppose the data intermediary offered the consumers contracts that
specify a data inflow \emph{and }a data outflow policy.

Through richer contracts, the data intermediary can offer consumers privacy
guarantees. In particular, the intermediary can implement the socially
efficient data-sharing policy, which consists of sharing all signals among
all the consumers who accept the contract and not sharing any data at all
with the producer (Corollary \ref{inef}). In exchange for this commitment,
the data intermediary requests compensation from the consumers. In turn,
consumers are willing to pay a positive price for these data, and hence the
socially optimal data sharing is always profitable.\footnote{%
This environment with commitment is related to the analysis in \cite{lizz99}
but has a number of distinct features. First, in \cite{lizz99}, the private
information is held by a single agent, and multiple downstream firms compete
for the information and for the object offered by the agent. Second, the
privately informed agent enters the contract after she has observed her
private information. The shared insight is that the intermediary \emph{with }
commitment power might be able to extract a rent without any influence on
the efficiency of the allocation.}

However, the equilibrium outcome under these stronger commitment assumptions
fails to capture the role of large online platforms. Even though there are
examples in which consumers pay a positive price to access tailored,
non-sponsored recommendations, data intermediaries choose to monetize the
producers' side of their platform much more frequently. Moreover, the
socially efficient policy need not maximize the intermediary's profits. For
example, if fundamentals are perfectly correlated and signals are
arbitrarily precise, the intermediary's profits from the first-best policy
are nil. Under these conditions, an intermediary with commitment (or even an
intermediary who sells its own products) would not monetize their
information by selling it to consumers\textbf{.}

It is beyond the scope of this article to characterize the optimal
commitment policy for any initial data structure, but the data externality
clearly remains a key driver of the equilibrium allocation of information
even under stronger commitment assumptions.

\section{Conclusion\label{con}}

We have explored the trading of information between data intermediaries with
market power and multiple consumers with correlated preferences. The data
externality that we have uncovered strongly suggests that levels of
compensation close to zero can induce an individual consumer in a large
market to relinquish precise information about her preferences. This finding
holds even if the consumer's data are later sold to a firm that seeks to
extract their surplus. Thus, giving consumers control rights over their data
(a pillar of privacy regulation such as the EU General Data Protection
Regulation or the California Privacy Rights Act) is insufficient to bring
about the efficient use of information.

Our results regarding the aggregation of consumer information further
suggest that privacy regulations must move away from concerns over
personalized prices at the individual level. Most often, firms do not set
prices in response to individual-level characteristics. Instead,
segmentation of consumers occurs at the group level (e.g., as in the case of
Uber) or at the temporal and spatial levels (e.g., as in the case of Staples
and Amazon). Thus, our analysis points to the significant welfare effects of
group-level and market-level, dynamic prices that react in real time to
changes in demand.

A possible mitigator of the consequences of data externalities---echoed in 
\cite{powe18}---consists of facilitating the formation of consumer groups or
unions to internalize the data externality when bargaining with powerful
intermediaries, such as large online platforms.\footnote{%
This result echoes the claim in \cite{zubo19} that \textquotedblleft privacy
is a public issue.\textquotedblright} A different regulatory solution is
based on \emph{privacy managers}, such as internet browsers with
heterogeneous privacy settings that compete for consumers' default choice.
Yet another solution---suggested by \cite{romer19}---consists of making the
data outflow costly for the intermediary by, for example, taxing targeted
advertising. In our model, taxing the data outflow will limit efficient and
inefficient intermediation alike but will affect the intermediary's choice
of data policy under the assumptions of Section \ref{iwc}.

Finally, our data intermediary collected and redistributed the consumer data
but played no role in the interaction between the consumers and the
producer. In contrast, a consumer can often access a given producer only
through a data platform.\footnote{%
Product data platforms, such as Amazon, Uber and Lyft, acquire individual
data from the consumer through the consumers' purchase of services and
products. Social data platforms, such as Google and Facebook, offer data
services to individual users and sell the information to third parties, who
mostly purchase the information in the form of targeted advertising space.
In terms of our model, a product data platform combines the roles of data
intermediation and product pricing.} Many platforms can then be thought of
as auctioning access to the consumer. The data platform provides the bidding
producers with additional information that they can use to tailor their
interactions with consumers. Social data platforms thus trade individual
consumer information for services rather than money. In these markets, the
data externality manifests itself in the quality of the services offered and
in the extent of the consumers' engagement.\newpage

\section{Appendix A}

\begin{proof}[Proof of Lemma \protect\ref{tan}]
Under an arbitrary data-inflow policy $X$, each consumer $i$ observes a
noisy signal $S_{i}$ of her own willingness to pay and sends a potentially
noisier signal $X_{i}$ to the intermediary.\footnote{%
Under complete data sharing, for example, the consumer either reports $%
X_{i}=S_{i}$ or refuses to participate, so that $X_{i}$ has infinite
variance (or the corresponding $\sigma $-algebra is the empty set).}
Consumer $i$ observes both $S_{i}$ and $X_{i}$. Given the data inflow $X$,
the intermediary chooses an outflow policy, namely, the signal $%
Y_{0}=Y_{0}(X)$ sent to the producer and the signal $Y_{i}=Y_{i}(X)$ sent to
each consumer $i$. The intermediary then chooses a policy $Y$ that maximizes
the producer's ex ante expected payoff, which it fully extracts through the
fee $m_{0}$. We let the intermediary selects their favorite equilibrium in
the ensuing game.

For any outflow policy $Y=(Y_{0},Y_{i})$, denote an induced signaling
equilibrium as $\bar{\gamma}=(\bar{q}_{i},\bar{p})$, where $\bar{p}%
:Y_{0}\rightarrow R^{+}$ is the pricing strategy of the producer and $\bar{q}%
_{i}:Y_{i}\times S_{i}\times X_{i}\times R^{+}\rightarrow R^{+}$ is the
demand function of consumer $i$. We first argue that there exists an
equilibrium $\gamma ^{\ast }$ under the outflow policy $(\bar{p}\circ
Y_{0},(Y_{i},\bar{p}\circ Y_{0}))$ that yields a weakly higher ex ante
payoff for the producer. In this new outflow policy, instead of revealing $%
Y_{0}$ to the producer, the intermediary directly recommends the price $\bar{%
p}(Y)$ which coincides with the equilibrium pricing strategy in $\bar{\gamma}
$, and reveals to consumer $i$ both $Y_{i}$ and the price recommendation.

On the equilibrium path of $\bar{\gamma}$, consumer $i$ updates her
posterior $\mu (Y_{i},S_{i},\bar{p}_{i}(Y))$ using $Y_{i}$, her own private
signal $S_{i}$, the report $X_{i}$, and the observed price $p_{i}$. We
denote the consumer's demand as a function of her posterior beliefs and the
price as 
\begin{equation*}
q_{i}(\mu (Y_{i},S_{i},X_{i},p_{i}),p_{i}).
\end{equation*}%
The ex ante profit of the producer from consumer $i$ is given by 
\begin{equation*}
\mathbb{E}\left[ \bar{p}_{i}q_{i}(\mu (Y_{i},S_{i},X_{i},\bar{p}_{i}),\bar{p}%
_{i})\right] .
\end{equation*}

Now consider the new outflow policy $(\bar{p}\circ Y_{0},(Y_{i},\bar{p}\circ
Y_{0}))$. Under this policy, there exists an equilibrium where consumer $i$
forms her demand using the data outflow $(Y_{i},p^{\ast }\circ Y_{0})$ from
the intermediary as well as her own signal $S_{i}$ and the data inflow $%
X_{i} $. Because consumer $i$ knows everything that the producer knows, the
price charged by the producer no longer influences the consumer's posterior,
which therefore coincides with the on-path beliefs in the original
equilibrium $\bar{\gamma}$, i.e.,%
\begin{equation*}
\mu (Y_{i},S_{i},X_{i},\bar{p}_{i}(Y_{0})).
\end{equation*}%
Knowing this, the producer maximizes his ex ante payoff by choosing a
pricing strategy $\hat{p}(\cdot )$ as a function of his signal $\bar{p}\circ
Y_{0}$. Thus the producer's equilibrium profit is given by 
\begin{equation*}
\max_{\hat{p}}\hat{p}\big(\bar{p}\circ Y_{0}\big)q_{i}\Big(\mu
(Y_{i},S_{i},X_{i},\bar{p}_{i}(Y_{0})),\hat{p}\big(\bar{p}\circ Y_{0}\big)%
\Big).
\end{equation*}%
Clearly \textquotedblleft following the intermediary's
recommendation,\textquotedblright\ i.e., setting $\hat{p}(p)=p$ is a
feasible strategy that yields the same payoff as in the old equilibrium $%
\bar{\gamma}$. Consequently, the producer's equilibrium payoff is weakly
higher than in $\bar{\gamma}$.
\end{proof}

\begin{proof}[Proof of Proposition \protect\ref{impact}]
For any offered price $p_{i}$, consumer $i$ demands the quantity%
\begin{equation*}
q_{i}=\mathbb{E}[w_{i}|(S_{i},Y_{i})]-p_{i}.
\end{equation*}%
The producer finds it optimal to set the following price 
\begin{equation*}
p_{i}=\frac{\mathbb{E}[w_{i}|Y_{0}]}{2}.
\end{equation*}%
Recall that the consumer always has superior information so that $Y_{0}$ is
measurable with respect to $Y_{i}$. The profit of the producer is given by 
\begin{align}
\Pi _{i}((S_{i},Y_{i}),Y_{0})& =\mathbb{E}\left[ \frac{\mathbb{E}%
[w_{i}|Y_{0}]}{2}\left( \mathbb{E}[w_{i}|(S_{i},Y_{i})]-\frac{\mathbb{E}%
[w_{i}|Y_{0}]}{2}\right) \right]   \notag \\
& =\frac{\mathbb{E}\left[ (\mathbb{E}[w_{i}|Y_{0}])^{2}\right] }{4}=\frac{%
\limfunc{var}[\mathbb{E}[w_{i}|Y_{0}]]+\mathbb{E}[w_{i}]^{2}}{4}\\
&=\frac{1}{4}%
G\left( Y_{0}\right) +  \Pi_{i}\left(
S_{i},\varnothing \right) ,  \notag
\end{align}%
where the outside expectation represents integration over the whole
probability space. The expected consumer surplus is given by 
\begin{eqnarray}
U_{i}((S_{i},Y_{i}),Y_{0}) &=&\mathbb{E}\left[ \left( w_{i}-\frac{\mathbb{E}%
[w_{i}|Y_{0}]}{2}\right) \left( \mathbb{E}[w_{i}|(S_{i},Y_{i})]-\frac{%
\mathbb{E}[w_{i}|Y_{0}]}{2}\right) \right]   \notag \\
&&-\frac{1}{2}\mathbb{E}\left[ \left( \mathbb{E}[w_{i}|(S_{i},Y_{i})]-\frac{%
\mathbb{E}[w_{i}|Y_{0}]}{2}\right) ^{2}\right]  \\
&=&\frac{1}{2}\mathbb{E}[(\mathbb{E}[w_{i}|(S_{i},Y_{i})])^{2}-\frac{3}{4}(%
\mathbb{E}[w_{i}|Y_{0}])^{2}],  \notag \\
&=&\frac{1}{2}\left( G\left( (S_{i},Y_{i})\right) -G\left( S_{i}\right)
\right) -\frac{3}{8}G\left( Y_{0}\right) + U_{i}\left(
S_{i},\varnothing \right)    \label{consurplus}
\end{eqnarray}%
Finally, the impact on total surplus is given by the sum of the two effects: 
\begin{equation*}
W_{i}\left( (S_{i},Y_{i}),Y_{0}\right) -W_{i}\left( S_{i},\varnothing
\right) =\frac{1}{2}\left( G\left( (S_{i},Y_{i})\right) -G\left(
S_{i}\right) \right) -\frac{1}{8}G\left( Y_{0}\right) ,
\end{equation*}%
which completes the proof.
\end{proof}

\begin{proof}[Proof of Proposition \protect\ref{dop}]
By Lemma \ref{tan}, it is without loss of generality to assume the producer
receives a signal $Y,$ and the consumer receives a signal $(Y_{i},Y)$. Thus,
we can focus on equilibria where prices have no signaling effect. These
equilibria coincide with those described in Proposition \ref{impact}. As we
have shown there, the profit of the producer is: 
\begin{equation*}
\mathbb{E}\left[ \frac{\mathbb{E}[w_{i}|Y]}{2}\left( \mathbb{E}[w_{i}|Y\cup
Y_{i},S_{i},X_{i}]-\frac{\mathbb{E}[w_{i}|Y]}{2}\right) \right] =\frac{%
\mathbb{E}\left[ (\mathbb{E}[w_{i}|Y])^{2}\right] }{4}=\frac{\limfunc{var}[%
\mathbb{E}[w_{i}|Y]]+\mathbb{E}[w_{i}]^{2}}{4}.
\end{equation*}%
Therefore it is optimal to maximize $\limfunc{var}[\mathbb{E}[w_{i}|Y]]$,
which is achieved by setting $Y=X$. Hence,$\ $the intermediary reveals all
information collected $\left( Y=X\right) $ both to the producer and to
consumer $i$.\bigskip
\end{proof}

\begin{proof}[Proof of Corollary \protect\ref{IbP}]
When fundamentals $w_{i}$ are perfectly correlated, 
\begin{align*}
\mathbb{E}[w_{i}|S]& =\mathbb{E}[\theta |S]=\mathbb{E}[\theta
|S_{1},...,S_{N}], \\
\mathbb{E}[w_{i}|S_{-i}]& =\mathbb{E}[\theta |S_{-i}], \\
\limfunc{var}[\mathbb{E}[\theta |S_{-i}]]& =\limfunc{var}[\mathbb{E}[\theta
|S_{1},...,S_{N-1}]].
\end{align*}%
Under our symmetry assumption, the variance of the posterior expectation of
the common willingness to pay $\limfunc{var}[\mathbb{E}[\theta |S_{1,..,N}]]$
can be written as a function of $N$. Now we argue that $\limfunc{var}[%
\mathbb{E}[\theta |S_{1,..,N}]]$ is increasing in $N$. We first define $%
g(S_{1,...,N-1})\triangleq \mathbb{E}[\theta |S_{1,..,N-1}]$. Then,
according to Lemma \ref{statis} below, we have 
\begin{align*}
\limfunc{var}[\mathbb{E}[\theta |S_{1,..,N}]]& =\max_{f\in L^{2}}~ \limfunc{var}[\theta ]-%
\mathbb{E}[(\theta -g(S_{1,...,N}))^{2}], \\
& \geq \max_{f\in L^{2}}~\limfunc{var}[\theta ]-\mathbb{E}[(\theta
-g(S_{1,...,N-1}))^{2}], \\
& =\limfunc{var}[\mathbb{E}[\theta |S_{1,..,N-1}]].
\end{align*}%
The sequence $\limfunc{var}[\mathbb{E}[\theta |S_{1,..,N}]]$ is increasing
and bounded. Therefore, it converges: 
\begin{equation*}
\lim_{N\rightarrow \infty }G(S)=\lim_{N\rightarrow \infty }G(S_{-i}),
\end{equation*}%
and intermediation is then profitable: 
\begin{equation*}
\lim_{N\rightarrow \infty }\frac{R(S)}{N}=\frac{1}{4}\lim_{N\rightarrow
\infty }G(S)>0.
\end{equation*}%
In the limit for $N\rightarrow \infty $, the data externality and the
consumer surplus are given by 
\begin{align*}
\lim_{N\rightarrow \infty }U_{i}(S,S)-U_{i}(S_{i},\varnothing )&
=\lim_{N\rightarrow \infty }\mathbb{E}[\frac{1}{8}(\mathbb{E}[w_{i}|S])^{2}-%
\frac{1}{2}(\mathbb{E}[w_{i}|S_{i}]-\mathbb{E}[w_{i}])^{2}-\frac{1}{8}%
\mathbb{E}[w_{i}]^{2}] \\
& =\lim_{N\rightarrow \infty }\frac{1}{8}\limfunc{var}[\mathbb{E}[w_{i}|S]]-%
\frac{1}{2}\limfunc{var}[\mathbb{E}[w_{i}|S_{i}]], \\
\lim_{N\rightarrow \infty }DE_{i}(S)& =\lim_{N\rightarrow \infty }\frac{1}{2}%
\limfunc{var}[\mathbb{E}[w_{i}|S]]-\frac{1}{2}\limfunc{var}[\mathbb{E}%
[w_{i}|S_{i}]]-\frac{3}{8}\limfunc{var}[\mathbb{E}[w_{i}|S_{-i}]] \\
& =\lim_{N\rightarrow \infty }\frac{1}{8}\limfunc{var}[\mathbb{E}[w_{i}|S]]-%
\frac{1}{2}\limfunc{var}[\mathbb{E}[w_{i}|S_{i}]].
\end{align*}%
Therefore, when the initial noise is sufficiently small (i.e., when $%
\limfunc{var}[\mathbb{E}[w_{i}|S_{i}]]$ is close to $\limfunc{var}[w_{i}]$),
the data externality is negative and data sharing hurts consumers.
\end{proof}

\begin{proof}[Proof of Corollary \protect\ref{EbU}]
Because $w_{i}$ is independent from the other consumers' signals, we have $%
\limfunc{var}[\mathbb{E}[w_{i}|S_{-i}]]=0$. Thus, intermediation is always
unprofitable, and the data externality is always positive, 
\begin{align*}
R(S)& =-\frac{N}{8}\limfunc{var}[\mathbb{E}[w_{i}|S]]<0, \\
DE_i(S)& =\frac{1}{2}(\limfunc{var}[\mathbb{E}[w_{i}|S]]-\limfunc{var}[\mathbb{%
E}[w_{i}|S_{i}]] )\geq 0.
\end{align*}%
Finally, for the results on consumer surplus, we turn to Lemma \ref{statis}.
In particular, we know 
\begin{align*}
\limfunc{var}[\mathbb{E}[w_{i}|S]]& =\limfunc{var}[w_{i}]-\mathbb{E}[(w_{i}-%
\mathbb{E}[w_{i}|S])^{2}], \\
& \geq \limfunc{var}[w_{i}]-\mathbb{E}[(w_{i}-(s_{i}-\frac{1}{N-1}\Sigma
_{j\neq i}s_{j}))^{2}], \\
& =\limfunc{var}[\theta _{i}]-\mathbb{E}[(\theta _{i}-(\theta
_{i}+\varepsilon -\frac{1}{N-1}\Sigma _{j\neq i}\theta _{j}-\varepsilon
))^{2}], \\
& =\limfunc{var}[\theta _{i}]-\mathbb{E}[(\frac{1}{N-1}\Sigma _{j\neq
i}\theta _{j})^{2}]=\frac{N-2}{N-1}\limfunc{var}[\theta _{i}]\rightarrow 
\limfunc{var}[w_{i}].
\end{align*}%
Thus we obtain 
\begin{equation*}
\lim_{N\rightarrow \infty }U_{i}(S,S)-U_{i}(S_{i},\varnothing )=\frac{1}{8}%
\limfunc{var}[w_{i}]-\frac{1}{2}\limfunc{var}[\mathbb{E}[w_{i}|S_{i}]].
\end{equation*}%
When $\sigma $ is sufficiently large, so that $\limfunc{var}[\mathbb{E}%
[w_{i}|S_{i}]]$ is close to $0$, intermediation increases consumer surplus.
\end{proof}

The proof of Proposition \ref{prof} follows from expressions (\ref{mistar})
and (\ref{revx}) in the text.\bigskip

\begin{proof}[Proof of Proposition \protect\ref{si}]
In the main text, the data inflow from consumer $i$ is given by $X_{i}=S_{i}$
(under complete sharing) and we compare it with $X_{i}^{\ast }=\delta \left(
S_{i}\right) $ (under anonymization). Note that $(S_{i},X_{i})_{i}$ in this
case is symmetrically distributed, i.e., its joint density is unchanged
under permutations of indices. Here we prove a slightly more general version
of the result by allowing an arbitrary information inflow $X$ such that $%
(S_{i},X_{i})_{i}$ is symmetrically distributed.\footnote{%
For example, in Section \ref{idd} $X_{i}$ might be a noisier signal of $%
S_{i} $.} We assume that apart from the private signal $S_i$ and information outflow $Y_i$ provided by the intermediary, consumer $i$ also observes her own data inflow $X_i$. This assumption is needed because information set $%
(S_{i},X_{i},X_{-i}^{\ast })$ and $(S_{i},X_{i},X^{\ast })$ are equivalent
by construction, but $(S_{i},X_{-i}^{\ast })$ and $(S_{i},X^{\ast })$ maybe
not. Note that when we restrict to the less general case (when $X_{i}=S_{i}$%
), the latter holds automatically, so we do not need this assumption.

For any fixed inflow policy $X$, we refer to $p_{-i}$ as the off-path price
charged to consumer $i$ when she does not accept the intermediary's
contract, and to $p_{i}$ as the on-path price charged to consumer $i$. Now
consider another inflow policy $X^{\ast }$ identical to $X$ up to a random
permutation of the consumers' identities. Under this scheme, we refer to $%
p_{-i}^{\ast }$ as the off-path price for consumer $i$, and to $p_{i}^{\ast
} $ as the on-path price for consumer $i$.

We first argue that $p_{-i}=p_{-i}^{\ast }$ for any realization of $W,S,X$.
To do so, let us calculate consumer $i$'s posterior about $W_{i}$ under each
inflow policy. Under the non-anonymized scheme, the posterior distribution
of consumer $i$'s willingness to pay is given by 
\begin{align*}
f_{i}(W_{i}& =w_{i}|S_{i}=s_{i},X_{i}=x_{i},X=x) \\
& =\frac{\int f(W_{i}=w_{i},W_{-i}=w_{-i}^{\prime
},S_{i}=s_{i},S_{-i}=s_{-i}^{\prime },X_{i}=x_{i},X_{-i}=x_{-i})\text{d}%
s_{-i}^{\prime }\text{d}w_{-i}^{\prime }}{\int f(W=w^{\prime
},S_{i}=s_{i},S_{-i}=s_{-i}^{\prime },X_{i}=x_{i},X_{-i}=x_{-i})\text{d}%
s_{-i}^{\prime }\text{d}w^{\prime }}.
\end{align*}

Recall from Proposition \ref{dop} that the intermediary's optimal data
outflow policy consists of revealing to the consumers all the available
information, even if the consumer refuses to participate. When the data is
anonymized, because consumer $i$ knows her own report $X_{i}$, the data
outflow reveals to her the vector of reports $X_{-i}$ without knowing who
generated each one. We now define $\delta \in S^{n-1}$ as permutation of
consumer indices. Consumer $i$'s posterior distribution over her willingness
to pay $w_{i}$ is now given by 
\begin{equation*}
f_{i}(W_{i}=w_{i}|S_{i}=s_{i},X_{i}=x_{i},X_{-i}^{\ast }=x_{-i}).
\end{equation*}%
For notational simplicity, we use $Pr$ to denote both probability and the proper marginal density. Then the posterior can be rewritten as
\begin{align*}
& \frac{\Pr (W_{i}=w_{i},S_{i}=s_{i},X_{i}=x_{i},X_{-i}^{\ast }=x_{-i})}{\Pr (S_{i}=s_{i},X_{i}=x_{i},X_{-i}^{\ast }=x_{-i})} \\
& =\frac{\Sigma _{\delta \in S^{n-1}}\Pr (\delta
,W_{i}=w_{i},S_{i}=s_{i},X_{i}=x_{i},X_{-i}=x_{\delta (-i)})}{\Sigma
_{\delta \in S^{n-1}}\Pr (\delta ,S_{i}=s_{i},X_{i}=x_{i},X_{-i}=x_{\delta
(-i)})} \\
& =\frac{\Sigma _{\delta \in S^{n-1}}\Pr (\delta )\Pr
(W_{i}=w_{i},S_{i}=s_{i},X_{i}=x_{i},X_{-i}=x_{\delta (-i)})}{\Sigma
_{\delta \in S^{n-1}}\Pr (\delta
)Pr(S_{i}=s_{i},X_{i}=x_{i},X_{-i}=x_{\delta (-i)})}.
\end{align*}%
Because of the symmetry assumption, we know that 
\begin{align*}
& \Pr (W_{i}=w_{i},S_{i}=s_{i},X_{i}=x_{i},X_{-i}=x_{\delta (-i)}) \\
& =\int f(W_{i}=w_{i},W_{-i}=w_{-i}^{\prime
},S_{i}=s_{i},S_{-i}=s_{-i}^{\prime },X_{i}=x_{i},X_{-i}=x_{\delta (-i)})%
\text{d}s_{-i}^{\prime }\text{d}w_{-i}^{\prime } \\
& =\int f(W_{i}=w_{i},W_{-i}=w_{\delta ^{-1}(-i)}^{\prime
},S_{i}=s_{i},S_{-i}=s_{\delta ^{-1}(-i)}^{\prime
},X_{i}=x_{i},X_{-i}=x_{-i})\text{d}s_{-i}^{\prime }\text{d}w_{-i}^{\prime }
\\
& =\int f(W_{i}=w_{i},W_{-i}=w_{\delta ^{-1}(-i)}^{\prime
},S_{i}=s_{i},S_{-i}=s_{\delta ^{-1}(-i)}^{\prime
},X_{i}=x_{i},X_{-i}=x_{-i})\text{d}s_{\delta ^{-1}(-i)}^{\prime }\text{d}%
w_{\delta ^{-1}(-i)}^{\prime } \\
& =\Pr (W_{i}=w_{i},S_{i}=s_{i},X_{i}=x_{i},X_{-i}=x_{-i}).
\end{align*}%
For the same reason, we also have 
\begin{equation*}
\Pr (S_{i}=s_{i},X_{i}=x_{i},X_{-i}=x_{\delta (-i)})=\Pr
(S_{i}=s_{i},X_{i}=x_{i},X_{-i}=x_{-i}).
\end{equation*}%
Thus the posterior of consumer $i$ can be simplified as: 
\begin{align*}
f_{i}(W_{i}& =w_{i}|S_{i}=s_{i},X_{i}=x_{i},X_{-i}^{\ast }=x_{-i}) \\
& =\frac{\Sigma _{\delta \in S^{n-1}}\Pr (\delta )\Pr (W_{i}=w_{i},S_{i}=s_{i},X_{i}=x_{i},X_{-i}=x_{-i})}{\Sigma _{\delta \in
S^{n-1}}\Pr (\delta )Pr(S_{i}=s_{i},X_{i}=x_{i},X_{-i}=x_{-i})} \\
& =\frac{\Sigma _{\delta \in S^{n-1}}\frac{1}{|S^{n-1}|}\Pr (W_{i}=w_{i},S_{i}=s_{i},X_{i}=x_{i},X_{-i}=x_{-i})}{\Sigma _{\delta \in
S^{n-1}}\frac{1}{|S^{n-1}|}\Pr (S_{i}=s_{i},X_{i}=x_{i},X_{-i}=x_{-i})} \\
& =f_{i}(W_{i}=w_{i}|S_{i}=s_{i},X_{i}=x_{i},X_{-i}=x_{-i}).
\end{align*}%
We have therefore proved that consumer $i$ has the same posterior about her
willingness to pay $w_{i}$ for any realization of $W,S,X$ irrespective of
whether the data are anonymized of not. Furthermore, this holds both on and
off the path of play.

Next, we show that the producer also has the same posterior about $W_{i}$
for any realization of $W,S,X$ when consumer $i$ refuses to report. Under
the non-anonymized scheme, the posterior density is given by: 
\begin{equation*}
f_{i}(W_{i}=w_{i}|X=x)=\frac{\int f(W_{i}=w_{i},W_{-i}=w_{-i}^{\prime
},S=s^{\prime },X_{i}=x_{i},X_{-i}=x_{-i})\text{d}s^{\prime }\text{d}%
w_{-i}^{\prime }}{\int f(W=w^{\prime },S=s^{\prime },X=x_{i},X_{-i}=x_{-i})%
\text{d}s^{\prime }\text{d}w^{\prime }}.
\end{equation*}%
Under the anonymized scheme, the posterior density is given by 
\begin{equation*}
f_{i}(W_{i}=w_{i}|X^{\ast }=x)=\frac{\Sigma _{\delta \in S^{n-1}}\Pr (\delta
)\Pr (W_{i}=w_{i},X=\delta (x))}{\Sigma _{\delta \in S^{n-1}}\Pr (\delta
)\Pr (X=\delta (x))}
\end{equation*}%
By the earlier argument, we can simplify it as follows: 
\begin{equation*}
\frac{\Sigma _{\delta \in S^{n-1}}\Pr (\delta )\Pr (W_{i}=w_{i},X=x)}{\Sigma
_{\delta \in S^{n-1}}\Pr (\delta )\Pr (X=x)}=f_{i}(W_{i}=w_{i}|X=x)
\end{equation*}%
Because the posteriors for both parties are the same for any realization, so
is the price, and hence the welfare impact of information

The profit of the intermediary from consumer $i$'s data under inflow policy $%
X$ is given by 
\begin{equation*}
R_{i}\left( X\right) =\Pi (X,X)-\Pi (S_{i},\varnothing
)-U_{i}((S_{i},X_{-i}),X_{-i})+U_{i}((S_{i},X),X).
\end{equation*}%
We have argued that consumer surplus off the path is the same: 
\begin{equation*}
U_{i}((S_{i},X_{-i}),X_{-i})=U_{i}((S_{i},X^{\ast }_{-i}),X_{-i}^{\ast }).
\end{equation*}%
We now turn to the last term---the impact on social welfare on the path of
play: 
\begin{align*}
& \Pi ((S_{i},X),X)+U_{i}((S_{i},X),X) \\
& =\frac{1}{2}\mathbb{E}[(\mathbb{E}[w_{i}|S_{i},X_{i},X]-\mathbb{E}%
[w_{i}|X])^{2}+\frac{1}{4}(\mathbb{E}[w_{i}|X])^{2}]+\frac{\limfunc{var}[%
\mathbb{E}[w_{i}|X]]}{4} \\
& =\frac{1}{2}\limfunc{var}[\mathbb{E}[w_{i}|S_{i},X_{i},X]]-\frac{1}{8}%
\limfunc{var}[\mathbb{E}[w_{i}|X]].
\end{align*}%
Recall that consumer $i$ has the same on path posterior under two different
scheme. Therefore, the difference in the intermediary's profits under the
two policies reduces to 
\begin{align*}
& \frac{1}{2}\limfunc{var}[\mathbb{E}[w_{i}|S_{i},X_{i},X]]-\frac{1}{8}%
\limfunc{var}[\mathbb{E}[w_{i}|X]]-\frac{1}{2}\limfunc{var}[\mathbb{E}%
[w_{i}|S_{i},X_{i},X^{\ast }]]+\frac{1}{8}\limfunc{var}[\mathbb{E}%
[w_{i}|X^{\ast }]] \\
& =-\frac{1}{8}\limfunc{var}[\mathbb{E}[w_{i}|X]]+\frac{1}{8}\limfunc{var}[%
\mathbb{E}[w_{i}|X^{\ast }]]\leq 0.
\end{align*}%
Therefore, anonymization is more profitable than complete sharing, and
strictly so whenever anonymization makes the estimation less precise.
\end{proof}

In the remainder of the Appendix, we often make use of the following
classical result in statistics, which we state as a lemma without
proof---the result is an immediate consequence of the fact that $\mathbb{E}%
[X|Y]$ is the projection of $X$ on $\mathcal{F}(Y)$ in $L^{2}$ space.

\begin{lemma}
\label{statis}Let $W$ and $Y$ be two random variables. Then it holds that 
\begin{equation*}
\limfunc{var}[\mathbb{E}[W|Y]]=\limfunc{var}[W]-\mathbb{E}[(W-\mathbb{E}
[W|Y])^{2}]\leq \limfunc{var}[W],
\end{equation*}
and 
\begin{equation*}
\text{ }\mathbb{E}[(W-\mathbb{E}[W|Y])^{2}]\leq \mathbb{E}
[(W-f(Y))^{2}],\quad \forall f\in L^{2}.
\end{equation*}
\end{lemma}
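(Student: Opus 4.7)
The plan is to establish both claims by direct computation using the tower property of conditional expectation together with the $\mathcal{F}(Y)$-measurability of $\mathbb{E}[W\mid Y]$. Geometrically, the two claims are respectively the Pythagorean identity and the minimizing property of the orthogonal projection of $W$ onto the closed subspace of $\mathcal{F}(Y)$-measurable elements of $L^2$; but a purely algebraic argument keeps the proof short and self-contained.

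For the first identity, I would expand the mean-squared error as
\begin{equation*}
\mathbb{E}\bigl[(W-\mathbb{E}[W\mid Y])^{2}\bigr]=\mathbb{E}[W^{2}]-2\,\mathbb{E}\bigl[W\,\mathbb{E}[W\mid Y]\bigr]+\mathbb{E}\bigl[(\mathbb{E}[W\mid Y])^{2}\bigr].
\end{equation*}
The middle term simplifies by the tower property: since $\mathbb{E}[W\mid Y]$ is $\mathcal{F}(Y)$-measurable, $\mathbb{E}[W\,\mathbb{E}[W\mid Y]\mid Y]=(\mathbb{E}[W\mid Y])^{2}$, so $\mathbb{E}[W\,\mathbb{E}[W\mid Y]]=\mathbb{E}[(\mathbb{E}[W\mid Y])^{2}]$. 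Substituting and then subtracting $\mathbb{E}[W]^{2}=\mathbb{E}[\mathbb{E}[W\mid Y]]^{2}$ from both $\mathbb{E}[W^{2}]$ and $\mathbb{E}[(\mathbb{E}[W\mid Y])^{2}]$ delivers $\limfunc{var}[W]-\limfunc{var}[\mathbb{E}[W\mid Y]]=\mathbb{E}[(W-\mathbb{E}[W\mid Y])^{2}]$. The inequality $\limfunc{var}[\mathbb{E}[W\mid Y]]\le\limfunc{var}[W]$ is then immediate from the nonnegativity of the right-hand side.

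For the second identity, for any $f\in L^{2}$ I would split $W-f(Y)=(W-\mathbb{E}[W\mid Y])+(\mathbb{E}[W\mid Y]-f(Y))$ and expand the square to get
\begin{equation*}
\mathbb{E}\bigl[(W-f(Y))^{2}\bigr]=\mathbb{E}\bigl[(W-\mathbb{E}[W\mid Y])^{2}\bigr]+2\,C+\mathbb{E}\bigl[(\mathbb{E}[W\mid Y]-f(Y))^{2}\bigr],
\end{equation*}
where $C=\mathbb{E}[(W-\mathbb{E}[W\mid Y])(\mathbb{E}[W\mid Y]-f(Y))]$. Conditioning on $Y$ and using that $\mathbb{E}[W\mid Y]-f(Y)$ is $\mathcal{F}(Y)$-measurable, $C=\mathbb{E}[(\mathbb{E}[W\mid Y]-f(Y))\cdot\mathbb{E}[W-\mathbb{E}[W\mid Y]\mid Y]]=0$. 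Dropping the remaining nonnegative term yields the desired inequality.

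There is no real obstacle here: the only subtlety is ensuring that every expectation appearing above is finite so that the algebraic manipulations are legitimate. This is guaranteed by the hypothesis $f\in L^{2}$ together with the standard fact that $W\mapsto\mathbb{E}[W\mid Y]$ is an $L^{2}$-contraction, so $\mathbb{E}[W\mid Y]\in L^{2}$ whenever $W\in L^{2}$, and the product $W\,\mathbb{E}[W\mid Y]$ is integrable by Cauchy--Schwarz. With these observations, both parts of the lemma follow from the two-line computations above.
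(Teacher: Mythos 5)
Your proof is correct and takes essentially the same route as the paper, which states the lemma without proof as an immediate consequence of $\mathbb{E}[W\mid Y]$ being the $L^{2}$ projection of $W$ onto the $\mathcal{F}(Y)$-measurable functions; your computation is just the standard algebraic unwinding of that Pythagorean/orthogonality argument. The integrability remarks at the end correctly identify the only implicit hypothesis ($W\in L^{2}$, which the paper takes for granted since $\limfunc{var}[W]$ must be finite).
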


To prove Proposition \ref{prof_anonym}, we first state a basic property of
anonymized data sharing in our symmetric environment.

\begin{lemma}
\label{lemm_sym_est} When the data is anonymized, the following holds: 
\begin{equation*}
\mathbb{E}[w_{i}|A]=\mathbb{E}[w_{j}|A].
\end{equation*}
\end{lemma}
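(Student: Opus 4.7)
The plan is to prove the lemma by exploiting two symmetries: the symmetry of $(w,S)$ across consumer indices built into the data environment, and the uniform random permutation $\delta$ used to construct $A$ in (\ref{if}). The goal is to show that the pair $(w_i, A)$ has the same joint distribution as $(w_j, A)$; this immediately gives equality of conditional expectations.

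First, I would fix any two indices $i \neq j$ and let $\tau$ denote the transposition of $i$ and $j$ (acting on $\{1,\dots,N\}$). By the symmetry assumption on $(F_w,F_e)$, the joint law of the vector $(w,S)$ is invariant under relabeling of consumers, so $(w,S) \stackrel{d}{=} (\tau(w),\tau(S))$. Applying this identity componentwise, the first coordinates satisfy $w_i \stackrel{d}{=} (\tau(w))_i = w_j$ jointly with $S \stackrel{d}{=} \tau(S)$.

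Second, I would bring in the anonymization map $A = \delta(S)$, where $\delta$ is a uniform random permutation drawn independently of $(w,S)$. Because $\delta$ is uniform on the symmetric group and independent of everything else, for any fixed permutation $\tau$ the composition $\delta \circ \tau$ is also uniformly distributed and independent of $(w,S)$. Combining this with the symmetry identity above,
\begin{equation*}
(w_i, A) \;=\; \bigl(w_i, \delta(S)\bigr) \;\stackrel{d}{=}\; \bigl(w_j, \delta(\tau(S))\bigr) \;=\; \bigl(w_j, (\delta\circ\tau)(S)\bigr) \;\stackrel{d}{=}\; \bigl(w_j, \delta(S)\bigr) \;=\; (w_j, A).
\end{equation*}

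Finally, equality in joint distribution of $(w_i,A)$ and $(w_j,A)$ means the conditional distributions of $w_i$ and of $w_j$ given $A$ coincide almost surely, hence $\mathbb{E}[w_i\mid A] = \mathbb{E}[w_j\mid A]$. The only subtlety to be careful about is the composition step: one must verify that $\delta \circ \tau$ is uniformly distributed for fixed $\tau$ (a standard fact, since right-multiplication by $\tau$ is a bijection on the symmetric group) and that the independence from $(w,S)$ is preserved. Beyond that, the argument is a clean symmetrization and is the natural counterpart to the density-level computation used in the proof of Theorem \ref{si}.
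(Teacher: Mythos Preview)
Your argument is correct and relies on the same two ingredients as the paper's proof---the exchangeability of $(w,S)$ across consumer indices and the uniform distribution of the anonymization permutation $\delta$---but you package them differently. The paper works at the density level: it writes the posterior $f_i(W_i=w_i\mid A=s)$ as an average over all permutations $\nu\in S^N$, then applies the swap $\nu_{ij}$ and uses the invariance of the joint density $f$ term by term to match $f_i$ with $f_j$. Your route is a direct distributional argument, establishing $(w_i,A)\stackrel{d}{=}(w_j,A)$ via the group identity $\delta\circ\tau\stackrel{d}{=}\delta$ without ever writing a density. This buys you some generality (no need for densities to exist) and a cleaner exposition; the paper's computation, on the other hand, makes the mechanics of the posterior explicit, which dovetails with the analogous density manipulations in the proof of Theorem~\ref{si}. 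The one cosmetic point to tidy up is the composition order in $\delta(\tau(S))$: depending on whether the permutation acts as $(\pi(v))_k=v_{\pi(k)}$ or $(\pi(v))_{\pi(k)}=v_k$, you get $\tau\circ\delta$ rather than $\delta\circ\tau$, but as you note, both left and right translation by $\tau$ preserve the uniform law on $S_N$, so the conclusion is unaffected.
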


\begin{proof}[Proof of Lemma \protect\ref{lemm_sym_est}]
Denote the joint distribution of $W$ and $S$ as $f(W=w,S=s)$ and the
posterior of $W_{i}$ after observing $A$ as $f(W_{i}=w|A)$. Denote the
permutation in $S^{N}$ as $\nu $ and especially the swapping between $i$ and 
$j$ as $\nu _{ij}$. For notational simplicity, we use $\Pr $ to denote both
probability and the proper marginal density. 
\begin{align*}
f_{i}(W_{i}& =w_{i}|A=s)=\frac{\Pr (W_{i}=w_{i},A=s)}{\Pr (A=s)}=\frac{%
\Sigma _{\nu \in S^{N}}\Pr (\nu )\Pr (W_{i}=w_{i},S_{\nu }=s))}{\Pr (A=s)} \\
& =\frac{\Sigma _{\nu \in S^{N}}\frac{1}{|S^{N}|}\int
f(W_{i}=w_{i},W_{j}=w_{j},W_{-ij}=w_{-ij},S_{\nu }=s)\text{d}w_{j}\text{d}%
w_{-ij}}{\Pr (A=s)}.
\end{align*}%
Because $f$ is unchanged under permutation, we can apply the following
transformation: 
\begin{align*}
& f_{i}(W_{i}=w_{i}|A=s)=\frac{\Sigma _{\nu \in S^{N}}\frac{1}{|S^{N}|}\int
f(W_{j}=w_{i},W_{i}=w_{j},W_{-ij}=w_{-ij},S_{\nu _{ij}\circ \nu }=s)\text{d}%
w_{j}\text{d}w_{-ij}}{\Pr (A=s)}, \\
& =\frac{\Sigma _{\nu _{ij}\circ \nu \in S^{N}}\frac{1}{|S^{N}|}\int
f(W_{j}=w_{i},W_{i}=w_{i}^{\prime },W_{-ij}=w_{-ij},S_{\nu _{ij}\circ \nu
}=s)\text{d}w_{i}^{\prime }\text{d}w_{-ij}}{\Pr (A=s)}%
=f_{j}(W_{j}=w_{i}|A=s).
\end{align*}%
Because the posterior distribution is the same, so is the conditional
expectation because 
\begin{equation*}
\mathbb{E}[w_{i}|A]=\int w_{i}f_{i}(W_{i}=w_{i}|A)\text{d}w_{i},
\end{equation*}%
which completes the proof.\bigskip
\end{proof}

\begin{proof}[Proof of Proposition \protect\ref{prof_anonym}]
Combining Lemmas \ref{statis} and \ref{lemm_sym_est}, we obtain 
\begin{align*}
\mathbb{E}[w_{i}|A]& =\mathbb{E}[\frac{1}{N}\Sigma _{i}w_{i}|A]=\mathbb{E}%
[\theta +\frac{1}{N}\Sigma _{i}\theta _{i}|A]; \\
G(A)& =\limfunc{var}[\mathbb{E}[\theta +\frac{1}{N}\Sigma _{i}\theta
_{i}|A]]=\limfunc{var}[\theta +\frac{1}{N}\Sigma _{i}\theta _{i}]-\mathbb{E}%
[(\theta +\frac{1}{N}\Sigma _{i}\theta _{i}-\mathbb{E}[\theta +\frac{1}{N}%
\Sigma _{i}\theta _{i}|A])^{2}].
\end{align*}%
We can simplify the last term as follows: 
\begin{align*}
& \mathbb{E}[(\theta +\frac{1}{N}\Sigma _{i}\theta _{i}-\mathbb{E}[\theta +%
\frac{1}{N}\Sigma _{i}\theta _{i}|A])^{2}] \\
& =\mathbb{E}[(\theta -\mathbb{E}[\theta |A])^{2}+\frac{1}{N^{2}}(\Sigma
_{i}\theta _{i}-\Sigma _{i}\mathbb{E}[\theta _{i}|A])^{2}-\frac{2}{N}(\theta
-\mathbb{E}[\theta |A])(\Sigma _{i}\theta _{i}-\Sigma _{i}\mathbb{E}[\theta
_{i}|A])] \\
& \geq \mathbb{E}[(\theta -\mathbb{E}[\theta |A])^{2}]-\frac{2}{N}\sqrt{%
\limfunc{var}[\theta -\mathbb{E}[\theta |A]]\limfunc{var}[\Sigma _{i}\theta
_{i}-\Sigma _{i}\mathbb{E}[\theta _{i}|A]]} \\
& \geq \mathbb{E}[(\theta -\mathbb{E}[\theta |A])^{2}]-\frac{2}{N}\sqrt{N%
\limfunc{var}[\theta ]\limfunc{var}[\theta _{i}]},
\end{align*}%
where the last inequality comes from Lemma \ref{statis}. The intermediary's
profit can be written as 
\begin{align*}
R& =3G(A_{-i})-G(A), \\
& =3\limfunc{var}[\mathbb{E}[\theta |A_{-i}]]-\limfunc{var}[\theta ]-\frac{1%
}{N}\limfunc{var}[\theta _{i}]+\mathbb{E}[(\theta +\frac{1}{N}\Sigma
_{i}\theta _{i}-\mathbb{E}[\theta +\frac{1}{N}\Sigma _{i}\theta
_{i}|A])^{2}], \\
& \geq 3\limfunc{var}[\mathbb{E}[\theta |A_{-i}]]-\limfunc{var}[\theta ]-%
\frac{1}{N}\limfunc{var}[\theta _{i}]+\mathbb{E}[(\theta -\mathbb{E}[\theta
|A])^{2}]-\frac{2}{N}\sqrt{N\limfunc{var}[\theta ]\limfunc{var}[\theta _{i}]}
\\
& =3\limfunc{var}[\mathbb{E}[\theta |A_{-i}]]-\limfunc{var}[\mathbb{E}%
[\theta |A]]-\frac{1}{N}\limfunc{var}[\theta _{i}]-\frac{2}{\sqrt{N}}\sqrt{%
\limfunc{var}[\theta ]\limfunc{var}[\theta _{i}]}.
\end{align*}%
Therefore, in the limit we have: 
\begin{equation*}
\lim_{N\rightarrow \infty }R=2\lim_{N\rightarrow \infty }\limfunc{var}[%
\mathbb{E}[\theta |A]]>0,
\end{equation*}%
which completes the proof.\bigskip
\end{proof}

\begin{proof}[Proof of Proposition \protect\ref{lm}]
We first prove that the total compensation is bounded from above, which
immediately implies that the individual compensation goes to $0$ as $%
N\rightarrow \infty $. From Lemma \ref{lemm_sym_est}, we know that 
\begin{align*}
G(A)& =\limfunc{var}[\mathbb{E}[w_{i}|A]]=\limfunc{var}[\mathbb{E}[\Sigma
_{i}\frac{w_{i}}{N}|A]], \\
& \leq \limfunc{var}[\Sigma _{i}\frac{w_{i}}{N}]=\limfunc{var}[\theta ]+%
\frac{\limfunc{var}[\theta _{i}]+\limfunc{var}[\varepsilon _{i}]}{N}.
\end{align*}%
On the other hand, we also know 
\begin{equation*}
G(A_{-i})=\limfunc{var}[\mathbb{E}[\theta |A_{-i}]]=\limfunc{var}[\theta ]-%
\mathbb{E}[(\theta -\mathbb{E}[\theta |A_{-i}])^{2}].
\end{equation*}%
Because the conditional expectation is the best $L^{2}$ approximation, we
know it leads to a smaller error than the \textquotedblleft sample average
estimator,\textquotedblright 
\begin{equation*}
\mathbb{E}\left[ (\theta -\mathbb{E}[\theta |A_{-i}])^{2}\right] \leq 
\mathbb{E}\left[ \theta -\frac{1}{N-1}\Sigma _{j\neq i}(\theta +\theta
_{j}+\varepsilon _{j})^{2}\right] =\frac{1}{N-1}(\limfunc{var}[\theta _{i}]+%
\limfunc{var}[\varepsilon _{j}]).
\end{equation*}%
Therefore, we have: 
\begin{align*}
N(G(A)-G(A_{-i}))& \leq N\Big(\limfunc{var}[\theta ]+\frac{\limfunc{var}%
[\theta _{i}]+\limfunc{var}[\varepsilon _{i}]}{N}-\limfunc{var}[\theta ]+%
\frac{1}{N-1}(\limfunc{var}[\theta _{i}]+\limfunc{var}[\varepsilon _{j}])%
\Big), \\
& =N\left( \frac{\limfunc{var}[\theta _{i}]+\limfunc{var}[\varepsilon _{i}]}{%
N}+\frac{1}{N-1}(\limfunc{var}[\theta _{i}]+\limfunc{var}[\varepsilon
_{i}])\right) \\
& \leq 3(\limfunc{var}[\theta _{i}]+\limfunc{var}[\varepsilon _{i}]).
\end{align*}%
The total consumer compensation is then given by 
\begin{equation*}
\frac{3N}{8}(G(A)-G(A_{-i}))\leq \frac{9}{8}(\limfunc{var}[\theta _{i}]+%
\limfunc{var}[\varepsilon _{i}]).
\end{equation*}

Finally, the intermediary's profit is growing linearly in $N$ because 
\begin{align*}
R(S)& =\frac{N}{4}G(A)-\frac{3N}{8}(G(A)-G(A_{-i})), \\
\lim_{N\rightarrow \infty }\frac{R(S)}{N}& =\frac{1}{4}\lim_{N\rightarrow
\infty }G(A),
\end{align*}%
which completes the proof.\bigskip
\end{proof}

\begin{proof}[Proof of Proposition \protect\ref{limrev}]
When data is not anonymized we have: 
\begin{equation*}
G(S)-G(S_{-i})=\limfunc{var}[\mathbb{E}[\theta +\theta _{i}|S]]-\limfunc{var}%
[\mathbb{E}[\theta |S_{-i}]].
\end{equation*}%
Because of symmetry, we have 
\begin{equation*}
\limfunc{cov}[\mathbb{E}[\theta |S],\mathbb{E}[\theta _{i}|S]]=\limfunc{cov}[%
\mathbb{E}[\theta |S],\mathbb{E}[\theta _{j}|S]]=\limfunc{cov}[\mathbb{E}%
[\theta |S],\Sigma _{j=1}^{N}\mathbb{E}[\theta _{j}/N\left\vert S\right. ]].
\end{equation*}%
Because the correlation coefficient is always greater than $-1$, we obtain 
\begin{align*}
\limfunc{cov}[\mathbb{E}[\theta |S],\Sigma _{j=1}^{N}\mathbb{E}[\theta
_{j}/N\left\vert S\right. ]]& \geq -\sqrt{\limfunc{var}[\theta ]\limfunc{var}%
[\Sigma _{j=1}^{N}\mathbb{E}[\theta _{j}/N\left\vert S\right. ]]}, \\
& \geq -\sqrt{\limfunc{var}[\theta ]\limfunc{var}[\Sigma _{j=1}^{N}\theta
_{j}/N]}.
\end{align*}%
Therefore, according to Lemma \ref{statis} we have: 
\begin{align*}
G(S)-G(S_{-i})& =\limfunc{var}[\mathbb{E}[\theta |S]]+2\limfunc{cov}[\mathbb{%
E}[\theta |S],\mathbb{E}[\theta _{i}|S]]+\limfunc{var}[\mathbb{E}[\theta
_{i}|S]]-\limfunc{var}[\mathbb{E}[\theta |S_{-i}]] \\
& \geq \limfunc{var}[\mathbb{E}[\theta |S]]-2\frac{1}{\sqrt{N}}\sqrt{%
\limfunc{var}[\theta ]\limfunc{var}[\theta _{i}]}+\limfunc{var}[\mathbb{E}%
[\theta _{i}|S]]-\limfunc{var}[\mathbb{E}[\theta |S_{-i}]],
\end{align*}%
and hence 
\begin{equation*}
\liminf_{N\rightarrow \infty }G(S)-G(S_{-i})\geq \limfunc{var}[\mathbb{E}%
[\theta _{i}|S]].
\end{equation*}%
The last term is strictly positive because 
\begin{align*}
\limfunc{var}[\mathbb{E}[\theta _{i}|S]]& =\limfunc{var}[\theta _{i}]-%
\mathbb{E}[(\theta _{i}-\mathbb{E}[\theta _{i}|S])^{2}] \\
& \geq \limfunc{var}[\theta _{i}]-\mathbb{E}[(\theta _{i}-\frac{\limfunc{var}%
[\theta _{i}]}{\limfunc{var}[\theta _{i}]+\limfunc{var}[\theta ]+\limfunc{var%
}[e]}S_{i})^{2}], \\
& =\limfunc{var}[\theta _{i}]-(\limfunc{var}[\theta _{i}]-\frac{\limfunc{var}%
^{2}[\theta _{i}]}{\limfunc{var}[\theta _{i}]+\limfunc{var}[\theta ]+%
\limfunc{var}[e]}), \\
& =\frac{\limfunc{var}^{2}[\theta _{i}]}{\limfunc{var}[\theta _{i}]+\limfunc{%
var}[\theta ]+\limfunc{var}[e]}>0,
\end{align*}%
where the first inequality again uses Lemma \ref{statis}.
\end{proof}

\begin{proof}[Proof of Proposition \protect\ref{doc}]
In the standard \textquotedblleft divide and conquer\textquotedblright\
scheme, the compensation for the $i$-th consumer is the marginal loss of
revealing her information given that $i-1$ consumers reveal their signals: 
\begin{equation*}
\frac{3}{8}G(S_{1,...,i})-\frac{3}{8}G(S_{1,...,i-1}).
\end{equation*}%
Because in general we do not know whether this marginal loss is decreasing
in $i$, we consider the following revised version of divide and conquer,
where consumer $i$ receives 
\begin{equation*}
m_{i}=\max_{k\geq i}\frac{3}{8}G(S_{1,...,k})-\frac{3}{8}G(S_{1,...,k-1}).
\end{equation*}%
Under this payment scheme, it is a dominant strategy for consumer $1$ to
accept the offer. Moreover, it is optimal for consumer $i$ to accept the
offer, given that the first $i-1$ consumers accept. Using an identical proof
to Proposition \ref{lm}, we obtain 
\begin{align*}
\frac{3}{8}G(S_{1,...,i})-\frac{3}{8}G(S_{1,...,i-1})& \leq \frac{3}{8}(%
\frac{1}{i}+\frac{1}{i-1})(\limfunc{var}[\theta _{i}]+\limfunc{var}%
[\varepsilon _{i}]), \\
& \leq \frac{3}{4}\frac{1}{i-1}(\limfunc{var}[\theta _{i}]+\limfunc{var}%
[\varepsilon _{i}]).
\end{align*}%
Therefore, we obtain an upper bound on the compensation paid to consumer $i$
: 
\begin{equation*}
m_{i}\leq \max_{k\geq i}\frac{3}{4}\frac{1}{k-1}(\limfunc{var}[\theta _{i}]+%
\limfunc{var}[\varepsilon _{i}])=\frac{3}{4}\frac{1}{i-1}(\limfunc{var}%
[\theta _{i}]+\limfunc{var}[\varepsilon _{i}]).
\end{equation*}%
Finally, because we have 
\begin{align*}
\Sigma _{i}\frac{3}{8}(G(S_{1,...,i})-G(S_{1,...,i-1}))& \leq \frac{3}{4}%
(1+\Sigma _{i=3}^{N}\frac{1}{i-1})(\limfunc{var}[\theta _{i}]+\limfunc{var}%
[\varepsilon _{i}]), \\
& \leq \frac{3}{4}(1+\log N)(\limfunc{var}[\theta _{i}]+\limfunc{var}%
[\varepsilon _{i}]),
\end{align*}%
the total compensation grows at a speed less than $\log N$.
\end{proof}

\begin{proof}[Proof of Proposition \protect\ref{ga}]
The proof of this proposition is similar to that of Proposition \ref{si}. By
Lemma \ref{tan}, we know that the intermediary will transmit whatever
information it collected to all consumers. By homogeneity, we know the
consumer's posterior about their own fundamental $w_{ij}$ is the same
whether the signals are anonymized or not, and the producer's posterior
about any deviating consumer's fundamental is also the same under the two
schemes.

Denote the broker's revenue under the non-anonymized and anonymized scheme
as $R(X)$ and $R(X^{\ast })$, respectively. It then holds that%
\begin{align*}
R_{i}\left( X\right) & =\Pi (X,X)-\Pi (S_{i},\varnothing
)-U_{i}((S_{i},X_{-i}),X_{-i})+U_{i}((S_{i},X),X), \\
R_{i}\left( X^{\ast }\right) & =\Pi (X^{\ast },X^{\ast })-\Pi
(S_{i},\varnothing )-U_{i}((S_{i},X_{-i}^{\ast }),X_{-i}^{\ast
})+U_{i}((S_{i},X^{\ast }),X^{\ast }).
\end{align*}%
Our analysis in previous paragraph implies 
\begin{equation*}
U_{i}((S_{i},X_{-i}),X_{-i})=U_{i}((S_{i},X_{-i}^{\ast }),X_{-i}^{\ast }).
\end{equation*}%
Therefore the intermediary prefers anonymization if and only if 
\begin{equation*}
R_{i}(X^{\ast })-R_{i}(X)=W((S_{i},X^{\ast }),X^{\ast })-W((S_{i},X),X)\geq
0,
\end{equation*}%
which completes the proof.
\end{proof}

\begin{proof}[Proof of Proposition \protect\ref{gs}]
We first consider the case where the intermediary anonymizes all data,
including the group identities. Similar to the result in Lemma \ref%
{lemm_sym_est} (A.2 in the latest draft), we know that the producer offers one price to all consumers
on the path of play, 
\begin{equation*}
	\mathbb{E}[w_{ij}|A]=\mathbb{E}[w_{i^{\prime }j^{\prime }}|A].
\end{equation*}%
Denoting $N=\Sigma _{j}N_{j}$, we have
\begin{align*}
	\mathbb{E}[w_{i^{\prime }j^{\prime }}|A]& =\frac{1}{\Sigma _{j}N_{j}}\Sigma
	_{j}\Sigma _{i}\mathbb{E}[w_{ij}|A]=\frac{1}{\Sigma _{j}N_{j}}\Sigma
	_{j}\Sigma _{i}\mathbb{E}[\theta _{j}+\theta _{ij}|A] \\
	& =\Sigma _{j}\frac{N_{j}}{N}\mathbb{E}[\theta _{j}|A] + \frac{1}{N}\mathbb{E}[\Sigma_{ji} \theta_{ij}|A] .
\end{align*}%
Therefore we obtain an upper bound on the revenue per capita 
\begin{align*}
	\frac{R(A)}{N}& =\frac{3}{8}G(A_{-ij})-\frac{1}{8}G(A)=\frac{3}{8}[\mathbb{E}%
	[w_{ij}]|A_{-ij}]-\frac{1}{8}\limfunc{var}[\mathbb{E}[w_{ij}]|A] \\
	& \leq \frac{1}{4}\limfunc{var}[\mathbb{E}[w_{ij}]|A]=\frac{1}{4} \limfunc{var}[\Sigma _{j}\frac{N_{j}}{N}\mathbb{E}[\theta _{j}|A] + \frac{1}{N}\mathbb{E}[\Sigma_{ji} \theta_{ij}|A]] \\
	& \leq \frac{%
		1}{4}\frac{1}{N^{2}}\Sigma N_{j}^{2}\limfunc{var}[\theta _{j}] + \frac{1}{4N} \limfunc{var}[ \theta_{ij}].
\end{align*}
Next, consider the case where the intermediary reveals the group identity.
Instead of $A$ we use $A^{g}$ to denote the information that the producer
receives. By an argument similar to the proof of Lemma \ref{lemm_sym_est} (A.2 in the latest draft),
we know that (on path) the producer offers one price to all consumers in
each group: 
\begin{align*}
	\mathbb{E}[w_{ij}|A^{g}]& =\mathbb{E}[w_{i^{\prime }j}|A^{g}] \\
	& =\frac{1}{N_{j}}\Sigma _{i^{\prime }=1}^{N_{j}}\mathbb{E}[w_{i^{\prime
		}j}|A^{g}]=\frac{1}{N_{j}}\Sigma _{i^{\prime }=1}^{N_{j}}\mathbb{E}%
	[w_{i^{\prime }j}|A^{g}]=\mathbb{E}[\theta _{j}+\frac{1}{N_{j}}\Sigma
	_{i^{\prime }=1}^{N_{j}}\theta _{i^{\prime }j}|A^{g}].
\end{align*}%
When consumer $ij$ rejects the offer, the intermediary will know the group
identity of this deviating consumer and use all the available data to
estimate the demand: 
\begin{equation*}
	\mathbb{E}[w_{ij}|A_{-ij}^{g}]=\mathbb{E}[\theta _{j}+\theta
	_{ij}|A_{-ij}^{g}]=\mathbb{E}[\theta _{j}|A_{-ij}^{g}].
\end{equation*}%
The revenue that the intermediary obtains from consumer $ij$'s data is then
given by 
\begin{align*}
	& \frac{3}{8}\limfunc{var}[\mathbb{E}[w_{ij}]|A_{-ij}^{g}]-\frac{1}{8}%
	\limfunc{var}[\mathbb{E}[w_{ij}]|A^{g}], \\
	& =\frac{3}{8}\limfunc{var}[\mathbb{E}[\theta _{j}|A_{-ij}^{g}]]-\frac{1}{8}%
	\limfunc{var}[\mathbb{E}[\theta _{j}+\frac{1}{N_{j}}\Sigma _{i^{\prime
		}=1}^{N_{j}}\theta _{i^{\prime }j}|A^{g}]], \\
	& \geq \frac{3}{8}\limfunc{var}[\mathbb{E}[\theta _{j}|A_{-ij}^{g}]]-\frac{1%
	}{8}\limfunc{var}[\theta _{j}]-\frac{1}{8N_{j}}\limfunc{var}[\theta _{ij}]-%
	\frac{1}{4}\sqrt{\frac{1}{N_{j}}\limfunc{var}[\theta _{j}]}\sqrt{\limfunc{var%
		}[\theta _{ij}]}.
\end{align*}

From Lemma \ref{statis} (A.1 in the latest draft), we know 
\begin{align*}
	\mathbb{E}[(\theta _{j}-\mathbb{E}[\theta _{j}|A_{-ij}^{g}])^{2}]& \leq 
	\mathbb{E}[(\theta _{j}-\frac{1}{N_{j}-1}\Sigma _{i^{\prime }\neq
		i}s_{i^{\prime }j})^{2}], \\
	& =\mathbb{E}[(\frac{1}{N_{j}-1}\Sigma _{i^{\prime }\neq i}-\theta _{i'j} - \varepsilon_{i'j})^{2}]=\frac{\limfunc{var}[\theta
		_{ij}] + \limfunc{var}[\varepsilon_{ij}]}{N_{j}-1}; \\
	\limfunc{var}[\mathbb{E}[\theta _{j}|A_{-ij}^{g}]]& =\limfunc{var}[\theta
	_{j}]-\mathbb{E}[(\theta _{j}-\mathbb{E}[\theta _{j}|A_{-ij}^{g}])^{2}], \\
	& \geq \limfunc{var}[\theta _{j}]-\frac{\limfunc{var}[\theta
		_{ij}] + \limfunc{var}[\varepsilon_{ij}]}{N_{j}-1}.
\end{align*}%
Thus we obtain a lower bound on the revenue from consumer $ij$: 
\begin{equation*}
	\frac{1}{4}\limfunc{var}[\theta _{j}]-\frac{3}{8}\frac{\limfunc{%
			var}[\theta _{ij}] + \limfunc{%
			var}[\varepsilon _{ij}]}{N_{j}-1}-\frac{1}{8N_{j}}\limfunc{var}[\theta _{ij}]-\frac{1}{4}%
	\sqrt{\frac{1}{N_{j}}\limfunc{var}[\theta _{j}]}\sqrt{\limfunc{var}[\theta
		_{ij}]}.
\end{equation*}

Finally we can compute the difference in the revenues 
\begin{align*}
	R(A)-R(A^{g})& \leq \Sigma _{j}\frac{N_{j}^{2}}{4N}\limfunc{var}[\theta _{j}] + \frac{1}{4} \limfunc{var}[\theta_{ij}]  + \Sigma_j \frac{3}{8}%
	\frac{N_{j} (\limfunc{var}[\theta _{ij}] + \limfunc{var}[\varepsilon_{ij}] )}{N_{j}-1}
	\\
	& -\Sigma _{j}\left( \frac{N_{j}}{4}\limfunc{var}[\theta _{j}] - \frac{1}{8}\limfunc{var}%
	[\theta _{ij}] -\frac{\sqrt{N_{j}}}{4}\sqrt{\limfunc{var}[\theta _{j}]}\sqrt{%
		\limfunc{var}[\theta _{ij}]}\right) .
\end{align*}%
As long as $N_{j}<kN$ where $k<1$, we know that 
\begin{align*}
	& R(A)-R(A^{g}) \\
	& <\Sigma _{j}\left( \frac{k-1}{4}N_{j}\limfunc{var}[\theta _{j}]+   \frac{1}{8}\limfunc{var}%
	[\theta _{ij}] +\frac{\sqrt{N_{j}}}{4}\sqrt{\limfunc{var}[\theta _{j}]}\sqrt{%
		\limfunc{var}[\theta _{ij}]}\right) \\
	& \quad  + \frac{1}{4} \limfunc{var} [\theta_{ij}]+ \Sigma_j \frac{3}{8}%
	\frac{N_{j} (\limfunc{var}[\theta _{ij}] + \limfunc{var}[\varepsilon_{ij}] )}{N_{j}-1} .
\end{align*}%
The dominant linear term is decreasing in $N_{j}$, and hence we know that as 
$N_{j}\rightarrow \infty $, revealing group identities is more
profitable.

\end{proof}

\begin{proof}[Proof of Proposition \protect\ref{vag3}]
Each consumer's demand function is given by 
\begin{equation*}
q_{i}=w_{i}-\left( \ell _{i}-x_{i}\right) ^{2}-p_{i}.
\end{equation*}%
This means the producer's profit is given by 
\begin{equation*}
\pi =\sum\nolimits_{i=1}^{N}p_{i}\left( w_{i}-\left( \ell _{i}-x_{i}\right)
^{2}-p_{i}\right) .
\end{equation*}%
Therefore, under any information structure $S$, the producer offers 
\begin{eqnarray*}
p_{i} &=&\left( \mathbb{E}\left[ w_{i}\mid S\right] -\mathbb{E}\left[ (\ell
_{i}-\mathbb{E}\left[ \ell _{i}\mid S\right] )^{2}\mid S\right] \right) /2,
\\
&=&\left( \mathbb{E}\left[ w_{i}\mid S\right] -\mathbb{E}\left[ (\ell _{i}-%
\mathbb{E}\left[ \ell _{i}\mid S\right] )^{2}\right] \right) /2, \\
x_{i} &=&\mathbb{E}\left[ \ell _{i}\mid S\right] ,
\end{eqnarray*}%
where the second line relies on the fact that the underlying random
variables are normal so that $\ell _{i}-\mathbb{E}[\ell _{i}|S]$ is
independent of $S$.

The consumer's surplus is then given by 
\begin{align*}
U_{i}\left( S\right) & =\frac{1}{2}\mathbb{E}\left[ \left( w_{i}-\left( \ell
_{i}-\mathbb{E}\left[ \ell _{i}\mid S\right] \right) ^{2}-\frac{\mathbb{E}%
\left[ w_{i}\mid S\right] -\mathbb{E}\left[ (\ell _{i}-\mathbb{E}\left[ \ell
_{i}\mid S\right] )^{2}\right] }{2}\right) ^{2}\right] \\
& =\frac{1}{2}\mathbb{E}\left[ \left( w_{i}-\frac{1}{2}\mathbb{E}\left[
w_{i}\mid S\right] \right) ^{2}\right] +\frac{1}{2}\mathbb{E}\left[ \left(
\left( \ell _{i}-\mathbb{E}\left[ \ell _{i}\mid S\right] \right) ^{2}-\frac{1%
}{2}\mathbb{E}\left[ \left( \ell _{i}-\mathbb{E}\left[ \ell _{i}\mid S\right]
\right) ^{2}\right] \right) ^{2}\right] \\
& -\mathbb{E}\left[ \left( w_{i}-\frac{1}{2}\mathbb{E}\left[ w_{i}\mid S%
\right] \right) \right] \mathbb{E}\left[ \left( \ell _{i}-\mathbb{E}\left[
\ell _{i}\mid S\right] \right) ^{2}-\frac{1}{2}\mathbb{E}\left[ \left( \ell
_{i}-\mathbb{E}\left[ \ell _{i}\mid S\right] \right) ^{2}\right] \right] \\
& =\frac{1}{2}\mathbb{E}\left[ w_{i}^{2}-\frac{3}{4}\mathbb{E}\left[
w_{i}\mid S\right] ^{2}\right] +\frac{1}{2}\mathbb{E}\left[ \left( \ell _{i}-%
\mathbb{E}\left[ \ell _{i}\mid S\right] \right) ^{4}-\frac{3}{4}\mathbb{E}%
\left[ \left( \ell _{i}-\mathbb{E}\left[ \ell _{i}\mid S\right] \right) ^{2}%
\right] ^{2}\right] \\
& -\frac{1}{4}\mathbb{E}\left[ w_{i}\right] \mathbb{E}\left[ \left( \ell
_{i}-\mathbb{E}\left[ \ell _{i}\mid S\right] \right) ^{2}\right] .
\end{align*}%
Therefore the difference is: 
\begin{eqnarray*}
U_{i}(S)-U_{i}(\varnothing ) &=&-\frac{3}{8}\limfunc{var}\left[ \mathbb{E}%
\left[ w_{i}\mid S\right] \right] +\frac{1}{4}\mu \limfunc{var}\left[ 
\mathbb{E}\left[ \ell _{i}\mid S\right] \right] +\frac{1}{2}\mathbb{E}\left[
\left( \ell _{i}-\mathbb{E}\left[ \ell _{i}\mid S\right] \right) ^{4}\right]
\\
&&-\frac{3}{8}\mathbb{E}\left[ \left( \ell _{i}-\mathbb{E}\left[ \ell
_{i}\mid S\right] \right) ^{2}\right] ^{2}-\frac{1}{2}\mathbb{E}\left[
\left( \ell _{i}-\mu _{\tau }\right) ^{4}\right] +\frac{3}{8}\mathbb{E}\left[
\left( \ell _{i}-\mu _{\tau }\right) ^{2}\right] ^{2}.
\end{eqnarray*}%
Because every random variable is assumed to be normal, $\ell _{i}-\mathbb{E}%
[\ell _{i}|S]$ is also normal with zero mean. We can further simplify and
obtain 
\begin{align*}
U_{i}(S)-U_{i}(\varnothing )& =-\frac{3}{8}\limfunc{var}\left[ \mathbb{E}%
\left[ w_{i}\mid S\right] \right] +\frac{1}{4}\mu \limfunc{var}\left[ 
\mathbb{E}\left[ \ell _{i}\mid S\right] \right] +\frac{3}{2}\left( \left[
\ell _{i}\right] -\limfunc{var}\left[ \mathbb{E}\left[ \ell _{i}\mid S\right]
\right] \right) ^{2} \\
& -\frac{3}{8}\left( \limfunc{var}\left[ \ell _{i}\right] -\limfunc{var}%
\left[ \mathbb{E}\left[ \ell _{i}\mid S\right] \right] \right) ^{2}-\frac{3}{%
2}\limfunc{var}[\ell _{i}]^{2}+\frac{3}{8}\limfunc{var}[\ell _{i}]^{2}, \\
=-\frac{3}{8}\limfunc{var}\left[ \mathbb{E}\left[ w_{i}\mid S\right] \right]
& +\frac{1}{4}\mu \limfunc{var}\left[ \mathbb{E}\left[ \ell _{i}\mid S\right]
\right] +\frac{9}{8}\left( \limfunc{var}\left[ \mathbb{E}\left[ \ell
_{i}\mid S\right] \right] ^{2}-2\limfunc{var}\left[ \mathbb{E}\left[ \ell
_{i}\mid S\right] \right] (\sigma _{\tau }^{2}+\sigma _{\tau
_{i}}^{2})\right) .
\end{align*}%
Similarly we have: 
\begin{align*}
\Pi _{i}(S)& =\frac{1}{4}\mathbb{E}\left[ \left( \mathbb{E}\left[ w_{i}\mid S%
\right] -\mathbb{E}\left[ (\ell _{i}-\mathbb{E}\left[ \ell _{i}\mid S\right]
)^{2}\right] \right) \right] \\
& =\frac{1}{4}\mathbb{E}\left[ \mathbb{E}\left[ w_{i}\mid S\right] ^{2}-2%
\mathbb{E}\left[ w_{i}\mid S\right] \mathbb{E}\left[ (\ell _{i}-\mathbb{E}%
\left[ \ell _{i}\mid S\right] )^{2}\right] +\mathbb{E}\left[ (\ell _{i}-%
\mathbb{E}\left[ \ell _{i}\mid S\right] )^{2}\right] ^{2}\right] ,
\end{align*}%
and hence 
\begin{eqnarray*}
\Pi _{i}(S)-\Pi _{i}(\varnothing ) &=&\frac{1}{4}\limfunc{var}[\mathbb{E}%
\left[ w_{i}\mid S\right] ]+\frac{1}{2}\mu \limfunc{var}\left[ \mathbb{E}%
\left[ \ell _{i}\mid S\right] \right] \\
&&+\frac{1}{4}\left( \limfunc{var}\left[ \mathbb{E}\left[ \ell _{i}\mid S%
\right] \right] ^{2}-2\limfunc{var}\left[ \mathbb{E}\left[ \ell _{i}\mid S%
\right] \right] (\sigma _{\tau }^{2}+\sigma _{\tau _{i}}^{2})\right) .
\end{eqnarray*}

To summarize, the impact of data sharing on social surplus is given by 
\begin{eqnarray*}
&&W_{i}(S)-W_{i}(\varnothing )=U_{i}(S)-U_{i}(\varnothing )+\Pi _{i}(S)-\Pi
_{i}(\varnothing ), \\
&=&-\frac{1}{8}\limfunc{var}\left[ \mathbb{E}\left[ w_{i}\mid S\right] %
\right] +\frac{3}{4}\mu \limfunc{var}\left[ \mathbb{E}\left[ \ell _{i}\mid S %
\right] \right] +\frac{11}{8}\left( \limfunc{var}\left[ \mathbb{E}\left[
\ell _{i}\mid S\right] \right] ^{2}-2\limfunc{var}\left[ \mathbb{E}\left[
\ell _{i}\mid S\right] \right] (\sigma _{\tau }^{2}+\sigma _{\tau
_{i}}^{2})\right) .
\end{eqnarray*}
Therefore the difference $W_{i}(S)-W_{i}(\varnothing )$ is a quadratic
function of the variance of the conditional expectation $x\triangleq 
\limfunc{var}[\mathbb{E}[\ell _{i}|S]]$. In particular, we let 
\begin{equation*}
g(x)\triangleq \frac{11}{8}x^{2}+\left( \frac{3}{4}\mu -\frac{11}{4}(\sigma
_{\tau }^{2}+\sigma _{\tau _{i}}^{2})\right) x.
\end{equation*}
As long as $3\mu >11(\sigma _{\tau }^{2}+\sigma _{\tau _{i}}^{2})$, this
function is positive and increasing in $x$, which means a higher $\limfunc{
var}[\mathbb{E}[\ell _{i}|S]]$ increases consumer surplus.

Finally, as in the proof of Proposition \ref{si}, aggregating $w_{i}$
increases $W_{i}(S)$ but keeps $\Pi (\varnothing )$ and $U_{i}(S_{-i})$
unchanged. Not aggregating $\ell _{i}$ increases $W_{i}(S)$ while keeping $%
\Pi (\varnothing )$ and $U_{i}(S_{-i})$ unchanged. Therefore it is optimal
for the intermediary to aggregate $w_{i}$ but not $\ell _{i}$.\bigskip
\end{proof}

\newpage

\section*{Appendix B\label{idd}}

In this Appendix, we explore the data intermediary's ability to offer
privacy guarantees in equilibrium by collecting less than perfect
information about the consumers' signals. Specifically, we consider the
additive data structure in (\ref{wa})-(\ref{ea}) and again assume that
fundamental and error terms have a joint Gaussian distribution. We then
specify a class of data policies that add common and idiosyncratic noise
terms $\xi $ and $\xi _{i}$ to the consumers' original (noisy) signals $%
s_{i} $. We then have the following data inflow,%
\begin{equation*}
x_{i}=\underset{=s_{i}}{\underbrace{w_{i}+\sigma e_{i}}}+\xi +\xi _{i}\text{,%
}
\end{equation*}%
and the intermediary chooses the variance of the additional noise terms ($%
\sigma _{\xi }^{2}$ and $\sigma _{\xi _{i}}^{2}$).

\begin{proposition}[Optimal Noise Structure]
\label{odi}\strut

\begin{enumerate}
\item The optimal data inflow policy adds no idiosyncratic noise; i.e., $%
\sigma _{\xi _{i}}^{\ast }=0$.

\item The optimal data inflow policy adds (weakly) positive aggregate noise;
i.e., $\sigma _{\xi }^{\ast }\geq 0$.
\end{enumerate}
\end{proposition}

\begin{proof}[Proof of Proposition \protect\ref{odi}]
Recall the formula in the proof of Proposition \ref{impact}, 
\begin{align*}
\Pi _{i}(Y_{i},Y)& =\frac{\limfunc{var}[\mathbb{E}[w_{i}|Y]]+\mu ^{2}}{4}, \\
U_{i}(Y_{i},Y)& =\frac{1}{2}\mathbb{E}[(\mathbb{E}[w_{i}|Y_{i}])^{2}-\frac{3%
}{4}(\mathbb{E}[w_{i}|Y])^{2}].
\end{align*}%
With a noisier report $X$, consumer $i$ will know $S_{i}$ and $X$ both on
path and off path. The producer will know $X$ on path and $X_{-i}$ if
consumer $i$ deviates. Thus the revenue of the intermediary is: 
\begin{align*}
\frac{R(X)}{N}& =\Pi _{i}((S_{i},X),X)-\Pi _{i}(S_{i},\varnothing
)+U((S_{i},X),X)-U((S_{i},X),X_{-i}), \\
& =\frac{\limfunc{var}[\mathbb{E}[w_{i}|X]]}{4}-\frac{3\limfunc{var}[\mathbb{%
E}[w_{i}|X]]}{8}+\frac{3\limfunc{var}[\mathbb{E}[w_{i}|X_{-i}]]}{8}, \\
& =-\frac{\limfunc{var}[\mathbb{E}[w_{i}|X]]}{8}+\frac{3\limfunc{var}[%
\mathbb{E}[w_{i}|X_{-i}]]}{8}.
\end{align*}%
Recall that 
\begin{equation*}
X_{i}=w_{i}+\sigma e_{i}+\xi +\xi _{i}=\theta +\theta _{i}+(\sigma
\varepsilon _{i}+\xi _{i})+(\sigma \varepsilon +\xi ).
\end{equation*}%
For ease of exposition, we rewrite $(\sigma \varepsilon _{i}+\xi _{i})$ as $%
\varepsilon _{i}$ and $(\sigma \varepsilon +\xi )$ as $\varepsilon $.
Because the intermediary can control the variance of $\xi ,\xi _{i}$ but not
the initial precision of the consumers' signals, we effectively have a lower
bound of the variance $\underline{\sigma }_{i}^{2}$ and $\underline{\sigma }%
^{2}$ on the new pair $\varepsilon _{i},\varepsilon $. Denote the variance
of $\theta $ as $\sigma _{\theta }^{2}$ and similarly for other variables.
It is straightforward to calculate that: 
\begin{align*}
\mathbb{E}[w_{i}|X]& =\frac{N\sigma _{\theta }^{2}+\sigma _{\theta _{i}}^{2}%
}{N^{2}(\sigma _{\theta }^{2}+\sigma _{\varepsilon }^{2})+N(\sigma
_{\varepsilon _{i}}^{2}+\sigma _{\theta _{i}}^{2})}\sum_{i^{\prime
}}x_{i^{\prime }}, \\
\mathbb{E}[w_{i}|X_{-i}]& =\frac{N\sigma _{\theta }^{2}}{(N-1)^{2}(\sigma
_{\theta }^{2}+\sigma _{\varepsilon }^{2})+(N-1)(\sigma _{\varepsilon
_{i}}^{2}+\sigma _{\theta _{i}}^{2})}\sum_{i^{\prime }\neq i}x_{i^{\prime }},
\\
R(X)& =\frac{3(N-1)N\sigma _{\theta }^{4}}{8((N-1)(\sigma _{\varepsilon
}^{2}+\sigma _{\theta }^{2})+\sigma _{\varepsilon _{i}}^{2}+\sigma _{\theta
_{i}}^{2})}-\frac{(N\sigma _{\theta }^{2}+\sigma _{\theta _{i}}^{2})^{2}}{%
8(N(\sigma _{\varepsilon }^{2}+\sigma _{\theta }^{2})+\sigma _{\varepsilon
_{i}}^{2}+\sigma _{\theta _{i}}^{2})}
\end{align*}

Now we are ready to prove the theorem. We argue it is optimal to set $\sigma
_{\varepsilon _{i}}^{2}=\underline{\sigma }_{i}^{2}$ (i.e., to set $\sigma
_{\xi _{i}}^{2}=0$ ). To show this result, suppose $\sigma _{\varepsilon
_{i}}^{2}>\underline{\sigma }_{i}^{2}$. Then there exists $\delta >0$ such
that augmenting the common noise to $\bar{\sigma}_{\varepsilon
}^{2}\triangleq \sigma _{\varepsilon }^{2}+\delta ^{2}$ and diminishing the
idiosyncratic noise to $\bar{\sigma}_{\varepsilon _{i}}^{2}\triangleq \sigma
_{\varepsilon _{i}}^{2}-(N-1)\delta ^{2}\geq \underline{\sigma }_{i}^{2},$
the profits of the intermediary will strictly increase. Too see this,
consider the expression of the revenue 
\begin{equation*}
R\left( S\right) =\frac{3(N-1)N\sigma _{\theta }^{4}}{8((N-1)(\sigma
_{\varepsilon }^{2}+\sigma _{\theta }^{2})+\sigma _{\varepsilon
_{i}}^{2}+\sigma _{\theta _{i}}^{2})}-\frac{(N\sigma _{\theta }^{2}+\sigma
_{\theta _{i}}^{2})^{2}}{8(N(\sigma _{\varepsilon }^{2}+\sigma _{\theta
}^{2})+\sigma _{\varepsilon _{i}}^{2}+\sigma _{\theta _{i}}^{2})}.
\end{equation*}%
The first term is unchanged under the new information structure, whereas the
denominator of the second term increases; thus, the total profit increases.
\end{proof}

As we establish in Proposition \ref{mis}, no profitable intermediation is
feasible for values of $\alpha $ less than a threshold that decreases with $%
N $. This threshold is independent of the correlation coefficient $\beta $
of the initial noise terms $\left( e_{i},e_{j}\right) $. Furthermore, as $%
\alpha $ approaches this threshold from above, the optimal level of common
noise grows without bound. Figure \ref{sigmaepsilon} shows the optimal
variance in the additional common noise term.

\begin{figure}
	\label{sigmaepsilon}
	\centering
	\caption{Optimal Additional Noise\ $(\sigma=1, N=2)$}%
	\includegraphics[width=0.6\textwidth]{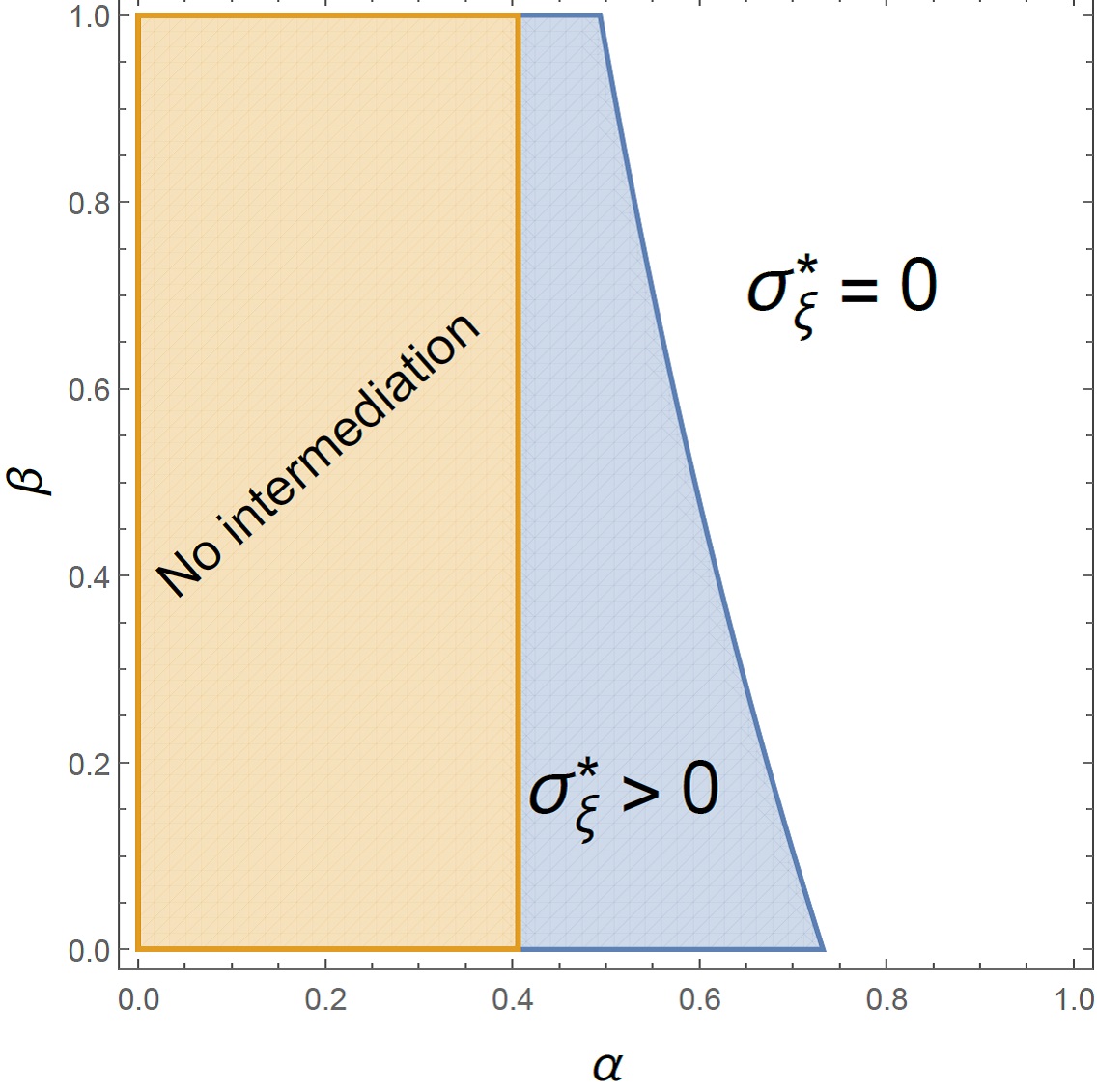}
\end{figure}

\begin{proposition}[Profitability of Data Intermediation]
\label{mis}\quad \newline
Under the optimal data policy, the intermediary's profits are strictly
positive if and only if 
\begin{equation*}
\alpha >\frac{N(\sqrt{3}+1)-1}{2N\left( N+1\right) -1}\in \left( 0,1\right) 
\text{.}
\end{equation*}
\end{proposition}

\begin{proof}[Proof of Proposition \protect\ref{mis}]
Recall that $\alpha $ is the correlation coefficient between $w_{i}$ and $%
w_{j}.$ Because we have normalized $\limfunc{var}\left[ w_{i}\right] =1$,
under the additive structure, we have $\sigma _{\theta }^{2}=\alpha $ and $%
\sigma _{\theta _{i}}^{2}=1-\alpha $. To establish the result in the
statement, we must then show that the intermediary obtains positive profits
if and only if 
\begin{equation*}
N\left( \sqrt{3}-1\right) \sigma _{\theta }^{2}-\sigma _{\theta _{i}}^{2}>0.
\end{equation*}

When $N\left( \sqrt{3}-1\right) \sigma _{\theta }^{2}<\sigma _{\theta
_{i}}^{2}$, we have: 
\begin{align*}
R& =\frac{3N\sigma _{\theta }^{4}}{8(\sigma _{\theta }^{2}+\sigma
_{\varepsilon }^{2})}\left( 1-\frac{\sigma _{\theta _{i}}^{2}+\sigma
_{\varepsilon _{i}}^{2}}{(N-1)(\sigma _{\varepsilon }^{2}+\sigma _{\theta
}^{2})+\sigma _{\varepsilon _{i}}^{2}+\sigma _{\theta _{i}}^{2}}\right) -%
\frac{(N\sigma _{\theta }^{2}+\sigma _{\theta _{i}}^{2})^{2}}{8N(\sigma
_{\theta }^{2}+\sigma _{\varepsilon }^{2})}\left( 1-\frac{\sigma _{\theta
_{i}}^{2}+\sigma _{\varepsilon _{i}}^{2}}{N(\sigma _{\varepsilon
}^{2}+\sigma _{\theta }^{2})+\sigma _{\varepsilon _{i}}^{2}+\sigma _{\theta
_{i}}^{2}}\right) \\
& \leq \left( \frac{3N\sigma _{\theta }^{4}}{8(\sigma _{\theta }^{2}+\sigma
_{\varepsilon }^{2})}-\frac{(N\sigma _{\theta }^{2}+\sigma _{\theta
_{i}}^{2})^{2}}{8N(\sigma _{\theta }^{2}+\sigma _{\varepsilon }^{2})}\right)
\left( 1-\frac{\sigma _{\theta _{i}}^{2}+\sigma _{\varepsilon _{i}}^{2}}{%
N(\sigma _{\varepsilon }^{2}+\sigma _{\theta }^{2})+\sigma _{\varepsilon
_{i}}^{2}+\sigma _{\theta _{i}}^{2}}\right) \leq 0.
\end{align*}%
Conversely, when $N\left( \sqrt{3}-1\right) \sigma _{\theta }^{2}>\sigma
_{\theta _{i}}^{2}$, we can rewrite $R$ as 
\begin{align*}
& \frac{A\sigma _{\varepsilon }^{2}+B}{8(N\sigma _{\varepsilon }^{2}+\sigma
_{\varepsilon _{i}}^{2}+N\sigma _{\theta }^{2}+\sigma _{\theta
_{i}}^{2})(N\sigma _{\varepsilon }^{2}-\sigma _{\varepsilon }^{2}+\sigma
_{\varepsilon _{i}}^{2}+N\sigma _{\theta }^{2}-\sigma _{\theta }^{2}+\sigma
_{\theta _{i}}^{2})}, \\
A& =(N-1)\left( 2N^{2}\sigma _{\theta }^{4}-2N\sigma _{\theta }^{2}\sigma
_{\theta _{i}}^{2}-\sigma _{\theta _{i}}^{4}\right) >0.
\end{align*}%
Therefore the intermediary can obtain a positive profit $R$ by setting $%
\sigma _{\varepsilon }^{2}$ sufficiently large through the addition of
correlated noise.
\end{proof}

The optimal level of common noise $\sigma _{\xi }^{\ast }$ is strictly
positive when the number of consumers $N$ or the correlation in their
willingness to pay $\alpha $ is sufficiently small: if the consumers'
preferences are sufficiently correlated, or if the market is sufficiently
large, the intermediary does not add any noise. If the consumers'
fundamentals are not sufficiently correlated, the intermediary makes their 
\emph{signals }more correlated with additional common noise $\sigma _{\xi
}^{\ast }.$

The additional noise reduces the amount of information procured from
consumers and hence the total compensation owed to them. These cost savings
come at the expense of lower revenues. In this respect, aggregation and
noise serve a common purpose. However, because the intermediary optimally
anonymizes the consumers' signals, the correlation in the supplemental noise
terms $\xi $ renders signal $s_{i}$ less valuable on the margin for
estimating the average willingness to pay $\bar{w}.$ In other words, the
aggregate demand information without consumer $i$'s signal $s_{i}$ is a
relatively better predictor of $\bar{w}$ when the intermediary uses common
rather than idiosyncratic noise. Therefore, by using common noise
exclusively, the intermediary can hold the information sold to the producer
constant while reducing the cost of acquiring that information from
consumers.

Finally, note that these two elements of information design---aggregation
and noise---interact richly with one another. In particular, the value of
common noise is deeply linked to that of aggregate data: if the intermediary
is restricted to complete (identity-revealing) data sharing, one can show
that supplemental idiosyncratic noise is optimal when the consumers' initial
signals $s_{i}$ are sufficiently precise.

\newpage

\bibliographystyle{econometrica}
\bibliography{ale2,general}

\end{document}